\documentclass[11pt,a4paper]{article}
\usepackage[margin=25mm]{geometry}
\usepackage{lmodern}
\usepackage[T1]{fontenc}
\usepackage{microtype,cite}
\usepackage{authblk}

\usepackage{amsmath,amssymb,amsthm,mathtools}
\usepackage{enumitem,booktabs,tabularx}
\usepackage{xcolor}
\usepackage[colorlinks=true,linkcolor=blue!45!black,citecolor=blue!45!black,urlcolor=blue!45!black]{hyperref}
\hypersetup{pdftitle={A Generalized Stein Lemma for Quantum Channels},
 pdfauthor={Minbo Gao, Zhengfeng Ji, Chenghua Liu},
 pdfsubject={Generalized Stein identity for quantum channels with correlated alternatives},
 pdfkeywords={Quantum channels, hypothesis testing, relative entropy, asymptotic equipartition, resource theories}}
\usepackage[capitalise,nameinlink,noabbrev]{cleveref}
\newtheorem{theorem}{Theorem}
\newtheorem{lemma}[theorem]{Lemma}
\newtheorem{proposition}[theorem]{Proposition}
\newtheorem{corollary}[theorem]{Corollary}
\theoremstyle{definition}
\newtheorem{definition}[theorem]{Definition}
\newtheorem{example}[theorem]{Example}
\theoremstyle{remark}
\newtheorem{remark}[theorem]{Remark}

\DeclareMathOperator{\Tr}{Tr}
\DeclareMathOperator{\Rea}{Re}
\DeclareMathOperator{\Ad}{Ad}
\DeclareMathOperator{\supp}{supp}
\DeclareMathOperator{\CPTP}{CPTP}

\newcommand{\id}{\mathrm{id}}
\newcommand{\Dens}{\mathrm D}
\newcommand{\F}{\mathfrak F}
\newcommand{\N}{\mathcal N}
\newcommand{\M}{\mathcal M}
\newcommand{\R}{\mathcal R}
\newcommand{\ch}{\mathrm{ch}}

\newcommand{\ket}[1]{|#1\rangle}
\newcommand{\bra}[1]{\langle#1|}
\newcommand{\opnorm}[1]{\lVert#1\rVert_\infty}
\newcommand{\trnorm}[1]{\lVert#1\rVert_1}
\newcommand{\dnorm}[1]{\lVert#1\rVert_\diamond}
\newcommand{\cpge}{\mathrel{\ge_{\mathrm{CP}}}}
\newcommand{\cple}{\mathrel{\le_{\mathrm{CP}}}}
\newcommand{\eps}{\varepsilon}
\newcommand{\loss}[1]{#1^{2/3}\log_2(#1+1)}
\setlist[enumerate]{leftmargin=*,itemsep=2pt,topsep=4pt}
\begin{document}
\title{A Generalized Stein Lemma for Quantum Channels}
\author[1,2]{Minbo Gao\thanks{\href{mailto:gmb17@tsinghua.org.cn}{\texttt{gmb17@tsinghua.org.cn}}}}
\author[3]{Zhengfeng Ji\thanks{\href{mailto:jizhengfeng@tsinghua.edu.cn}{\texttt{jizhengfeng@tsinghua.edu.cn}}}}
\author[1,2]{Chenghua Liu\thanks{\href{mailto:liuch.russell@gmail.com}{\texttt{liuch.russell@gmail.com}}}}
\affil[1]{Institute of Software, Chinese Academy of Sciences, Beijing, China}
\affil[2]{University of Chinese Academy of Sciences, Beijing, China}
\affil[3]{Department of Computer Science and Technology, Tsinghua University, Beijing, China}
\date{}

\maketitle
\vspace{-2.5em}

\begin{abstract}
  We prove a generalized quantum Stein lemma for parallel discrimination of an
  arbitrary finite-dimensional channel against families of alternative channels.
  The alternatives are compact and convex, closed under tensor products, and
  contain a faithful replacer.
  At every fixed type-I error tolerance, the optimal type-II error exponent
  equals the regularized Umegaki channel relative entropy minimized over the
  alternatives, and both defining limits exist.
  Inputs may be entangled across channel uses and with a reference system.
  The main technical result constructs exactly trace-preserving approximations
  with exponentially small diamond error, dominated in completely positive order
  by free channels at every rate above the common exponent.
  The proof combines uniform auxiliary map approximation, iterative reduction
  of the domination rate, and tensor amplification.
  It yields asymptotic equipartition with trace-preserving smoothing at every
  subexponentially vanishing error, an exponential strong converse, and
  stability under vanishing diamond-norm perturbations.
  Under additional permutation invariance and closure under insertion of a fixed
  faithful input state and discarding of its output, we also prove an explicit
  finite-block completion bound.
  This quantitative construction uses weighted discarding, local operator
  corrections, and comparison with auxiliary extensions.
  Applications include state-preserving, coherence-restricted, covariant,
  entanglement-breaking, and positive-partial-transpose channels.
  For isometric targets against entanglement-breaking or
  positive-partial-transpose alternatives, we obtain exact finite-block testing
  and smoothing formulas.
\end{abstract}

\section{Introduction}\label{sec:introduction}
Quantum hypothesis testing gives an operational meaning to relative
entropy. Consider two states $\rho$ and $\sigma$, and let the effect
$0\le Q_n\le I$ represent acceptance of $\rho^{\otimes n}$.
The two error probabilities are
\[
 \alpha_n=1-\Tr Q_n\rho^{\otimes n},\qquad
 \beta_n=\Tr Q_n\sigma^{\otimes n}.
\]
In the asymmetric problem, one bounds $\alpha_n$ by a fixed
$\varepsilon\in(0,1)$ and minimizes $\beta_n$.
The quantum Stein lemma identifies the limiting exponent of this
minimum with the Umegaki relative entropy
$D(\rho\Vert\sigma)=\Tr\rho(\log_2\rho-\log_2\sigma)$,
with value $+\infty$ when the support of $\rho$ is not contained
in that of $\sigma$
\cite{HiaiPetz,OgawaNagaoka}. Thus relative entropy measures the
asymptotic rate at which an alternative can be excluded while the
target is accepted with a prescribed probability.

For a quantum process, the experimenter must choose an input as well
as a measurement. An input may be entangled with a reference that
does not pass through the process; with several channel uses, it may
also be entangled across those uses. A further difficulty arises when
the alternative is a family of processes rather than a single known
channel. Such families occur naturally in resource theories, where
membership in a specified set expresses the absence of the resource
under consideration. Detecting a resource then requires one
experiment that rejects every member of the alternative family.

This paper treats parallel access to a target channel
$\N:\mathrm L(A)\to\mathrm L(B)$. At blocklength $n$, the target is $\N^{\otimes n}$,
whereas an alternative is an arbitrary member of
$\F_n\subseteq\CPTP(A^{\otimes n}\to B^{\otimes n})$.
Here $A$ and $B$ are finite-dimensional input and output Hilbert
spaces, and CPTP denotes completely positive trace-preserving maps.
We write $\F=(\F_n)_{n\ge1}$ and call its members free channels.
The alternatives may correlate all inputs and outputs within the
block. The tester prepares one reference-assisted input and measures
the joint reference and output after all channel uses. There is no
restriction to classical inputs or to product input states. The access
model is nevertheless parallel: it does not insert intermediate
operations between uses of the unknown process.

\subsection{Results and scope}
Let $\beta_{\varepsilon,n}$ be the smallest worst-case alternative
acceptance probability among testers whose target acceptance is at
least $1-\varepsilon$. Write $D_{\ch}$ for Umegaki relative entropy
optimized over a reference-assisted input shared by the target and
the specified alternative. Our main result is
\begin{equation}
 \lim_{n\to\infty}-\frac1n\log_2\beta_{\varepsilon,n}
 =\lim_{n\to\infty}\frac1n
   \inf_{\M\in\F_n}D_{\ch}(\N^{\otimes n}\Vert\M).
 \label{eq:intro-identity}
\end{equation}
Both limits exist, and their value is independent of
$\varepsilon\in(0,1)$.
The three assumptions in \cref{def:axioms} are compactness and convexity
of the alternative sets, closure under tensor products, and membership
of a faithful replacer. These assumptions suffice for all the
asymptotic results below.

The main technical result gives exponential approximation at any
domination rate $S$ strictly above the common rate in
\eqref{eq:intro-identity}. We write $\Phi\cpge\Psi$ when
$\Phi-\Psi$ is completely positive, and $\|\cdot\|_\diamond$ for the
diamond norm, defined in \cref{sec:setting}.
For all sufficiently large $n$, there are a free channel
$\mathcal S_n\in\F_n$ and a completely positive trace-preserving map
$\mathcal L_n$ such that
\begin{equation*}
 \mathcal L_n\cple2^{nS}\mathcal S_n,\qquad
 \dnorm{\mathcal L_n-\N^{\otimes n}}\le K_Se^{-\gamma_S n},
\end{equation*}
for constants $K_S,\gamma_S>0$.
The same approximation and dominator work for every input, reference,
and acceptance effect. This uniform control yields an exponential
strong converse: above the Stein rate, the target acceptance probability
is at most $K2^{-cn}$ for suitable $K,c>0$.
It also gives asymptotic equipartition (AEP) for trace-preserving
smoothing with $0<\delta_n\le\bar\delta<2$ and
$\log_2(1/\delta_n)=o(n)$, including every fixed diamond radius
in $(0,2)$. The Stein exponent is stable under vanishing diamond
perturbations of the target sequence.

A second contribution is an explicit finite-block completion bound.
For this bound we additionally assume permutation invariance and
marginal closure: inserting one fixed faithful input state and
discarding the corresponding output preserves membership in the
family. These conditions are stated in \cref{def:quantitative-axioms}.
For every fixed $\varepsilon\in(0,1)$ and $0<\delta\le1/16$,
there are a free channel $\mathcal S_n$ and a CPTP map $\mathcal L_n$
with
\begin{equation*}
 \begin{aligned}
 \mathcal S_n&\cpge
 \beta_{\varepsilon,n}2^{-g_\varepsilon(n,\delta)}\mathcal L_n,
 &\dnorm{\mathcal L_n-\N^{\otimes n}}&\le\delta,\\
 g_\varepsilon(n,\delta)&=K_\varepsilon n^{2/3}
                   \log_2\frac{n+1}{\delta}.
 \end{aligned}
\end{equation*}
The constant $K_\varepsilon$ depends only on the local dimensions,
the minimum eigenvalues of the witnessing states, and $\varepsilon$.
Thus the bound controls the domination cost at a specified blocklength
and accuracy relative to the finite-block testing coefficient.
The overhead is sublinear for fixed or inverse-polynomial accuracy.

All five conditions hold for the resource families in
\cref{tab:families}, including state-preserving, maximally incoherent,
dephasing-covariant, symmetry-covariant, entanglement-breaking, and
positive-partial-transpose channels.
For isometric targets against the last two families, we compute
testing and smoothing exactly at every blocklength.

\subsection{Relation to earlier work}
The generalized state Stein problem was formulated by Brand\~ao and
Plenio \cite{BrandaoPlenio}. Following the identification of a gap in
the original proof \cite{BertaEtAl}, proofs were obtained by Hayashi
and Yamasaki \cite{HayashiYamasakiState} and Lami \cite{Lami}.
Families of state hypotheses built from mixtures of tensor powers were
studied by Berta, Brand\~ao, and Hirche \cite{BertaBrandaoHirche}.
Related developments include generalized asymptotic equipartition
\cite{FangFawziFawzi} and error exponents for families of correlated
state hypotheses \cite{FangHayashi}. With a one-dimensional input,
our asymptotic theorem reduces to the generalized state problem without
permutation or partial-trace closure, as already established for states
by Hayashi and Yamasaki \cite{HayashiYamasakiState}.

For discrimination between two quantum channels, Gao, Ji, and Liu
\cite{GaoJiLiu} established the channel Stein lemma and an exponential
strong converse under arbitrary adaptive strategies.
Bergh, Datta, and Salzmann characterized the parallel Stein exponent
for composite hypotheses generated by independent, not necessarily
identically distributed channel uses and their convex hulls, with
$\varepsilon\downarrow0$ taken after the blocklength $\liminf$ or $\limsup$
\cite[Thm.~10]{BerghDattaSalzmann}. Our alternatives include correlated
block channels subject to \cref{def:axioms}. Generalized Stein lemmas for
classical--quantum channels were established by Hayashi and Yamasaki
\cite{HayashiYamasakiCQ} and, independently, by Bergh, Datta, and
Khaitan \cite{BerghDattaKhaitan}. A classical--quantum channel
measures in a fixed input basis and prepares an output state
depending on the result. The present theorem allows general quantum
inputs and outputs, under the structural assumptions above.
In particular, the marginal axiom used in our finite-block completion
bound is not required for our asymptotic theorem and should not be read
as a hypothesis shared by all of the classical--quantum results.

Exact trace preservation creates a separate issue from the testing
problem. Gour \cite{Gour} gives channel pairs for which max-relative
entropy smoothed over trace-preserving maps does not converge to
the regularized relative entropy when the dominator is prescribed
to be the tensor power of one fixed channel. Our smoothing instead
optimizes the dominator over $\F_n$. Convexity, tensor closure, and the free replacer allow the
dominator to change during amplification and completion. A singleton
family $\F_n=\{\M^{\otimes n}\}$ satisfies the faithful-replacer
assumption only if $\M$ itself is a faithful replacer. Thus the present
theorem does not establish a general
fixed-dominator channel equipartition theorem.

\subsection{Structure of the argument}
The proof adapts the uniform Stinespring approximation and tensor
amplification methods of \cite{GaoJiLiu} to families of alternative channels.
Testing minimax supplies an alternative that is hardest to distinguish
from the target at a fixed error tolerance.
A semidefinite representation of the testing quantity then produces
a uniform auxiliary map approximation for prescribed Stinespring dilations.
Keeping their auxiliary spaces fixed allows two approximations relative
to a common free dominator to be subtracted and tensorized.

The amplification argument has two stages.
First, a sufficiently accurate fixed block can be repeated and its
tensor expansion truncated to give exponential accuracy, at an
arbitrarily small increase in domination rate.
Second, a rough approximation from testing is combined with an
exponentially accurate approximation at a higher rate. This produces
accurate blocks at an improved rate.
Starting from exact domination by the faithful replacer, finitely
many improvements reach every rate above the lower limiting testing
exponent.

The resulting approximations compare the upper and lower limiting
testing exponents at every pair of fixed error tolerances.
This proves ordinary convergence and error independence.
A direct entropy comparison and the weak converse identify the
common limit with the regularized Umegaki channel relative entropy.
The proof establishes these limits without an additivity assumption.

The quantitative finite-block bound uses a separate construction.
A positive-overlap branch is corrected by local operator expansions
obtained from permutation averaging and weighted discarding.
The additional family conditions convert these corrections into
free domination, and a replacer restores trace preservation and the
discarded sites. Comparison with auxiliary extensions reduces a
general target to the isometric construction.

\Cref{sec:setting} states the model and main results.
\Cref{sec:one-shot} develops testing minimax, auxiliary map
approximation, and trace-defect completion.
\Cref{sec:exponential} proves exponential approximation and the
Stein identity, and \cref{sec:consequences} derives their asymptotic
consequences.
\Cref{sec:quantitative} states the additional assumptions and the
quantitative finite-block results.
Resource families and exact benchmarks appear in
\cref{sec:applications}, followed by the discussion in
\cref{sec:discussion}.
\Cref{app:quantitative-proof} proves quantitative completion;
\cref{app:alternative-proof} gives symmetric entropy estimates and
an alternative limit proof under all five conditions.

\section{Model and main results}\label{sec:setting}
\subsection{Operator and channel conventions}
All Hilbert spaces considered in this paper are finite dimensional.
We write $\mathrm L(H,K)$ for the linear operators from $H$ to $K$, and
$\mathrm L(H)=\mathrm L(H,H)$.
The adjoint of an operator or map is denoted by a dagger; for maps it is taken
in the Hilbert--Schmidt inner product.
We denote the identity operator by $I_H$ and the identity map by $\id_H$,
omitting subscripts when the space is clear.
For a Hermitian operator $X$, $\lambda_{\min}(X)$ is its smallest eigenvalue,
and $\supp X$ is its range support.
The norm of a vector is its Hilbert-space norm.
Partial traces are denoted by $\Tr_K$; for a state $\rho$ on $A\otimes B$, write
$\rho_B=\Tr_A\rho$.
All transposes are taken in fixed orthonormal bases, using product bases on
tensor powers.
Our operator and channel conventions follow \cite{Watrous}, with the reference
factor written first in Choi operators.
Write $\Dens(H)$ for the density operators on $H$ and $\CPTP(A\to B)$ for the
CPTP maps from $\mathrm L(A)$ to $\mathrm L(B)$.
Pure states are represented by unit vectors $\ket\psi$, with density operator
$\ket\psi\bra\psi$.
Mixed states and reduced states are written as density operators, typically
denoted by $\rho$, $\sigma$, $\tau$, or $\omega$.
A state is called faithful if it is positive definite, equivalently full rank.
In finite dimensions its smallest eigenvalue is then strictly positive.
The operator and trace norms are denoted by $\|\cdot\|_\infty$ and
$\|\cdot\|_1$, respectively.
We use $\log_2$ for information quantities and $\ln$ for the natural logarithm.
For a linear operator $K$, let $\Ad_K(X)=KXK^{\dagger}$.
In asymptotic estimates, subscripts on $O(\cdot)$ indicate the parameters on
which the implicit constant may depend; $o(n)$ denotes a quantity whose ratio to
$n$ tends to zero.
For any square operator $X$, its Hermitian part is $\Rea X=(X+X^{\dagger})/2$;
an inequality involving $\Rea X$ is understood in the Loewner order.
For Hermiticity-preserving maps, $\Phi\cpge\Psi$ means that $\Phi-\Psi$ is
completely positive (CP).
Pre- and postcomposition with a CP map, and tensoring with a CP map, preserve
this order, as do positive linear combinations.
In particular, if $\Phi_i\cpge\Psi_i\cpge0$ for $i=1,2$, then
$\Phi_1\otimes\Phi_2\cpge\Psi_1\otimes\Psi_2$.

For states $\rho,\sigma$ on the same space, the relative entropy is
\begin{equation*}
 D(\rho\Vert\sigma)=
 \Tr\rho(\log_2\rho-\log_2\sigma),
\end{equation*}
with value $+\infty$ unless $\supp\rho\subseteq\supp\sigma$.
Functions of positive operators are defined spectrally, with
$0\log_2 0=0$ in entropy expressions.
When $\sigma$ is singular and $\supp\rho\subseteq\supp\sigma$,
the term $\Tr\rho\log_2\sigma$ is evaluated on $\supp\sigma$.
We use the data-processing and joint-convexity properties of relative entropy.
For a linear map $\Phi:\mathrm L(H)\to\mathrm L(K)$, the diamond
norm is the completely bounded trace norm,
\[
 \dnorm\Phi=\sup_{R}\sup_{0\ne Z\in\mathrm L(R\otimes H)}
       \frac{\trnorm{(\id_R\otimes\Phi)(Z)}}{\trnorm Z}.
\]
A reference of dimension $\dim H$ suffices. The diamond norm
bounds the trace-norm difference of outputs for every reference-assisted
input, is nonincreasing under pre- and postcomposition by channels,
and equals one on channels. Consequently a diamond distance at
most $\delta$ changes any tester's acceptance probability by at
most $\delta$. We use this bound throughout; for two channels the
sharper bound $\delta/2$ also holds because their output difference
has trace zero.
For a CP map $\Phi$, one has
$\dnorm\Phi=\opnorm{\Phi^{\dagger}(I)}$. The diamond norm is
multiplicative under tensor products; in particular, tensoring
with a channel preserves it.
For operators $K$ and $L$ with the same domain and codomain,
\begin{equation}
 \dnorm{\Ad_K-\Ad_L}
 \le(\opnorm K+\opnorm L)\opnorm{K-L}.
 \label{eq:single-kraus-continuity}
\end{equation}
To see this, let $\widetilde K=I_R\otimes K$ and
$\widetilde L=I_R\otimes L$. For an arbitrary, not necessarily
positive, operator $Z$,
\begin{align*}
 \widetilde K Z\widetilde K^{\dagger}-\widetilde L Z\widetilde L^{\dagger}
 &=(\widetilde K-\widetilde L)Z\widetilde K^{\dagger}
       +\widetilde L Z(\widetilde K-\widetilde L)^{\dagger},\\
 \trnorm{\widetilde K Z\widetilde K^{\dagger}-\widetilde L Z\widetilde L^{\dagger}}
 &\le(\opnorm K+\opnorm L)\opnorm{K-L}\trnorm Z.
\end{align*}
Here we used $\|UZV\|_1\le\|U\|_\infty\|Z\|_1\|V\|_\infty$
and invariance of operator norm under tensoring with an identity.
Taking the supremum proves \eqref{eq:single-kraus-continuity}.
This estimate will convert an operator approximation of a Kraus
branch into a diamond approximation during trace completion.

We use the unnormalized Choi operator
\begin{equation}
 C_\Phi=(\id_H\otimes\Phi)(\ket{\Gamma_H}\bra{\Gamma_H}),
 \qquad \ket{\Gamma_H}=\sum_{j=1}^{\dim H}\ket j\ket j.
 \label{eq:choi}
\end{equation}
For an operator $L:H\to K$, we write $\ket L=(I_H\otimes L)\ket{\Gamma_H}$ for
its vectorization.
Thus $\ket L$ is a vector in $H\otimes K$, whereas $L\ket j$ is a vector in $K$.
One has $C_{\Ad_L}=\ket L\bra L$ and $\langle L\mid M\rangle=\Tr L^{\dagger}M$.
The Choi criterion gives $\Phi\cpge\Psi$ if and only if $C_\Phi\ge C_\Psi$.
A map $\Phi:\mathrm L(H)\to\mathrm L(K)$ is CPTP if and only if $C_\Phi\ge0$ and $\Tr_K C_\Phi=I_H$.

\subsection{Alternative families}
Fix $A,B$, put $a=\dim A$, $b=\dim B$, and fix a channel
$\N:\mathrm L(A)\to\mathrm L(B)$. Abbreviate $\N_n=\N^{\otimes n}$ and
$A^n=A^{\otimes n}$, with analogous notation for other systems.
The symbol $A_i$ denotes the $i$th tensor factor of $A^n$,
and $[n]=\{1,\ldots,n\}$.
For a state $\omega\in\Dens(B)$, the replacer channel is
$\R_\omega(X)=\Tr(X)\omega$.

A channel is \emph{free} when it belongs to the family specified
below. The axioms relate these sets across blocklengths and allow
each member of $\F_n$ to correlate all $n$ uses.

\begin{definition}[Admissible free families]\label{def:axioms}
A sequence $\F=(\F_n)_{n\ge1}$, with
$\F_n\subseteq\CPTP(A^n\to B^n)$, is admissible if:
\begin{enumerate}[label=\textup{(F\arabic*)},ref=F\arabic*]
\item\label{ax:convex} Each $\F_n$ is nonempty, compact, and convex.
\item\label{ax:tensor} $\M\otimes\mathcal L\in\F_{n+m}$ whenever
$\M\in\F_n$ and $\mathcal L\in\F_m$.
\item\label{ax:replacer} There is a fixed positive-definite
$\omega\in\Dens(B)$ with $\R_\omega\in\F_1$.
\end{enumerate}
Tensor closure implies $\R_\omega^{\otimes n}\in\F_n$.
\end{definition}

The word free refers only to membership in these sets, and admissible
means that the sequence satisfies these three conditions. In
particular, no closure under arbitrary input or output processing
is assumed. The state $\omega$ appearing below is always the witness
in \ref{ax:replacer}; write $w=\lambda_{\min}(\omega)>0$.
Different witnessing states can change the constants in our estimates,
but the testing problem itself is determined by $\N$ and $\F$.
Unless stated otherwise, $\F$ is assumed admissible throughout.
The quantitative results in \cref{sec:quantitative} impose two
additional conditions, stated there as \ref{ax:permutation} and
\ref{ax:marginal}.

\subsection{Testing and relative entropy}

We may take the reference-assisted input to be a pure state
$\ket\psi\in R\otimes A^n$. Indeed, any mixed input can be
purified by enlarging the reference. Extending the acceptance
effect by the identity preserves all testing probabilities, and
data processing shows that purification cannot decrease the
output relative entropy. Schmidt compression then allows
$R\simeq A^n$. All input state vectors in the optimizations below have
unit norm. Write $\varrho_R=\Tr_{A^n}\ket\psi\bra\psi$ for the
reference marginal.

For a unit vector $\ket\psi\in R\otimes A^n$, define the output
density operators
\begin{align*}
 \rho_n^\psi&=(\id_R\otimes\N_n)(\ket\psi\bra\psi),\\
 \sigma_\M^\psi&=(\id_R\otimes\M)(\ket\psi\bra\psi).
\end{align*}
The target output is denoted by $\rho$ and an alternative output by
$\sigma$; superscripts and subscripts specify the input and channel
when needed. The stabilized channel relative entropy and its distance
to the free family are
\begin{align*}
 D_{\ch}(\N_n\Vert\M)
 &=\sup_{R,\ket\psi}D(\rho_n^\psi\Vert\sigma_\M^\psi),
 \\
 E_n&=\inf_{\M\in\F_n}D_{\ch}(\N_n\Vert\M).
\end{align*}
The supremum is over normalized input vectors; as explained above,
$R\simeq A^n$ suffices. In $D_{\ch}(\N_n\Vert\M)$ the optimizing
input may depend on $\M$. The testing problem below instead chooses
one input and one measurement for the entire alternative family.
For $0<\eps<1$, define
\begin{equation}
 \beta_{\eps,n}
 =\inf_{\substack{R,\ket\psi,\ 0\le Q\le I\\
                  \Tr Q\rho_n^\psi\ge1-\eps}}
       \sup_{\M\in\F_n}\Tr Q\sigma_\M^\psi.
 \label{eq:beta}
\end{equation}
A tester is the pair consisting of the input state $\ket\psi$ and the
acceptance effect $Q$, together with its reference space $R$.
The two measurement effects are $Q$ and $I-Q$: the outcome associated
with $Q$ declares the channel to be the target, and the other outcome
rejects the target.
Its type-I and type-II errors are, respectively,
\begin{align*}
 \alpha(\ket\psi,Q)&=1-\Tr Q\rho_n^\psi,\\
 \beta(\ket\psi,Q;\M)&=\Tr Q\sigma_\M^\psi.
\end{align*}
Feasibility means $\langle\psi|\psi\rangle=1$, $0\le Q\le I$, and
$\alpha(\ket\psi,Q)\le\varepsilon$.
Thus \eqref{eq:beta} minimizes the largest acceptance probability over
free alternatives, subject to accepting $\N_n$ with probability at least
$1-\varepsilon$. The target channel determines which testers satisfy
this constraint through the output state $\rho_n^\psi$.
For a simple feasible tester, choose any normalized input and set
$Q=(1-\varepsilon)I$. Operationally, this tester ignores the output and
declares the channel to be the target with probability $1-\varepsilon$.
Since every output state has trace one,
\[
 \Tr Q\rho_n^\psi=\Tr Q\sigma_\M^\psi=1-\varepsilon
 \qquad(\M\in\F_n).
\]
Its type-I error is therefore $\varepsilon$, and its worst-case type-II
error is $1-\varepsilon$, proving the elementary bound
$\beta_{\eps,n}\le1-\varepsilon$.
There is one input state $\ket\psi$ and one effect $Q$ for all alternatives; we
call such a tester \emph{uniform}, meaning independent of the alternative.
The purification argument above ensures that these optimizations cover all
physically admissible input states and effects.

To relate this composite testing problem to testing against a single
alternative, we introduce the corresponding single-alternative optimum.
For a target channel $\mathcal T:\mathrm L(H)\to\mathrm L(K)$ and a single
alternative channel $\M:\mathrm L(H)\to\mathrm L(K)$, write
$b_\eps(\mathcal T,\M)$ for the optimal reference-assisted parallel type-II
error at target channel $\mathcal T$ and type-I error at most $\eps$.
The infimum below is over finite-dimensional reference systems $R$, unit vectors
$\ket\psi\in R\otimes H$, and acceptance effects $Q$ on $R\otimes K$:
\[
 b_\eps(\mathcal T,\M)
 =\inf_{\substack{R,\ket\psi,\ 0\le Q\le I\\
       \Tr Q(\id_R\otimes\mathcal T)(\ket\psi\bra\psi)\ge1-\eps}}
       \Tr Q(\id_R\otimes\M)(\ket\psi\bra\psi).
\]
In $b_\eps(\mathcal T,\M)$, the tester may depend on the specified alternative
$\M$, whereas $\beta_{\eps,n}$ requires one tester for all alternatives in
$\F_n$.
The minimax theorem proved in \cref{lem:testing-minimax} establishes the
relation
\[
 \beta_{\eps,n}=\max_{\M\in\F_n}b_\eps(\N_n,\M).
\]

For the asymptotic analysis, we write $a_{\eps,n}=-\log_2\beta_{\eps,n}$.
The faithful replacer ensures $\beta_{\eps,n}>0$; the quantitative bounds are
given in \cref{lem:bounds}, with the constant
$C=\log_2 b-\log_2\lambda_{\min}(\omega)$.

\subsection{Main theorems}
\begin{theorem}[Generalized Stein identity for quantum channels]\label{thm:main}
Let $\N:\mathrm L(A)\to\mathrm L(B)$ be arbitrary and let $\F$ satisfy
\ref{ax:convex}, \ref{ax:tensor}, and \ref{ax:replacer}.
Then there is a finite $R_\F(\N)\ge0$ such that
\begin{equation*}
 \lim_{n\to\infty}-\frac1n\log_2\beta_{\eps,n}
 =R_\F(\N)
 =\lim_{n\to\infty}\frac{E_n}{n}
 \qquad(0<\eps<1).
\end{equation*}
Both limits are ordinary limits over all positive integers.
Writing $w=\lambda_{\min}(\omega)$, one has
$R_\F(\N)\le \log_2 b-\log_2 w$.
\end{theorem}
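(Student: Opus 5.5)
The proof will follow the route outlined in the introduction, taking as given the one-shot tools of the paper: the testing minimax $\beta_{\eps,n}=\max_{\M\in\F_n}b_\eps(\N_n,\M)$ and the elementary bounds of \cref{lem:bounds}.

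\emph{Elementary bounds and finiteness.} Feasibility of $Q=(1-\eps)I$ gives $\beta_{\eps,n}\le 1-\eps$, so $a_{\eps,n}\ge0$. Conversely, $\R_\omega^{\otimes n}\in\F_n$ satisfies $\N_n\cple (b/w)^n\R_\omega^{\otimes n}$, because $\omega^{\otimes n}\ge w^n I\ge (w/b)^n\cdot$(any output state) at the level of Choi operators. Hence every feasible tester has $\Tr Q\sigma\ge (w/b)^n(1-\eps)$, and therefore $a_{\eps,n}\le nC-\log_2(1-\eps)$. The same domination gives $D_{\ch}(\N_n\Vert\R_\omega^{\otimes n})\le nC$, so $E_n\le nC$. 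Tensor closure together with additivity of $D_{\ch}$ upper bounds under tensor products (use the product alternative and data processing/subadditivity in the form $E_{n+m}\le$ a quantity controlled by $E_n+E_m$) supplies the subadditive-type structure. The cleanest plan is to avoid needing additivity: I will show directly that $\liminf a_{\eps,n}/n$, $\limsup a_{\eps,n}/n$, $\liminf E_n/n$ and $\limsup E_n/n$ all coincide.

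\emph{Weak converse and direct part.} First, a weak converse links testing to $E_n$. For the maximizing $\M$ in the minimax, data processing of relative entropy under the binary measurement $\{Q,I-Q\}$ gives $(1-\eps)\log_2(1/b_\eps(\N_n,\M))\le D_{\ch}(\N_n\Vert\M)+1$. Taking $\M$ to be the minimax-optimal alternative, and noting that $b_\eps\ge\beta$ fails in the wrong direction, I instead use $\beta_{\eps,n}\ge b_\eps(\N_n,\M)$ for every $\M$. This yields $a_{\eps,n}\le (E_n+1)/(1-\eps)$. Second, for the direct part I would show $E_n/n\lesssim\liminf a_{\eps,n}/n$ via the main approximation theorem. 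For every $S>\underline{a}:=\liminf_n a_{\eps,n}/n$ there exist $\mathcal S_n\in\F_n$ and CPTP $\mathcal L_n$ with $\mathcal L_n\cple 2^{nS}\mathcal S_n$ and $\dnorm{\mathcal L_n-\N_n}\le Ke^{-\gamma n}$. Then $D_{\ch}(\mathcal L_n\Vert\mathcal S_n)\le D_{\max}\le nS$, and continuity of relative entropy in the first argument on a space of dimension $(ab)^n$ (Alicki--Fannes type, with error $O(n\,\delta_n\log(ab))+h(\delta_n)$ and $\delta_n$ exponentially small, using that the second argument is dominated) gives $E_n\le nS+o(n)$. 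Hence $\limsup E_n/n\le\underline a$ for every $\eps$.

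\emph{Strong converse and combining.} The same approximation provides a strong converse. If a tester has $\Tr Q\sigma_{\mathcal S_n}\le 2^{-nS'}$ with $S'>S$, then $\Tr Q\rho_n\le 2^{nS}2^{-nS'}+Ke^{-\gamma n}\to0$. So for any $\eps'$ we get $\limsup a_{\eps',n}/n\le \underline a_\eps$. Chaining the inequalities $\limsup_n a_{\eps',n}/n\le\liminf_n a_{\eps,n}/n\le\liminf E_n/(n(1-\eps))$, and letting $\eps\downarrow0$ on the right while using error-independence, gives equality of all four quantities with a common value $R_\F(\N)\le C$.

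\emph{Main obstacle.} The principal difficulty is the exponential approximation at every rate $S>\underline a$ with an \emph{exactly trace-preserving} $\mathcal L_n$. I would construct it by starting from the trivial rate $C$, realized by the replacer, and iteratively lowering the rate. At each step I would take a rough fixed-block approximation from testing via uniform Stinespring approximation, combine it with an exponentially accurate approximation at the higher rate, and then amplify by tensor powers with truncation of the tensor expansion. Trace defects would be completed using the replacer branch.
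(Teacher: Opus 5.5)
Your architecture is the paper's: the bounds and weak converse of \cref{lem:bounds}; exponential free approximation at every rate above $\liminf_n a_{\eps,n}/n$ (\cref{prop:iterative-amplification}); the tolerance comparison $\limsup_n a_{\eps',n}/n\le\liminf_n a_{\eps,n}/n$, obtained by passing any feasible tester through $\mathcal L_n\cple2^{nS}\mathcal S_n$; and an entropy comparison for the upper bound on $E_n$.

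The step whose justification fails as written is that entropy comparison. Relative entropy is not continuous in its first argument with an error depending only on dimension. The difference $D(\rho\Vert\sigma)-D(\nu\Vert\sigma)$ contains the cross term $-\Tr(\rho-\nu)\log_2\sigma$, which your bound $O(n\delta_n\log_2(ab))+h_2(\delta_n)$ omits. Moreover, CP domination controls $\nu$, not $\rho$, so $D(\rho\Vert\sigma_{\mathcal S_n})$ may even be infinite.

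The repair is \cref{lem:entropy-comparison}. Replace $\mathcal S_n$ by $\tfrac12(\mathcal S_n+\R_\omega^{\otimes n})\in\F_n$, at a cost of one bit. Then $\tfrac12w^n\mu\otimes I\le\sigma\le b^n\mu\otimes I$, where $\mu$ is the common reference marginal. Since $\rho-\nu$ has zero reference marginal, $\Tr(\rho-\nu)\log_2\sigma=\Tr(\rho-\nu)[\log_2\sigma-\log_2(\mu\otimes I)]$, and this is at most $\tfrac12\trnorm{\rho-\nu}(nC+1)$ in absolute value. The cancellation of $\log_2(\mu\otimes I)$ makes the estimate uniform over inputs whose reference marginal has arbitrarily small eigenvalues; a bound through $\lambda_{\min}(\sigma)$ would not be uniform. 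With this repair the extra term is $O(n)e^{-\gamma n}$, and your conclusion $\limsup_n E_n/n\le\liminf_n a_{\eps,n}/n$ holds.

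Your sketch of the main construction matches \cref{prop:iterative-amplification,prop:free-amplification}. It leaves implicit the points that make it work:
\begin{enumerate}
\item Both auxiliary maps must be taken relative to one fixed Stinespring dilation of a common free dominator $\tfrac12\mathcal M_k+\tfrac12\mathcal T_k$, so that they can be subtracted.
\item The testing bound $h_{t_k}\le1-\eps$, with $t_k=\eps/\beta_{\eps,k}$ at a hardest alternative, gives only error $d=\sqrt{1-\eps}<1$. The truncation threshold $\eta$ must therefore come from $\opnorm{H_k}\le(1+d)/2<1$. This shrinks the rate gap by the fixed factor $\eta$ per round.
\item The truncated $k$-fold product gives vanishing error only along $N=k^2$. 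A second amplification is needed to reach every large $n$: fix one accurate block, truncate, pad with replacers, and complete the trace defect via \cref{lem:defect-completion}.
\end{enumerate}
Finally, drop the remark about additivity of $D_{\ch}$. It is not available in general and, as you note, is not needed.
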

The rate need not be strictly positive when $\N\notin\F_1$.
\Cref{sec:nonfaithfulness} gives an admissible family whose correlated
alternatives contain a fixed target component at every blocklength,
forcing $R_\F(\N)=0$.

The main technical result constructs exactly trace-preserving
approximations with exponential accuracy at every rate above the
Stein rate. Their domination by a free channel controls all
reference-assisted tests simultaneously.

\begin{theorem}[Exponential free-channel approximation]
\label{thm:exponential-completion}
Under the assumptions of \cref{thm:main}, for every $S>R_\F(\N)$
there are $K_S,\gamma_S>0$ such that, for all sufficiently large $n$,
one can find $\mathcal S_n\in\F_n$ and
$\mathcal L_n\in\CPTP(A^n\to B^n)$ satisfying
\begin{equation}
 \begin{aligned}
  \mathcal L_n&\cple2^{nS}\mathcal S_n,\\
  \dnorm{\mathcal L_n-\N_n}&\le K_Se^{-\gamma_S n}.
 \end{aligned}
 \label{eq:exponential-completion}
\end{equation}
The dominator is chosen independently of the input, reference, and test.
The constants and the blocklength threshold may depend on $\N$, $\F$,
and $S$.
\end{theorem}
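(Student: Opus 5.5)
We follow the outline in the introduction and build the approximations in three stages: a one-shot rough approximation coming from testing, exponential amplification at a fixed rate, and an iterative reduction of the domination rate. Throughout, write $\underline a=\liminf_n a_{\eps,n}/n$.

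\textbf{One-shot step.} By the minimax relation $\beta_{\eps,n}=\max_{\M\in\F_n}b_\eps(\N_n,\M)$ (\cref{lem:testing-minimax}), there is a hardest free alternative $\M_n$. Dualizing the semidefinite program for $b_\eps$ should produce a CP map $\widetilde{\mathcal L}_n$ with $\widetilde{\mathcal L}_n\cple\beta_{\eps,n}^{-1}\M_n$ that agrees with $\N_n$ up to weight $1-\eps$. We would realize this as an auxiliary-map approximation: take a fixed Stinespring isometry $V$ of $\N_n$ and an operator $W$ on the same auxiliary space with $\Ad_W\cple\beta^{-1}\M_n$ and $\Rea(V^\dagger W)\ge(1-\eps')I$. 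The trace defect of $\Ad_W$ is then repaired by adding $\R_\omega^{\otimes n}$ weighted by the defect operator. This uses the faithfulness $w>0$ and costs a factor at most $2^{nC}$ in the dominator. The output is a CPTP map at diamond distance $O(\sqrt{\eps'})$, which is rough but constant, dominated at rate about $a_{\eps,n}/n$.

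\textbf{Amplification.} Suppose that at blocklength $m$ we have $\mathcal L_m\cple2^{mS'}\mathcal S_m$ with $\dnorm{\mathcal L_m-\N_m}\le\eta$, where $\eta$ is small. Write $\mathcal L_m=\N_m+\Delta$ and expand $\mathcal L_m^{\otimes k}$ at the level of Kraus branches $W=V+(W-V)$. Keep only the terms with at most $\theta k$ defective factors; the discarded tail has weight $\binom{k}{\theta k}\eta^{\theta k}$, which is exponentially small once $\eta\ll\theta$. The kept part is dominated by $2^{kmS'}$ times a convex combination of tensor products of $\mathcal S_m$ and the faithful replacer, with a prefactor $2^{kH(\theta)+\theta kmC}$. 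By (F1)–(F2) this combination is free. Completing the trace again with the replacer yields exponential accuracy at rate $S'+O(H(\theta)/m+\theta C)$, an arbitrarily small increase.

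\textbf{Rate reduction.} Start from the exact domination $\N_n\cple b^n w^{-n}\R_\omega^{\otimes n}$, which gives rate $C$. Given exponentially accurate approximations at rate $S'$, combine the rough testing approximation at rate $\approx\underline a$ with them by a convex mixture over blocks. The rough branch carries weight $1-\eps$ at rate $\underline a$, and its residual error is absorbed by the rate-$S'$ exponential approximation. This yields accuracy $\eta$ at a rate such as $(1-\eps)\underline a+\eps S'$. Amplifying and iterating contracts $S'-\underline a$ geometrically, so after finitely many rounds we reach any $S>\underline a$. Finally, \cref{thm:main} identifies $\underline a=R_\F(\N)$, which gives \eqref{eq:exponential-completion}; to avoid circularity, the proof should be organized so that this identification follows from the construction itself.

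\textbf{Main obstacle.} The hardest step is the subtraction and combination in the rate-reduction step. The rough and accurate approximations must be compared at the operator level on a common Stinespring auxiliary space, so that the error of the rough branch becomes a small CP remainder dominated by the accurate one. Only then is the mixing trace preserving with controlled domination. Our first attempt would fix the dilation of $\N$ once and for all, tensorize it, and demand that every approximation in the lemma be expressed through it. This is what the uniform auxiliary map approximation is designed to allow.
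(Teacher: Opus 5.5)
Your architecture is the paper's: a rough one-shot approximation from testing, fixed-block amplification, and iterated rate reduction starting from replacer domination at rate $C$. Your amplification step is essentially \cref{prop:free-amplification}, once it is read as an expansion of $V^{\otimes k}=(W+(V-W))^{\otimes k}$ that keeps few residual factors, rather than an expansion of $\mathcal L_m^{\otimes k}$.

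The gap is in the rate-reduction step, which carries the whole argument, and as you describe it, it cannot work.

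\emph{Channel-level mixing fails.} Mixing the rough and accurate channels, per block or over tensor products of blocks, gives nothing. For $p\mathcal L_{\mathrm{rough}}+(1-p)\mathcal L_{\mathrm{acc}}$ the domination constant is only bounded by $p\,2^{n\underline a}+(1-p)2^{nS'}$, and the error only by $p\delta_{\mathrm{rough}}+(1-p)\delta_{\mathrm{acc}}$. So a rate below $S'$ forces $1-p$ to be exponentially small, which leaves an error of order $\delta_{\mathrm{rough}}$.

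\emph{The incoherent splitting is not available.} Your rate $(1-\eps)\underline a+\eps S'$ presupposes a splitting $\N_n=\mathcal B+\mathcal R$ into CP maps, with $\mathcal B\cple t\M_n$ of weight $1-\eps$ and a residual of weight $\eps$. Testing supplies no such splitting, because the dual certificate $C_{\N_n}\le tC_{\M_n}+Y$ cannot in general be split in the positive semidefinite order. For example, take two distinct nearby unitary channels. Then $b_\eps>0$ for small $\eps$, but both Choi operators have rank one, so no nonzero CP map $\mathcal B\cple\N$ satisfies $\mathcal B\cple t\M$ for any $t$.

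What \cref{lem:auxiliary-map-approximation} actually gives, with $t_k=\eps/\beta_{\eps,k}$ and $h_{t_k}(\N_k\Vert\M_k)\le1-\eps$, is a \emph{coherent} splitting. Its residual has operator norm up to $d=\sqrt{1-\eps}$, which is not small. Its cross terms also rule out any binomial ``weight'' count. So the rough branch's error is not a ``small CP remainder,'' as your last paragraph says.

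\emph{The missing idea} is the two-rate truncation of \cref{prop:iterative-amplification}:
\begin{itemize}
\item Represent two operators through one dilation of the common free dominator $\tfrac12(\M_k+\mathcal T_k)$. The first is the rough operator $X_k$, with auxiliary norm at most $2^{kr/2}$ for a rate $r$ slightly above $\underline a$. The second is the accurate operator $Y_k$ available at rate $s$, with auxiliary norm at most $\sqrt2\,2^{ks/2}$. Put $H_k=Y_k-X_k$.
\item The decisive fact is $\opnorm{H_k}\le d+e_k\le(1+d)/2<1$, even though $\opnorm{X_k}$ may be as large as $1+d$.
\item Truncate the exact identity $(X_k+H_k)^{\otimes m}=Y_k^{\otimes m}$ to terms with at most $\eta m$ residual factors. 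The discarded tail is bounded by $\sum_{j>\eta m}\binom mj(1+d)^{m-j}((1+d)/2)^j$. This is exponentially small once $\eta<1$ is close enough to one, with the choice depending only on $d$.
\item The kept terms cost rate $(1-\eta)r+\eta s<s$. With $m=k$ the error vanishes along $N=k^2$, and amplification then gives exponential accuracy at all lengths.
\item Because $\eta$ depends only on $\eps$, the gap contracts by a fixed factor, so finitely many rounds suffice.
\end{itemize}
Without the strict bound $\opnorm{H_k}<1$ and this truncation, your iteration has no valid improving step.

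Two smaller corrections:
\begin{itemize}
\item Trace completion adds at most $1$ to the domination constant (\cref{lem:defect-completion}). A factor $2^{nC}$, as you state, would erase the rate.
\item The identification $\underline a=R_\F(\N)$ needs no circularity. Approximations at every $S>\underline a$ give $\limsup_n a_{\eps',n}/n\le\underline a$ for every $\eps'$. The weak converse and \cref{lem:entropy-comparison} then identify the entropy limit.
\end{itemize}
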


\Cref{sec:one-shot} develops the testing and approximation tools;
\cref{sec:exponential} uses them to prove
\cref{thm:main,thm:exponential-completion} jointly.
The proof starts from a lower limiting testing exponent and establishes
its convergence before identifying the entropy rate.
Under two additional conditions, \cref{thm:completion} gives an
explicit finite-block domination bound in terms of $\beta_{\eps,n}$.
Its constant is uniform over targets and families with the specified
dimensions and witnessing-state eigenvalue bounds.

\section{Testing and uniform channel approximation}\label{sec:one-shot}

In this section, we provide the four ingredients of the main proof.
First, we apply replacer domination to obtain uniform bounds and a weak
converse (\cref{lem:bounds}).
Second, testing minimax allows us to select the hardest alternative
(\cref{lem:testing-minimax}).
Third, auxiliary map approximation converts its testing bounds into a single
operator approximation valid for every input
(\cref{lem:auxiliary-map-approximation}).
Finally, a completion step restores exact trace preservation at a cost
controlled by the trace defect (\cref{lem:defect-completion}).
For all family-dependent statements, we rely only on \cref{def:axioms}.

\subsection{Uniform bounds and testing minimax}
\begin{lemma}\label{lem:bounds}
Assume only \ref{ax:convex}, \ref{ax:tensor}, and \ref{ax:replacer}.
Set $C=\log_2 b-\log_2 w$, where
$w=\lambda_{\min}(\omega)>0$. For every $n$ and $0<\eps<1$,
\begin{align}
 0\le E_n&\le nC,\label{eq:En-bound}\\
 (1-\eps)2^{-nC}\le\beta_{\eps,n}&\le1-\eps,\label{eq:beta-bound}\\
 (1-\eps)(-\log_2\beta_{\eps,n})&\le E_n+1.
 \label{eq:weak-converse}
\end{align}
\end{lemma}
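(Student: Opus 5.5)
The plan is to get all three bounds from one operator inequality: the product replacer $\R_\omega^{\otimes n}=\R_{\omega^{\otimes n}}$, which lies in $\F_n$ by \ref{ax:tensor} and \ref{ax:replacer}, dominates every channel up to the factor $2^{nC}$. First I establish this domination, then read off each inequality.

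\textbf{Step 1 (domination).} Let $\Phi\in\CPTP(A^n\to B^n)$ be any channel. Its Choi operator satisfies
\[
C_\Phi\le\|C_\Phi\|_\infty I\le\Tr_{B^n}\bigl(C_\Phi\bigr)\otimes\ldots
\]
A cleaner route is to use $C_\Phi\le I_{A^n}\otimes I_{B^n}\cdot\opnorm{\cdot}$, but that is slightly delicate. Instead I will use the standard fact that for a positive operator $X$ on $H\otimes K$ one has $X\le(\dim K)\,\Tr_K(X)\otimes I_K$. With $X=C_\Phi$ and $\Tr_{B^n}C_\Phi=I_{A^n}$, this gives $C_\Phi\le b^n\,I_{A^n}\otimes I_{B^n}$. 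Since $\omega^{\otimes n}\ge w^n I$, it follows that
\[
C_\Phi\le b^n w^{-n}\,I_{A^n}\otimes\omega^{\otimes n}=2^{nC}C_{\R_\omega^{\otimes n}},
\]
that is, $\Phi\cple 2^{nC}\R_\omega^{\otimes n}$. Applying $\id_R\otimes\cdot$ to any pure input then yields $\rho_n^\psi\le 2^{nC}\sigma_{\R}^\psi$, where $\sigma_{\R}^\psi=\varrho_R\otimes\omega^{\otimes n}$.

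\textbf{Step 2 (the bounds).} For \eqref{eq:En-bound}, nonnegativity of relative entropy gives $E_n\ge0$. For the upper bound, operator monotonicity of $\log$ applied to $\rho\le2^{nC}\sigma$ gives $D(\rho\Vert\sigma)\le D_{\max}\le nC$ for every input, so $E_n\le D_{\ch}(\N_n\Vert\R_\omega^{\otimes n})\le nC$.

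For \eqref{eq:beta-bound}, the upper bound is the trivial tester already described. For the lower bound, take any feasible tester and note that the supremum over $\F_n$ is at least the value at $\R_\omega^{\otimes n}$. This gives
\[
\Tr Q\sigma_\R^\psi\ge2^{-nC}\Tr Q\rho_n^\psi\ge2^{-nC}(1-\eps).
\]

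\textbf{Step 3 (weak converse).} This is the only step needing an idea. Fix $\M\in\F_n$ and any feasible tester. Data processing under the binary measurement $\{Q,I-Q\}$ gives
\[
D(\rho\Vert\sigma_\M)\ge d\bigl(1-\alpha\,\Vert\,\beta\bigr),
\]
where $d$ is the binary relative entropy and $\beta=\beta(\ket\psi,Q;\M)$. The standard bound $d(p\Vert q)\ge -h(p)-p\log_2 q\ge -1+p\log_2(1/q)$ with $p\ge1-\eps$ then yields
\[
D_{\ch}(\N_n\Vert\M)+1\ge(1-\eps)\bigl(-\log_2\beta(\ket\psi,Q;\M)\bigr)\ge(1-\eps)\Bigl(-\log_2\sup_{\M'\in\F_n}\beta(\ket\psi,Q;\M')\Bigr).
\]
Taking the supremum over feasible testers on the right gives $(1-\eps)(-\log_2\beta_{\eps,n})$, using continuity of $\log$ and $\beta_{\eps,n}>0$. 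Finally, taking the infimum over $\M$ on the left gives \eqref{eq:weak-converse}. No minimax step is needed, since a uniform tester is in particular valid for each $\M$.

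The only care point is the monotonicity in $p$. If $\log_2(1/q)\ge0$ then $p\log_2(1/q)\ge(1-\eps)\log_2(1/q)$, and $q\le1$ always holds, so this is fine.
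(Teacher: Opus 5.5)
Your proposal is correct and follows essentially the same route as the paper. It uses replacer domination $\rho\le2^{nC}\sigma$ from the inequality $X\le(\dim K)\,\Tr_K(X)\otimes I_K$, which the paper proves for states via a Schmidt decomposition and applies output by output, whereas you cite it and apply it to Choi operators; then the trivial tester $Q=(1-\eps)I$; and finally the weak converse via data processing under $\{Q,I-Q\}$ with $h_2\le1$ and $q\le t$.

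The points you pass over quickly are routine and are handled in the paper. The support issue behind $D\le D_{\max}$ when $\sigma$ is singular is treated by a regularization on $\supp\sigma$. In the boundary case $q=0$, the relative entropy is $+\infty$, so the bound is immediate.
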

\begin{proof}
We first prove the dimension bound
\begin{equation}
 \rho_{RB}\le b\,\rho_R\otimes I_B,
 \qquad \rho_{RB}\in\Dens(R\otimes B).
 \label{eq:dimension-order}
\end{equation}
For a pure state, write a Schmidt decomposition
$\ket\phi=\sum_{j=1}^r\sqrt{p_j}\ket{u_j}\ket{v_j}$, where
$r\le b$. For any vector $\ket x\in R\otimes B$,
Cauchy--Schwarz gives
\begin{align*}
 |\langle x|\phi\rangle|^2
 &\le r\sum_{j=1}^r p_j
          |\langle x|u_j, v_j\rangle|^2\\
 &\le b\,\bra x(\rho_R\otimes I_B)\ket x.
\end{align*}
In the last step, each $\ket{v_j}\bra{v_j}$ is bounded by $I_B$.
For a mixed state, decompose it into pure states, apply this bound
to each component, and sum with the mixture weights. The reference
marginals sum to $\rho_R$, proving \eqref{eq:dimension-order}.

We now apply this inequality to the channel outputs.
For an arbitrary unit input vector $\ket\psi\in R\otimes A^n$, set
\[
 \begin{aligned}
 \rho&=(\id_R\otimes\N_n)(\ket\psi\bra\psi),\\
 \sigma&=(\id_R\otimes\R_\omega^{\otimes n})(\ket\psi\bra\psi)\\
       &=\varrho_R\otimes\omega^{\otimes n}.
 \end{aligned}
\]
Because the joint map is $\id_R\otimes\N_n$ and $\N_n$ is trace preserving, the
reference marginal is unchanged.
That is, $\rho_R=\varrho_R$.
Apply \eqref{eq:dimension-order} with the whole output $B^n$, whose dimension is
$b^n$, and then use $\omega^{\otimes n}\ge w^nI_{B^n}$.
Equation~\eqref{eq:dimension-order} yields
\begin{equation}
 \rho\le b^nw^{-n}\sigma=2^{nC}\sigma. \label{eq:replacer-domination}
\end{equation}
The inequality in \eqref{eq:replacer-domination} implies
$\supp\rho\subseteq\supp\sigma$.
Restrict to $\supp\sigma$, where $\sigma$ is positive definite, and let $I$
denote the identity on this subspace.
For every $t>0$, $\rho+tI\le2^{nC}\sigma+tI$, so operator monotonicity of the
logarithm gives
\[
 \log_2(\rho+tI)\le\log_2(2^{nC}\sigma+tI).
\]
Multiplying by $\rho$ and taking the trace preserves this inequality.
Letting $t\downarrow0$, using $x\log_2x\to0$ as $x\downarrow0$, yields
\[
 \Tr\rho\log_2\rho
 \le\Tr\rho\log_2(2^{nC}\sigma)
 =nC+\Tr\rho\log_2\sigma,
\]
where we used $\Tr\rho=1$. Rearranging proves
$D(\rho\Vert\sigma)\le nC$.
This bound holds for every reference-assisted input, hence
$D_{\ch}(\N_n\Vert\R_\omega^{\otimes n})\le nC$.
Since $\R_\omega^{\otimes n}\in\F_n$, minimizing over the free
alternatives gives $E_n\le nC$. Nonnegativity of relative entropy
gives $E_n\ge0$, proving \eqref{eq:En-bound}.

For the testing lower bound, let $Q$ be any feasible acceptance
effect for the chosen input, so that $\Tr Q\rho\ge1-\eps$.
Equation~\eqref{eq:replacer-domination} implies
\[
 \Tr Q\sigma\ge2^{-nC}\Tr Q\rho
 \ge(1-\eps)2^{-nC}.
\]
The worst-case acceptance probability over $\F_n$ is at least
the acceptance probability of this free replacer. Taking the
infimum over feasible testers therefore gives
$\beta_{\eps,n}\ge(1-\eps)2^{-nC}$.
For the upper bound, choose any unit input vector and
$Q=(1-\eps)I$. Every channel output has trace one, so this
effect accepts both the target and every alternative with
probability $1-\eps$. It is thus feasible and gives
$\beta_{\eps,n}\le1-\eps$, completing \eqref{eq:beta-bound}.

For any feasible tester with input state $\ket\psi$ and effect $Q$, set
$\rho=(\id_R\otimes\N_n)(\ket\psi\bra\psi)$ and
$\sigma_\M=(\id_R\otimes\M)(\ket\psi\bra\psi)$.
Write $p=\Tr Q \rho\ge1-\eps$ and
$t=\sup_{\M\in\F_n}\Tr Q \sigma_\M$.
Let $h_2(p)=-p\log_2p-(1-p)\log_2(1-p)$ denote the binary
entropy, which is at most one.
For each $\M\in\F_n$, put $q=\Tr Q\sigma_\M$.
The binary measurement $\{Q,I-Q\}$ sends $\rho$ and $\sigma_\M$
to the probability distributions $(p,1-p)$ and $(q,1-q)$.
The definition of channel relative entropy and data processing give
\begin{align*}
 D_{\ch}(\N_n\Vert\M)
 &\ge D(\rho\Vert \sigma_\M)\\
 &\ge D((p,1-p)\Vert(q,1-q))\\
 &=p\log_2\frac1q+(1-p)\log_2\frac1{1-q}-h_2(p)\\
 &\ge p\log_2\frac1q-h_2(p).
\end{align*}
The last step drops the nonnegative term
$(1-p)\log_2(1/(1-q))$.
At boundary probabilities these expressions are interpreted by continuity;
if $q=0$, then $p>0$ makes the relative entropy infinite and the
desired lower bound is immediate.
Next, $q\le t\le1$ and $h_2(p)\le1$ imply
\begin{align*}
 p\log_2\frac1q-h_2(p)
 &\ge p\log_2\frac1t-1\\
 &\ge(1-\eps)\log_2\frac1t-1.
\end{align*}

For the fixed feasible tester, $t$ is its worst-case acceptance
probability over the entire family $\F_n$. Thus the lower bound
$(1-\eps)\log_2(1/t)-1$ is independent of the individual
alternative $\M$. Taking the infimum over $\M\in\F_n$ gives
\[
 E_n=\inf_{\M\in\F_n}D_{\ch}(\N_n\Vert\M)
 \ge(1-\eps)\log_2\frac1t-1.
\]
By the definition of $\beta_{\eps,n}$, there is a sequence of
feasible testers whose worst-case acceptance probabilities $t_j$
converge to $\beta_{\eps,n}$. Applying the preceding bound to
each tester and taking $j\to\infty$ yields
\[
 E_n\ge(1-\eps)\log_2\frac1{\beta_{\eps,n}}-1,
\]
which is \eqref{eq:weak-converse}. The lower bound
$\beta_{\eps,n}\ge(1-\eps)2^{-nC}>0$ ensures that the logarithm
is finite and continuous at this limit.
\end{proof}

The tester representation makes the worst-case testing optimization a
compact convex minimax problem.

\begin{lemma}\label{lem:testing-minimax}
For a channel $\mathcal T:\mathrm L(H)\to\mathrm L(K)$, a nonempty compact convex
set $\mathcal F\subseteq\CPTP(H\to K)$, and $0<\eps<1$, fix
a reference system $R\simeq H$. Write
$\rho_{\mathcal T}^\psi=(\id_R\otimes\mathcal T)(\ket\psi\bra\psi)$
and $\sigma_\M^\psi=(\id_R\otimes\M)(\ket\psi\bra\psi)$.
Then
\begin{equation}
 \begin{aligned}
 &\inf_{\substack{\ket\psi,Q:\ \langle\psi|\psi\rangle=1,\ 0\le Q\le I\\
       \Tr Q\rho_{\mathcal T}^\psi\ge1-\eps}}
       \sup_{\M\in\mathcal F}\Tr Q\sigma_\M^\psi\\
 &\hspace{3em}=\max_{\M\in\mathcal F}b_\eps(\mathcal T,\M).
 \end{aligned}
 \label{eq:testing-minimax}
\end{equation}
The infimum ranges over input states and acceptance effects satisfying
the target acceptance constraint. The same pair $(\ket\psi,Q)$ is
used for every $\M\in\mathcal F$, and an optimal pair exists.
\end{lemma}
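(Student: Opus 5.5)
The plan is to reparametrize testers so that the objective becomes bilinear over a compact convex set, and then apply Sion's minimax theorem. Write a pure input as $\ket\psi=(\sqrt{\varrho^T}\otimes I)\ket{\Gamma_H}$ (up to local unitary on $R\simeq H$). Then $\Tr Q(\id_R\otimes\M)(\ket\psi\bra\psi)=\Tr Q(\sqrt{\varrho^T}\otimes I)C_\M(\sqrt{\varrho^T}\otimes I)$. To remove the square roots, introduce the operator $P=(\sqrt{\varrho^T}\otimes I)Q(\sqrt{\varrho^T}\otimes I)$. It satisfies $0\le P\le\varrho^T\otimes I_K$, and every such $P$ arises from some admissible $Q$ (on $\supp\varrho$ take $Q=(\varrho^{-1/2}\otimes I)P(\varrho^{-1/2}\otimes I)$; the support condition forced by $P\le\varrho^T\otimes I$ handles the singular case). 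In these coordinates:
\begin{itemize}
\item the target constraint reads $\Tr PC_{\mathcal T}\ge1-\eps$;
\item the alternative acceptance reads $\Tr PC_\M$.
\end{itemize}
Hence the tester set becomes
\[
\mathcal P=\{(\varrho,P):\varrho\in\Dens(H),\ 0\le P\le\varrho^T\otimes I_K,\ \Tr PC_{\mathcal T}\ge1-\eps\},
\]
which is compact and convex. It is nonempty because $P=(1-\eps)\varrho^T\otimes I$ belongs to it, matching the trivial tester. The payoff $(P,\M)\mapsto\Tr PC_\M$ is bilinear and continuous.

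Next apply Sion's minimax theorem, using compactness of $\mathcal P$ and compactness and convexity of $\mathcal F$ (the Choi map is affine, so $\{C_\M\}$ is compact convex). This gives
\[
\min_{\mathcal P}\max_{\M\in\mathcal F}\Tr PC_\M=\max_{\M\in\mathcal F}\min_{\mathcal P}\Tr PC_\M .
\]
On the left side, existence of the minimizer follows from compactness and lower semicontinuity: a supremum of continuous linear functions is lower semicontinuous. Translating the optimal $(\varrho,P)$ back yields an optimal pair $(\ket\psi,Q)$, so the infimum in \eqref{eq:testing-minimax} is attained. The maximum over $\M$ on the right side exists by the same argument, since a minimum of continuous functions over a compact set is upper semicontinuous in $\M$.

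It remains to identify $\min_{\mathcal P}\Tr PC_\M$ with $b_\eps(\mathcal T,\M)$. Here $b_\eps$ allows arbitrary finite-dimensional references, and I would reduce to $R\simeq H$ by Schmidt compression. Any pure input on $R\otimes H$ has Schmidt rank at most $\dim H$, so an isometry $V:H\to R$ maps it from $H\otimes H$. Replacing $Q$ by $(V^\dagger\otimes I)Q(V\otimes I)$, which still satisfies $0\le\cdot\le I$, preserves both probabilities. The purification remark earlier in the section shows mixed inputs give nothing more.

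The main obstacle is the correspondence between $(\ket\psi,Q)$ and $(\varrho,P)$ when $\varrho$ is singular. Both directions must be checked, including that closedness of $\mathcal P$ does not introduce infeasible limit points. I would handle this by verifying directly that any $P$ with $0\le P\le\varrho^T\otimes I$ is supported in $\supp\varrho^T\otimes K$, and defining $Q$ via the pseudo-inverse on that support and arbitrarily (say $0$) off it.
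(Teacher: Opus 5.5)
Your proposal is correct and follows the paper's proof essentially step by step: the same reparametrization $P=(\sqrt{\varrho^T}\otimes I_K)Q(\sqrt{\varrho^T}\otimes I_K)$ with $0\le P\le\varrho^T\otimes I_K$, the same compact convex tester set, Sion's minimax theorem, and the same semicontinuity arguments for attainment on both sides. Your Schmidt-compression reduction of $b_\eps(\mathcal T,\M)$ to $R\simeq H$ spells out what the paper uses implicitly when it identifies the inner minimum with $b_\eps(\mathcal T,\M)$. One small notational fix: the inverse in your converse map should be $(\varrho^T)^{-1/2}$ rather than $\varrho^{-1/2}$.
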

\begin{proof}
Write the input state as
$\ket\psi=(I_H\otimes\sqrt\rho)\ket{\Gamma_H}$ for some
$\rho\in\Dens(H)$, with the reference isometry absorbed into $Q$.
Its physical input marginal is $\rho$. Using
$(I_H\otimes\sqrt\rho)\ket{\Gamma_H}
=({\sqrt{\rho}}^T\otimes I_H)\ket{\Gamma_H}$,
we can move the input operator to the reference factor.
For any channel $\Phi$, the output is
$({\sqrt{\rho}}^T\otimes I_K)C_\Phi({\sqrt{\rho}}^T\otimes I_K)$.
Define
\[
 T=({\sqrt{\rho}}^T\otimes I_K)Q({\sqrt{\rho}}^T\otimes I_K).
\]
The bounds $0\le Q\le I$ imply $0\le T\le\rho^T\otimes I_K$,
and cyclicity of the trace gives the acceptance probability
$\Tr TC_\Phi$.

Conversely, given $\rho\in\Dens(H)$ and $0\le T\le\rho^T\otimes I_K$,
use the input above and set
$Q=((\rho^T)^{-1/2}\otimes I_K)T((\rho^T)^{-1/2}\otimes I_K)$,
where the inverse is taken on the support and $Q$ is extended by zero.
The bound on $T$ ensures that it is supported on
$\supp(\rho^T)\otimes K$ and that $0\le Q\le I$, so this choice
recovers $T$ and realizes the same acceptance probabilities.
Thus testers are equivalently represented by
\begin{equation}
 \begin{gathered}
 \rho\in\Dens(H),\qquad 0\le T\le\rho^T\otimes I_K,\\
 \Pr(\text{accept}\mid\Phi)=\Tr TC_\Phi.
 \end{gathered}
 \label{eq:tester}
\end{equation}

The feasible set $\mathcal X_\eps$ of pairs $(\rho,T)$ satisfying
\eqref{eq:tester} and $\Tr TC_\mathcal T\ge1-\eps$ is nonempty,
compact, and convex. Nonemptiness follows by taking
$T=(1-\eps)\rho^T\otimes I_K$ for any $\rho\in\Dens(H)$.
The constraints are closed and convex, and $0\le T\le I_{H\otimes K}$
ensures boundedness. Since $\Tr TC_\M$ is continuous and bilinear,
Sion's minimax theorem \cite{Sion} gives
\[
 \min_{(\rho,T)\in\mathcal X_\eps}\max_{\M\in\mathcal F}\Tr TC_\M
 =\max_{\M\in\mathcal F}\min_{(\rho,T)\in\mathcal X_\eps}\Tr TC_\M.
\]
The inner minimum on the right is $b_\eps(\mathcal T,\M)$,
proving \eqref{eq:testing-minimax}. Compactness and continuity
ensure attainment of the tester minimum; the maximum over
$\mathcal F$ is attained because the inner minimum is upper
semicontinuous in $\M$. The realization above gives an optimal
input and effect with $R\simeq H$.
\end{proof}

\subsection{Uniform auxiliary map approximation}

For channels $\mathcal T,\mathcal M:H\to K$ and $t\ge0$, put
\[
 h_t(\mathcal T\Vert\mathcal M)
 =\sup_{\ket\psi,\,0\le Q\le I}(p-tq),
\]
where $p,q$ are the two acceptance probabilities for a common
reference-assisted input and effect. The following one-shot statement
keeps both auxiliary spaces fixed. This is essential for subtracting
the auxiliary maps and representing the residual.
The approximation bound was established in
\cite[Prop.~4.2]{GaoJiLiu}; we include its proof and the domination
consequences used below for completeness.

\begin{lemma}[Testing and auxiliary map approximation]
\label{lem:auxiliary-map-approximation}
Fix Stinespring isometries $V:H\to K\otimes E$ and
$W:H\to K\otimes F$ of channels $\mathcal T$ and $\mathcal M$.
For every $t\ge0$ there is an auxiliary linear map $A:F\to E$ with
\begin{equation}
 \opnorm A\le\sqrt t,\qquad
 \opnorm{V-(I_K\otimes A)W}^2
       \le h_t(\mathcal T\Vert\mathcal M).
 \label{eq:testing-auxiliary-map}
\end{equation}
In particular, if a channel $\mathcal L\cple t\mathcal M$ satisfies
$\dnorm{\mathcal L-\mathcal T}\le\delta$, the squared error in
\eqref{eq:testing-auxiliary-map} is at most $\delta/2$.
If $\mathcal T\cple t\mathcal M$, one can take zero error.
Conversely, the CP map obtained from $(I_K\otimes A)W$ by tracing
out the auxiliary system $E$ is dominated by $\opnorm A^2\mathcal M$.
\end{lemma}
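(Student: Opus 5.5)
The plan is to reduce the operator-norm statement to a minimax over input states, and then solve the fixed-input problem by an Uhlmann-type argument. Write $X_A=V-(I_K\otimes A)W$. For a unit vector $\ket\psi\in R\otimes H$ with input marginal $\rho$, one has $\|(I_R\otimes X_A)\ket\psi\|^2=\Tr\rho^{T}X_A^\dagger X_A$ in the transpose convention of \cref{lem:testing-minimax}. Hence
\[
 \opnorm{X_A}^2=\max_{\rho\in\Dens(H)}\Tr\rho\,X_A^\dagger X_A .
\]
The function $(A,\rho)\mapsto\Tr\rho X_A^\dagger X_A$ is convex and continuous in $A$, since $A\mapsto X_A$ is affine and $Y\mapsto Y^\dagger Y$ is operator convex. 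It is linear in $\rho$. The set $\{\opnorm A\le\sqrt t\}$ is compact and convex, and so is $\Dens(H)$. Sion's theorem \cite{Sion} therefore lets us exchange $\min_A\max_\rho$ for $\max_\rho\min_A$. It thus suffices to show that for each fixed input $\ket\psi$ some $A$ with $\opnorm A\le\sqrt t$ satisfies $\|\ket x-(I\otimes A)\ket y\|^2\le\Tr(\rho_\psi-t\sigma_\psi)_+$. Here $\ket x=(I_R\otimes V)\ket\psi$ and $\ket y=(I_R\otimes W)\ket\psi$ purify the outputs $\rho_\psi$ of $\mathcal T$ and $\sigma_\psi$ of $\mathcal M$ on $R\otimes K$. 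The right side equals $\sup_Q(p-tq)$ for that input, which is at most $h_t(\mathcal T\Vert\mathcal M)$.

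The fixed-input step is where I expect the main difficulty, because $A$ may act only on the environment $F$ and not on $R\otimes K$. I would set $\Delta=(\rho_\psi-t\sigma_\psi)_+$, so that $\rho_\psi\le\Delta+t\sigma_\psi$. Splitting $\rho_\psi$ accordingly, I would try to write $\ket x$ as a "$t\sigma$-branch" plus a residual whose squared norm is at most $\Tr\Delta$. The $t\sigma$-branch should be a purification of an operator dominated by $t\sigma_\psi$. Any purification of such an operator can be written as $(I\otimes A)\ket y$ with $\opnorm A\le\sqrt t$, by the Uhlmann/Douglas factorization. Concretely, an operator $G$ on $R\otimes K$ with $G\sigma_\psi G^\dagger\le t\sigma_\psi$ acting on $\ket y$ can be transported to the $F$ factor through the Schmidt decomposition of $\ket y$, with norm control. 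The remaining task is to choose $A$ so that the cross term $\Rea\langle x|(I\otimes A)|y\rangle$ is large enough. For this I would first try the fidelity-type choice built from the polar decomposition of $\Tr_{RK}\ket y\bra x$, restricted to the $t\sigma$-part. I would then check that the resulting value $1-2\Rea\langle x|(I\otimes A)y\rangle+\|(I\otimes A)y\|^2$ is at most $\Tr\Delta$. If the direct construction proves awkward, the fallback is SDP duality. The fixed-input quantity $\Tr(\rho-t\sigma)_+$ is dual to $\min\{\Tr Z: Z\ge\rho-t\sigma,\ Z\ge0\}$, and the optimal $Z=\Delta$ gives the decomposition directly.

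The remaining claims are short consequences. If $\mathcal L\cple t\mathcal M$ is a channel with $\dnorm{\mathcal L-\mathcal T}\le\delta$, then for every tester $tq\ge q_{\mathcal L}$, the acceptance probability of $\mathcal L$. Hence $p-tq\le p-q_{\mathcal L}\le\delta/2$, where the factor $\delta/2$ comes from the trace-zero remark for two channels in \cref{sec:setting}. This gives $h_t\le\delta/2$. If $\mathcal T\cple t\mathcal M$, the same argument gives $h_t\le0$, and hence zero error. For the converse statement, let $\Psi(X)=\Tr_E[(I\otimes A)WXW^\dagger(I\otimes A^\dagger)]$. Cyclicity of the partial trace over the environment gives $\Psi(X)=\Tr_F[(I\otimes A^\dagger A)^{1/2}WXW^\dagger(I\otimes A^\dagger A)^{1/2}]$. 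Since $A^\dagger A\le\opnorm A^2I_F$, the map $\opnorm A^2\mathcal M-\Psi$ is $X\mapsto\Tr_F[(I\otimes(\opnorm A^2I-A^\dagger A)^{1/2})WXW^\dagger(\cdot)^\dagger]$, which is CP. This establishes the stated domination.
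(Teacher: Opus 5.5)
Your route is correct but organized differently from the paper's. The paper never fixes an input. It writes $h_t(\mathcal T\Vert\mathcal M)$ as a Choi-level semidefinite program over testers $(\rho,T)$ and passes to the dual value $\min\{\opnorm{\Tr_KY}:Y\ge0,\ Y\ge C_{\mathcal T}-tC_{\mathcal M}\}$, which needs strict feasibility and dual attainment. It then splits the Kraus-column matrix of $V$ as $F_V=tC_{\mathcal M}G^+F_V+YG^+F_V$ with $G=tC_{\mathcal M}+Y$. The first piece factors through $F_W$ with a factor of norm at most $\sqrt t$. The residual has Kraus columns $YG^+F_V$, so its squared operator norm is at most $\opnorm{\Tr_KY}$.

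You instead apply Sion's theorem to $(A,\rho)\mapsto\Tr\rho\,X_A^\dagger X_A$, which is convex in $A$ on the compact ball $\opnorm A\le\sqrt t$ and linear in $\rho\in\Dens(H)$. This reduces the problem to one input at a time. There the only duality needed is the Helstrom identity $\sup_{0\le Q\le I}\Tr Q(\rho_\psi-t\sigma_\psi)=\Tr(\rho_\psi-t\sigma_\psi)_+$. You thus trade channel-level semidefinite duality for a convex--concave minimax and state-level duality. In exchange, the paper obtains $A$ explicitly from one dual certificate rather than as a minimizer that exists by compactness. Your arguments for the $\delta/2$ bound, the exact case, and the converse domination are also correct. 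The converse uses $\Tr_E[(I\otimes A)Z(I\otimes A)^\dagger]=\Tr_F[(I\otimes A^\dagger A)Z]$ and the CP map built from $(\opnorm A^2I-A^\dagger A)^{1/2}$.

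The step you leave open, splitting $\ket x$, is the paper's split transported to the fixed input. It finishes the argument with no further cross-term optimization. Take $\Delta=(\rho_\psi-t\sigma_\psi)_+$ and $G=t\sigma_\psi+\Delta\ge\rho_\psi$, and set $\ket{x_0}=(t\sigma_\psi G^+\otimes I_E)\ket x$ and $\ket{x_1}=(\Delta G^+\otimes I_E)\ket x$.
\begin{itemize}
\item $\ket{x_0}+\ket{x_1}=\ket x$, because $\supp\rho_\psi\subseteq\supp G$.
\item For $Z\in\{t\sigma_\psi,\Delta\}$ one has $ZG^+\rho_\psi G^+Z\le ZG^+Z\le Z$. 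This gives $\Tr_E\ket{x_0}\bra{x_0}\le t\sigma_\psi$ and $\|\ket{x_1}\|^2\le\Tr\Delta$.
\item Douglas factorization then writes $\ket{x_0}=(I\otimes A)\ket y$ with $\opnorm A\le\sqrt t$, and the error is exactly $\|\ket{x_1}\|^2$.
\end{itemize}

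You should drop the fidelity-type choice. A scalar multiple of the polar (Uhlmann) isometry of $\Tr_{RK}\ket y\bra x$ does not attain $\Tr(\rho_\psi-t\sigma_\psi)_+$ in general, because the optimal $A$ weights different directions differently. For commuting $\rho_\psi=\sum_ip_i\ket i\bra i$ and $\sigma_\psi=\sum_iq_i\ket i\bra i$, the optimal weight on direction $i$ with $q_i>0$ is $\min\{\sqrt t,\sqrt{p_i/q_i}\}$.
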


\begin{proof}
The tester representation \eqref{eq:tester} and semidefinite duality
give, with $\Delta=C_{\mathcal T}-tC_{\mathcal M}$,
\begin{equation*}
 \begin{aligned}
 h_t(\mathcal T\Vert\mathcal M)
 &=\max_{\rho\in\Dens(H),\,0\le T\le\rho^T\otimes I_K}
             \Tr\Delta T\\
 &=\min_{Y\ge0,\,Y\ge\Delta}\opnorm{\Tr_KY}.
 \end{aligned}
\end{equation*}
This is an instance of the semidefinite approach to completely bounded
norms \cite{WatrousSDP}.
For completeness, a positive multiplier $Y$ for the tester upper
bound and a scalar $\mu$ for $\Tr\rho=1$ give the dual constraints
$Y\ge\Delta$ and $(\Tr_KY)^T\le\mu I$.
Strict feasibility follows by choosing a faithful $\rho$, a sufficiently
small positive scalar tester, and, on the dual side, sufficiently large
scalar $Y,\mu$. A dual sublevel set is compact because
$\Tr Y\le(\dim H)\opnorm{\Tr_KY}$, so a minimizing $Y$ exists.

Let $F_V,F_W$ be the matrices whose columns are the vectorized Kraus
operators of the prescribed dilations. Thus
$F_VF_V^\dagger=C_{\mathcal T}$ and
$F_WF_W^\dagger=C_{\mathcal M}$. Set
$G=tC_{\mathcal M}+Y$ and let $G^+$ denote its Moore--Penrose inverse.
Define
\[
 F_0=tC_{\mathcal M}G^+F_V,\qquad F_1=YG^+F_V.
\]
Since $C_{\mathcal T}\le G$, every column of $F_V$ lies in $\supp G$.
Thus $F_0+F_1=GG^+F_V=F_V$.
Put $Z_0=tC_{\mathcal M}$ and $Z_1=Y$. For $j=0,1$,
\[
 F_jF_j^\dagger
 =Z_jG^+C_{\mathcal T}G^+Z_j
 \le Z_jG^+Z_j\le Z_j.
\]
The first inequality uses $C_{\mathcal T}\le G$; the second follows
from $0\le Z_j\le G$. To justify the latter even for singular $G$,
the operator $(G^+)^{1/2}Z_j^{1/2}$ is a contraction, because its
product with its adjoint is at most the projection onto $\supp G$.
Hence $Z_j^{1/2}G^+Z_j^{1/2}\le I$, and congruence by
$Z_j^{1/2}$ gives $Z_jG^+Z_j\le Z_j$.
In particular,
$F_0F_0^\dagger\le tF_WF_W^\dagger$ and $F_1F_1^\dagger\le Y$.

The first bound implies that the columns of $F_0$ lie in the range
of $F_W$. Taking $D=F_W^+F_0$ therefore gives $F_0=F_WD$, and
\[
 DD^\dagger
 \le tF_W^+F_WF_W^\dagger(F_W^+)^\dagger
 =tP_{(\ker F_W)^\perp}\le tI_F.
\]
Here $P_{(\ker F_W)^\perp}$ is the orthogonal projection onto the
indicated subspace of $F$.
With the column convention above, $A=D^T:F\to E$ is the
auxiliary map and $\opnorm A=\opnorm D\le\sqrt t$.
Let $R=V-(I_K\otimes A)W$. Its Kraus-column matrix is $F_1$, so
$\Tr_KF_1F_1^\dagger=(R^\dagger R)^T$ in our Choi convention.
Consequently,
\[
 \opnorm R^2=\opnorm{\Tr_KF_1F_1^\dagger}
 \le\opnorm{\Tr_KY}
 =h_t(\mathcal T\Vert\mathcal M).
\]
This proves \eqref{eq:testing-auxiliary-map} in the prescribed
auxiliary spaces, with one auxiliary map $A$ for all inputs.

If $\mathcal L\cple t\mathcal M$, every tester satisfies
$p-tq\le p-p_{\mathcal L}\le\delta/2$, since the two target
outputs have equal trace. For exact domination take $\mathcal L=\mathcal T$.
Finally, the Kraus-column matrix of $(I_K\otimes A)W$ is $F_WA^T$;
its product with its adjoint is bounded by
$\opnorm A^2 C_{\mathcal M}$, proving the converse assertion.
\end{proof}

\subsection{Completion of a trace defect}
We will also use a completion estimate in which the free mixture is
weighted by the actual trace defect.

\begin{lemma}[Completion at the cost of the trace defect]
\label{lem:defect-completion}
Assume only that $\F_n$ is convex and contains
$\R_\omega^{\otimes n}$.
Let $\mathcal Q$ be CP and trace nonincreasing, and suppose
$\mathcal Q\cple\lambda\mathcal M$ for $\mathcal M\in\F_n$
and $\lambda\ge1$. Put $D=I-\mathcal Q^\dagger(I)$ and
$d=\opnorm D$. Then there are a channel $\mathcal L$ and
$\mathcal S\in\F_n$ with
\begin{equation*}
 \mathcal L\cple(\lambda+d)\mathcal S,\qquad
 \dnorm{\mathcal L-\N_n}
 \le2\dnorm{\mathcal Q-\N_n}.
\end{equation*}
Here $0\le d\le\min\{1,\dnorm{\mathcal Q-\N_n}\}$.
\end{lemma}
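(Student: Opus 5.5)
We complete $\mathcal Q$ by sending the missing probability weight to the free replacer. Concretely, set
\[
 \mathcal L(X)=\mathcal Q(X)+\Tr(DX)\,\omega^{\otimes n}
 =\mathcal Q(X)+\bigl(\R_\omega^{\otimes n}\circ\Ad_{D^{1/2}}\bigr)(X).
\]
Since $\mathcal Q$ is trace nonincreasing, $\mathcal Q^\dagger(I)\le I$, so $D\ge0$. The added term is then CP as a composition of CP maps. Trace preservation follows from $\mathcal L^\dagger(I)=\mathcal Q^\dagger(I)+D=I$, so $\mathcal L$ is a channel.

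\textbf{Domination.} By the Choi criterion, $D\le dI$ gives
\[
 \Tr(DX)\,\omega^{\otimes n}\cple d\,\R_\omega^{\otimes n}.
\]
Together with $\mathcal Q\cple\lambda\M$, this yields $\mathcal L\cple\lambda\M+d\R_\omega^{\otimes n}=(\lambda+d)\mathcal S$ with
\[
 \mathcal S=\frac{\lambda}{\lambda+d}\M+\frac{d}{\lambda+d}\R_\omega^{\otimes n}.
\]
Convexity puts $\mathcal S$ in $\F_n$. If $d=0$, simply take $\mathcal S=\M$.

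\textbf{Bounds on $d$.} First, $d\le1$ because $0\le\mathcal Q^\dagger(I)\le I$. For the other bound, pick a unit vector $\ket x$ with $\bra x D\ket x=d$. Take the input $\ket x\bra x$, transposed into a maximally entangled form if one prefers a reference-assisted input. The trace of the output difference is
\[
 \Tr(\N_n-\mathcal Q)(\ket x\bra x)=1-\bra x\mathcal Q^\dagger(I)\ket x=d,
\]
and the trace is bounded by the trace norm. Hence $d\le\dnorm{\mathcal Q-\N_n}$.

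\textbf{Error estimate.} The triangle inequality gives
\[
 \dnorm{\mathcal L-\N_n}\le\dnorm{\mathcal Q-\N_n}+\dnorm{\R_\omega^{\otimes n}\circ\Ad_{D^{1/2}}}.
\]
The second map is CP, so its diamond norm is $\opnorm{\Ad_{D^{1/2}}(I)}=\opnorm D=d$, using that the adjoint of the replacer sends $I$ to $I$. Combined with $d\le\dnorm{\mathcal Q-\N_n}$, this gives the factor $2$.

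The only step needing care is $d\le\dnorm{\mathcal Q-\N_n}$, which is the trace-functional argument above. Everything else is routine.
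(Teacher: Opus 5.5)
Your proposal is correct and matches the paper's proof step for step. The paper uses the same completion $\mathcal L=\mathcal Q+\Tr(D\,\cdot)\,\omega^{\otimes n}$, the same free mixture $\mathcal S=(\lambda\M+d\R_\omega^{\otimes n})/(\lambda+d)$, the bound $d\le\dnorm{\mathcal Q-\N_n}$ from the trace of the output difference, and the triangle inequality. Your rewriting of the added term as $\R_\omega^{\otimes n}\circ\Ad_{D^{1/2}}$ only makes explicit its complete positivity and the identity $\dnorm{\mathcal C_D}=d$, which the paper states without comment.
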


\begin{proof}
Set $\mathcal C_D(X)=\Tr(DX)\omega^{\otimes n}$ and
$\mathcal L=\mathcal Q+\mathcal C_D$. Then $\mathcal L$ is CPTP,
$\mathcal C_D\cple d\R_\omega^{\otimes n}$, and
$\dnorm{\mathcal C_D}=d$.
The free channel
\[
 \mathcal S=
 \frac{\lambda\mathcal M+d\R_\omega^{\otimes n}}{\lambda+d}
\]
gives the claimed domination by convexity. For every input state $\rho$,
$\Tr D\rho=\Tr(\N_n-\mathcal Q)(\rho)$ is at most
$\dnorm{\N_n-\mathcal Q}$. Maximizing over $\rho$ bounds $d$;
the triangle inequality proves the error estimate.
\end{proof}

\section{Exponential approximation and the Stein identity}
\label{sec:exponential}
We prove \cref{thm:main,thm:exponential-completion} under the three
conditions in \cref{def:axioms}, using the testing and approximation
tools of \cref{sec:one-shot}.
For a fixed testing tolerance, we start from the lower limit
$b_\eps=\liminf_n a_{\eps,n}/n$, without assuming convergence of either
the testing exponent or $E_n/n$.

The proof has two amplification steps. First,
\cref{prop:free-amplification} turns any unbounded sequence of
approximations with vanishing error into exponentially accurate
approximations at every sufficiently large blocklength, with an
arbitrarily small increase in rate. Next,
\cref{prop:iterative-amplification} uses testing to construct such a
sequence at a rate below a previously available approximation rate.
Starting from exact replacer domination and iterating this improvement
gives exponential approximation at every rate above $b_\eps$.
These approximations force the testing upper limits at all tolerances
to be at most $b_\eps$. The entropy comparison in
\cref{lem:entropy-comparison} then identifies the common ordinary limit.

\subsection{Amplification and exact padding}

The following proposition starts from CPTP approximations to
$\N^{\otimes k}$ at an unbounded sequence of blocklengths $k$,
with vanishing diamond error and a bounded domination rate,
as specified in \eqref{eq:amplification-input}.
Under convexity, tensor closure, and the faithful-replacer
assumption, it produces exponentially accurate approximations at
every sufficiently large blocklength.
The proof of \cref{prop:iterative-amplification} in the next
subsection constructs the required sequence by combining
an auxiliary map approximation obtained from testing with an
approximation already available at a higher domination rate.
That construction uses only \cref{def:axioms}.
The tensor truncation below adapts the fixed-block amplification of
\cite[Lemma~4.5]{GaoJiLiu} to free dominators and exact trace preservation.

\begin{proposition}[Amplification from vanishing error]
\label{prop:free-amplification}
Suppose each $\F_n$ is convex and $\F$ satisfies
\ref{ax:tensor} and \ref{ax:replacer}. Fix a finite $r_0\ge0$.
Assume that, along an unbounded sequence of blocklengths $k$, there
are CPTP maps $\mathcal L_k$, free channels $\mathcal M_k\in\F_k$,
and numbers $\lambda_k\ge1$ such that
\begin{equation}
 \begin{gathered}
 \mathcal L_k\cple\lambda_k\mathcal M_k,\qquad
 \delta_k:=\dnorm{\mathcal L_k-\N_k}\longrightarrow0,\\
 \limsup_{k\to\infty}\frac{\log_2\lambda_k}{k}\le r_0.
 \end{gathered}
 \label{eq:amplification-input}
\end{equation}
The limits in \eqref{eq:amplification-input} are taken along that sequence.
Then for every $S>r_0$ there are free dominators and CPTP
approximations satisfying \eqref{eq:exponential-completion}.
\end{proposition}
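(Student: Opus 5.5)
The plan is to fix one good block and amplify it by tensor powers. I would truncate the resulting expansion to branches with few ``bad'' factors, complete the trace defect, and pad to arbitrary blocklengths with the replacer. Fix $S>r_0$ and set $\eta=S-r_0>0$.

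\emph{Step 1: reduce the fixed block to an auxiliary map.} Choose a small $\theta\in(0,1/2)$, to be fixed later in terms of $\eta$ and $C=\log_2 b-\log_2 w$. Then pick $k$ from the given sequence with $\log_2\lambda_k/k\le r_0+\eta/4$ and $\delta_k$ very small compared with $\theta$. Fix Stinespring isometries $V$ of $\N_k$ and $W$ of $\mathcal M_k$. Apply \cref{lem:auxiliary-map-approximation} with $t=\lambda_k$, using its ``in particular'' clause for $\mathcal L_k\cple\lambda_k\mathcal M_k$. This gives $A$ with $\opnorm A\le\sqrt{\lambda_k}$ and
\[
 V=V_0+E_k,\qquad V_0=(I\otimes A)W,\qquad \opnorm{E_k}\le\sqrt{\delta_k/2}=:\epsilon .
\]
By the converse part of the lemma, $\Tr_E\Ad_{V_0}\cple\lambda_k\mathcal M_k$. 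For the residual I will use
\[
 \Tr_E\Ad_{E_k}\cple2(\N_k+\lambda_k\mathcal M_k)\cple2(2^{kC}+\lambda_k)\,\mathcal P_k ,
\]
where $\mathcal P_k$ is a convex combination of $\R_\omega^{\otimes k}$ and $\mathcal M_k$. The bound $\N_k\cple2^{kC}\R_\omega^{\otimes k}$ follows from the Choi inequality $C_{\N_k}\le b^kI\le b^kw^{-k}\,I\otimes\omega^{\otimes k}$, as in \cref{lem:bounds}.

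\emph{Step 2: tensor expansion and truncation.} Write $V^{\otimes m}=\sum_{T\subseteq[m]}V_0^{\otimes T^c}\otimes E_k^{\otimes T}$ and keep only the terms with $|T|\le\theta m$; call this operator $K_m$. The discarded part has norm at most
\[
 \sum_{j>\theta m}\binom mj(1+\epsilon)^{m}\epsilon^{j}\le\bigl(2(1+\epsilon)\epsilon^{\theta}\bigr)^m ,
\]
which is exponentially small once $\epsilon$ is small in terms of $\theta$. Let $N_T$ be the number of retained terms. By the operator Cauchy--Schwarz inequality $\Ad_{\sum_iX_i}\cple N_T\sum_i\Ad_{X_i}$, the map $\mathcal Q'=\Tr_{E^m}\Ad_{K_m}$ is dominated by
\[
 2^{mh_2(\theta)}\sum_{|T|\le\theta m}\lambda_k^{m-|T|}\bigl(2(2^{kC}+\lambda_k)\bigr)^{|T|}\,\mathcal M_k^{\otimes T^c}\otimes\mathcal P_k^{\otimes T}.
\]
By \ref{ax:tensor} and convexity, this is $\Lambda_m$ times a single free channel in $\F_{km}$, with
\[
 \log_2\Lambda_m\le m\bigl[2h_2(\theta)+\log_2\lambda_k+\theta(kC+2+\log_2\lambda_k)\bigr].
\]
Per channel use this is at most $r_0+\eta/4+O(\theta C+h_2(\theta)/k)$, which is below $r_0+\eta/2$ for small $\theta$ and large $k$. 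Since $\opnorm{K_m}\le1+e^{-\gamma m}$, the rescaled map $\mathcal Q=\mathcal Q'/\opnorm{K_m}^2$ is trace nonincreasing. By \eqref{eq:single-kraus-continuity} it is exponentially close to $\N_{km}$ in diamond norm. \Cref{lem:defect-completion} then gives a CPTP map with the same exponential error and domination constant $\Lambda_m+1$.

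\emph{Step 3: padding.} For arbitrary $n=mk+s$ with $0\le s<k$, tensor with the exact domination $\N_s\cple2^{sC}\R_\omega^{\otimes s}$ from \cref{lem:bounds}. This costs at most $kC$ bits, a constant, and keeps the diamond error unchanged because tensoring with a channel preserves it. For large $n$, the total rate is then below $S$, and the error is $K_Se^{-\gamma_Sn}$ with $\gamma_S$ of order $\gamma/k$.

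The main obstacle is the bookkeeping in Step 2. The residual branch carries no useful free domination except through the replacer at rate $C$, so $\theta$ has to be small enough that $\theta C+h_2(\theta)$ costs less than $\eta/4$. At the same time, $\epsilon^\theta$, which depends on $k$ only through $\delta_k$, must still beat $2(1+\epsilon)$. The order of choices that makes this work is: first $\theta$, then $k$ with $\delta_k$ small enough.
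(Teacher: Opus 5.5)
Your proof is correct. Its architecture matches the paper's: fix one accurate block from the sequence, expand the tensor power of the dilation, keep only terms with at most a fixed fraction of residual factors, complete the trace defect with \cref{lem:defect-completion}, and pad with the replacer.

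The difference is in how you certify that the truncated sum is free-dominated. The paper first mixes the dominator with the replacer, $\widehat{\mathcal M}_k=\tfrac12\mathcal M_k+\tfrac12\R_\omega^{\otimes k}$, and fixes one dilation $W_k$ of it. It then represents both the approximant and the exact $V^{\otimes k}$ by auxiliary maps $A_k,B_k$ relative to that same $W_k$. The residual is therefore $B_k-A_k$, and the whole truncated sum is a single auxiliary map $D_m$ relative to $W_k^{\otimes m}$. The converse clause of \cref{lem:auxiliary-map-approximation}, with the triangle inequality, then gives domination by $\opnorm{D_m}^2\widehat{\mathcal M}_k^{\otimes m}$.

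You never need a common auxiliary space. You dominate the residual branch directly via $\Ad_{V-V_0}\cple2\Ad_V+2\Ad_{V_0}$. You split the truncated sum by Choi-level Cauchy--Schwarz into product terms, dominate each term by a $T$-dependent product of free channels, and recombine by convexity. This is more elementary, and it shows that the fixed-auxiliary-space device, which the paper calls essential, can be replaced by Cauchy--Schwarz in CP order. The price is an extra factor $N_T\le2^{mh_2(\theta)}$, which is harmless at $h_2(\theta)/k$ per use, and a dominator that is a mixture of products rather than a single tensor power. The paper's square-of-a-sum cost is never larger than your count-times-sum-of-squares cost. Its single-auxiliary-map form is also the one reused in \cref{prop:iterative-amplification}.

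One bookkeeping point: the term $\theta\log_2\lambda_k/k\le\theta(r_0+\eta/4)$ means $\theta$ must also be small relative to $r_0$. This is automatic if you first dispose of the trivial case $S\ge C$, so that $r_0<C$.
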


\begin{proof}
Write $C=\log_2 b-\log_2 w$. Equation~\eqref{eq:replacer-domination},
applied to a maximally entangled input, gives
$\N_k\cple2^{kC}\R_\omega^{\otimes k}$.
If $S\ge C$, the exact target and the free replacer prove the claim.
Otherwise choose $r_0<r<S<C$ and set $L=C+1$.

\emph{Step 1: Two auxiliary maps between the same spaces.}
Fix a Stinespring isometry $V:A\to B\otimes E$ of $\N$.
For each sufficiently large $k$ in the given sequence, form the free channel
\[
 \widehat{\mathcal M}_k
 =\tfrac12\mathcal M_k+\tfrac12\R_\omega^{\otimes k}
\]
and fix a dilation $W_k:A^k\to B^k\otimes F_k$ of it.
By \cref{lem:auxiliary-map-approximation}, there are auxiliary maps
$A_k,B_k:F_k\to E^{\otimes k}$ such that
\begin{equation*}
 \begin{gathered}
 X_k=(I\otimes A_k)W_k,\qquad
 V^{\otimes k}=(I\otimes B_k)W_k,\\
 \opnorm{V^{\otimes k}-X_k}\le e_k:=\sqrt{\delta_k/2},\\
 \opnorm{A_k}\le2^{kr/2},\qquad
 \opnorm{B_k}\le2^{kL/2}.
 \end{gathered}
\end{equation*}
For $A_k$ use $\mathcal L_k\cple2\lambda_k\widehat{\mathcal M}_k$
and $\log_2(2\lambda_k)/k<r$ eventually.
For $B_k$ use
$\N_k\cple2^{kC+1}\widehat{\mathcal M}_k$ and $kC+1\le kL$.
Thus $B_k-A_k$ represents the exact residual in the same codomain,
even though the dominator may vary with $k$.

\emph{Step 2: Fix the block and truncate its exact expansion.}
Choose $0<\eta<1$ such that $\bar S=r+\eta(L-r)<S$,
and put $z=4^{1/\eta}$. Choose and then fix $k$ in that sequence large enough that
\begin{equation*}
 e_k\le\frac1{1+z},\qquad
 v:=\bar S+\frac{2\log_2 3}{k}<S.
\end{equation*}
Keep this $k$ fixed for the remainder of the proof; the number $m$
of repeated blocks will tend to infinity.
Abbreviate $X=X_k$ and $H=V^{\otimes k}-X$.
The identity $(X+H)^{\otimes m}=V^{\otimes km}$ is exact.
Let $Z_m$ be its subset expansion restricted to terms with at most
$\lfloor\eta m\rfloor$ factors $H$. Since
$\opnorm X\le1+e_k$ and $\opnorm H\le e_k$, the omitted tail obeys
\begin{align*}
 \opnorm{Z_m-V^{\otimes km}}
 &\le\sum_{j>\eta m}\binom mj(1+e_k)^{m-j}e_k^j\\
 &\le z^{-\eta m}(1+e_k+ze_k)^m\le2^{-m}.
\end{align*}
The second inequality follows by multiplying each omitted term by
$z^{j-\eta m}\ge1$ and including all terms in the binomial sum.

Replace $X$ and $H$ in the same subset sum by $A_k$ and $B_k-A_k$.
This defines $D_m:F_k^{\otimes m}\to E^{\otimes km}$ with
$Z_m=(I\otimes D_m)W_k^{\otimes m}$ and
\begin{align*}
 \opnorm{D_m}
 &\le\sum_{j\le\eta m}\binom mj2^j
                 2^{k[(m-j)r+jL]/2}\\
 &\le3^m2^{km\bar S/2}=2^{kmv/2}.
\end{align*}
We used $\opnorm{B_k-A_k}\le2\,2^{kL/2}$ and $r<L$.
The tensor-power dominator $\widehat{\mathcal M}_k^{\otimes m}$
belongs to $\F_{km}$, although it may contain arbitrary correlations
inside each block.

\emph{Step 3: Pad exactly and restore trace preservation.}
For $n=km+\ell$, $0\le\ell<k$, use
\[
 \mathcal T_n=\widehat{\mathcal M}_k^{\otimes m}
                   \otimes\R_\omega^{\otimes\ell}\in\F_n.
\]
For $\ell=0$ the last factor is omitted. The exact case of
\cref{lem:auxiliary-map-approximation} applied to the replacer gives
a representation of $V^{\otimes\ell}$ with auxiliary map norm
at most $2^{\ell C/2}$.
Tensoring $D_m$ with this auxiliary map yields a representation of
$\widehat V_n$ satisfying
\[
 \opnorm{\widehat V_n-V^{\otimes n}}\le e_n:=2^{-m}.
\]
The corresponding auxiliary map relative to $\mathcal T_n$ has squared
operator norm at most $\lambda_n:=2^{kmv+\ell C}$.
Exact padding does not increase the error.
The contraction $T_n=\widehat V_n/(1+e_n)$ defines the subchannel
$\mathcal Q_n(X)=\Tr_{E^n}T_nXT_n^\dagger$.
It satisfies $\mathcal Q_n\cple\lambda_n\mathcal T_n$ and
\[
 \opnorm{T_n-V^{\otimes n}}\le2e_n,\qquad
 \dnorm{\mathcal Q_n-\N_n}\le4e_n,
\]
by \eqref{eq:single-kraus-continuity} and contraction of the partial trace.
\Cref{lem:defect-completion} gives CPTP $\mathcal L_n$ and free
$\mathcal S_n$ with
\[
 \mathcal L_n\cple(\lambda_n+1)\mathcal S_n,\qquad
 \dnorm{\mathcal L_n-\N_n}\le8e_n.
\]
Since $k$ is fixed and $v<S$,
$\log_2(\lambda_n+1)\le nv+\ell(C-v)+1\le nS$ eventually.
Finally $8e_n=8\,2^{-\lfloor n/k\rfloor}
\le16e^{-(\ln2)n/k}$. This proves the claim for every sufficiently
large integer $n$, with $K_S=16$ and $\gamma_S=(\ln2)/k$.
\end{proof}

\subsection{Iterative reduction of the approximation rate}

Testing gives a uniform auxiliary map approximation with error
strictly below one, which need not tend to zero. The next argument
combines it with an exponentially accurate approximation at a higher
rate. The result improves that higher rate by a fixed fraction.
Iteration reaches every rate above the lower limiting testing exponent.
This adapts the two-rate tensor amplification of
\cite[Lemma~4.4]{GaoJiLiu} to the present composite setting.

\begin{proposition}[From testing to exponential approximation]
\label{prop:iterative-amplification}
Assume \ref{ax:convex}, \ref{ax:tensor}, and \ref{ax:replacer}.
For any fixed $0<\eps<1$, put
$b_\eps=\liminf_n a_{\eps,n}/n$. Then for every $S>b_\eps$
there are free dominators and CPTP approximations satisfying
\eqref{eq:exponential-completion}.
\end{proposition}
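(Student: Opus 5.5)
The plan is to prove the statement by induction on an \emph{achievable rate}. Call $s$ achievable if, for every $S>s$, free dominators and CPTP approximations satisfying \eqref{eq:exponential-completion} exist. By \cref{prop:free-amplification}, $s$ is achievable as soon as one exhibits, along an unbounded sequence of blocklengths, CPTP maps $\mathcal L_k\cple\lambda_k\mathcal M_k$ with $\mathcal M_k\in\F_k$, vanishing diamond error, and $\limsup_k k^{-1}\log_2\lambda_k\le s$. The rate $C=\log_2 b-\log_2 w$ is achievable via $\N_k\cple2^{kC}\R_\omega^{\otimes k}$.

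The key step is to show the following for a fixed $\eta<1$ depending only on $\eps$. If $s>b_\eps$ is achievable and $\tau>0$, then $s'=\eta s+(1-\eta)(b_\eps+\tau)$ is achievable. Iterating gives $s_{i+1}-(b_\eps+\tau)=\eta\,(s_i-(b_\eps+\tau))$, so the achievable rates converge to $b_\eps+\tau$. Since $\tau$ is arbitrary, every $S>b_\eps$ is reached after finitely many steps.

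\emph{Rough approximation from testing.} Choose an unbounded sequence of $k$ with $a_{\eps,k}/k\le b_\eps+\tau/2$. By \cref{lem:testing-minimax}, pick $\mathcal M_k\in\F_k$ with $b_\eps(\N_k,\mathcal M_k)=\beta_{\eps,k}$, and set $t_k=2^{k(b_\eps+\tau)}$, so that $t_k\beta_{\eps,k}\ge1$ for large $k$. For any tester with acceptance probabilities $p,q$, there are two cases. If $p\ge1-\eps$, then $q\ge\beta_{\eps,k}$, so $p-t_kq\le1-t_k\beta_{\eps,k}\le0$. Otherwise $p-t_kq<1-\eps$. Hence $h_{t_k}(\N_k\Vert\mathcal M_k)\le1-\eps$.

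\emph{Common dominator and two representations.} Take the exponential approximation $\mathcal L'_k\cple2^{k\bar s}\mathcal S_k$, where $s<\bar s$ is fixed slightly above $s$ and the error is at most $Ke^{-\gamma k}$. Form the free mixture $\widehat{\mathcal M}_k=\tfrac13(\mathcal M_k+\mathcal S_k+\R_\omega^{\otimes k})$ with a fixed dilation $W_k$. Applying \cref{lem:auxiliary-map-approximation} twice in the same spaces gives $A_k$ and $B_k$ with:
\begin{itemize}
\item $\opnorm{A_k}^2\le3t_k$ and $\opnorm{V^{\otimes k}-(I\otimes A_k)W_k}\le e:=\sqrt{1-\eps}<1$;
\item $\opnorm{B_k}^2\le3\cdot2^{k\bar s}$ and $\opnorm{V^{\otimes k}-(I\otimes B_k)W_k}\le\sqrt{Ke^{-\gamma k}/2}=:e'_k\to0$.
\end{itemize}
Put $X=(I\otimes A_k)W_k$, so that $\opnorm X\le1+e$. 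Put $H=(I\otimes B_k)W_k-X$, which is represented by $B_k-A_k$ and satisfies $\opnorm H\le e+e'_k$.

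\emph{Truncated expansion.} Expand $(X+H)^{\otimes m}=((I\otimes B_k)W_k)^{\otimes m}$ over subsets, and keep only the terms with at most $\eta m$ factors $H$. The weighting trick from \cref{prop:free-amplification} bounds the omitted tail by $z^{-\eta m}(1+e+ze)^m$, up to the negligible effect of $e'_k$. Because $e<1$, one can choose $\eta<1$ and $z$ large with $z^{\eta}>1+e+ze$. Concretely, take $z$ with $e+(1+e)/z<1$ and then $\eta$ close to $1$. This makes the tail exponentially small in $m$.

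\emph{Rate and error of the kept terms.} The kept auxiliary map has squared norm at most $2^{O(m)}\,2^{k[(m-j)(b_\eps+\tau)+j\bar s]}$ with $j\le\eta m$. This gives a rate per channel use of at most $(1-\eta)(b_\eps+\tau)+\eta\bar s+O(1/k)$. The total error is the tail plus $m e'_k(1+e)^{m}$-type terms from replacing $V^{\otimes k}$ by $(I\otimes B_k)W_k$. To make this vanish, I would let $m\to\infty$ slowly while choosing $k=k(m)$ large enough that $e'_k(2+2e)^m\to0$; a diagonal choice works.

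\emph{Completion.} Normalize to a contraction, pass to the subchannel, and apply \cref{lem:defect-completion}, exactly as in Step~3 of \cref{prop:free-amplification}. This yields a vanishing-error sequence at rate $\le\eta\bar s+(1-\eta)(b_\eps+\tau)+o(1)$. \Cref{prop:free-amplification} then upgrades it to exponential accuracy at every larger rate. Since $\bar s$ can be taken arbitrarily close to $s$, this proves the key step.

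The main obstacle is this truncation. Because the rough error $e=\sqrt{1-\eps}$ is not small and $\opnorm X$ may exceed $1$, the tail bound must exploit $e<1$ through the weighted estimate and a choice of $\eta$ close to $1$. The same choice must also control the cross-error from the exponentially accurate block uniformly while both $m$ and $k$ grow. If the diagonal choice of $k(m)$ proves delicate, I would instead first fix a $k$ with $e'_k$ tiny and absorb it into $e$, replacing $e$ by $e+e'_k<1$. The construction then becomes a fixed-block expansion in which only $m$ varies.
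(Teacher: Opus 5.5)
Your proposal is correct and follows the paper's proof essentially step for step: a rough testing bound $h_{t_k}(\N_k\Vert\M_k)\le1-\eps$ from a hardest alternative, two auxiliary maps relative to one common free dominator, a truncated subset expansion in which both $k$ and $m$ grow, defect completion, \cref{prop:free-amplification}, and a geometric contraction of the rate gap. The paper simply takes $m=k$, which works because $\opnorm{(I\otimes B_k)W_k}\le1+e'_k$ gives the sharper telescoping error $me'_k(1+e'_k)^{m-1}$. Your fixed-$k$ fallback, however, would fail as stated: for fixed $k$ the truncation approximates $((I\otimes B_k)W_k)^{\otimes m}$ rather than $V^{\otimes km}$, and this discrepancy does not vanish as $m\to\infty$ in general, unless the residual $V^{\otimes k}-(I\otimes B_k)W_k$ also receives its own auxiliary representation (for example at rate $C$ via the replacer) and its factors are truncated separately.
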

\begin{proof}
Let $P(s)$ denote the existence of the approximations in
\eqref{eq:exponential-completion} at rate $s$, for all sufficiently
large blocklengths. Exact replacer domination proves $P(C)$.
If $S\ge C$ the conclusion is immediate. Otherwise fix
$b_\eps<r<S<C$ and an increasing sequence of integers $k$ along
which $a_{\eps,k}/k\to b_\eps$.

\emph{Step 1: A rough approximation from testing.}
Choose a hardest $\mathcal M_k\in\F_k$ from
\cref{lem:testing-minimax}, and set $t_k=\eps/\beta_{\eps,k}$.
Every tester with target acceptance $p\ge1-\eps$ has alternative
acceptance $q\ge\beta_{\eps,k}$, hence $p-t_kq\le1-\eps$.
For $p<1-\eps$ the same upper bound holds because $q\ge0$.
Thus
\begin{equation}
 h_{t_k}(\N_k\Vert\mathcal M_k)\le1-\eps.
 \label{eq:rough-auxiliary-map-testing}
\end{equation}
Write $d=\sqrt{1-\eps}<1$ and fix a dilation $V$ of $\N$.

\emph{Step 2: One improvement of a higher rate.}
Suppose $r<s\le C$ and $P(s)$ holds. Let
$\mathcal L_k\cple2^{ks}\mathcal T_k$ be its CPTP
approximations, with $\mathcal T_k\in\F_k$ and diamond error
at most $Ke^{-\gamma k}$.
For each sufficiently large $k$ in the selected sequence, fix a
dilation $W_k$ of the common free dominator
\[
 \mathcal U_k=\tfrac12\mathcal M_k+\tfrac12\mathcal T_k.
\]
Since $\mathcal U_k\cpge\mathcal M_k/2$, the testing tail at
coefficient $2t_k$ is at most the tail in
\eqref{eq:rough-auxiliary-map-testing}.
Apply \cref{lem:auxiliary-map-approximation} twice, with the
same prescribed $V^{\otimes k}$ and $W_k$, to obtain
\begin{equation*}
 \begin{gathered}
 X_k=(I\otimes A_k)W_k,\quad Y_k=(I\otimes B_k)W_k,\\
 \opnorm{V^{\otimes k}-X_k}\le d,\qquad
 \opnorm{V^{\otimes k}-Y_k}\le e_k,\\
 e_k:=\sqrt{K/2}e^{-\gamma k/2},\\
 \opnorm{A_k}\le\sqrt{2t_k}\le2^{kr/2},\quad
 \opnorm{B_k}\le\sqrt2\,2^{ks/2}.
 \end{gathered}
\end{equation*}
The bound on $A_k$ holds eventually by the choice of $r$.
Put $H_k=Y_k-X_k$, $A_0=1+d$, and $B_0=(1+d)/2<1$.
For sufficiently large $k$,
\[
 \opnorm{X_k}\le A_0,\qquad \opnorm{H_k}\le d+e_k\le B_0.
\]
Choose $0<\eta<1$ and $\kappa>0$, depending only on $d$, so that
\begin{equation}
 h_2(x)\ln2+(1-x)\ln A_0+x\ln B_0\le-\kappa
 \quad(\eta\le x\le1).
 \label{eq:iterative-tail-choice}
\end{equation}
Such a choice exists by continuity, since the expression equals
$\ln B_0<0$ at $x=1$.

Take $m=k$ and truncate $(X_k+H_k)^{\otimes m}=Y_k^{\otimes m}$
to terms with at most $\lfloor\eta m\rfloor$ residual factors.
Denote the resulting operator by $Z_k$. The bound
$\binom mj\le2^{mh_2(j/m)}$ gives
\begin{align*}
 \opnorm{Z_k-V^{\otimes km}}
 &\le(m+1)e^{-\kappa m}
       +me_k(1+e_k)^{m-1}\\
 &=:u_k\longrightarrow0.
\end{align*}
The second term is the telescoping bound for
$Y_k^{\otimes m}-V^{\otimes km}$; it vanishes because $m=k$
and $e_k$ decays exponentially.

Replacing $X_k,H_k$ by $A_k,B_k-A_k$ in this same subset sum
gives an auxiliary map $D_k$ relative to $W_k^{\otimes m}$.
Set $\bar s=(1-\eta)r+\eta s$. Since $r<s$ and
$\opnorm{B_k-A_k}\le3\,2^{ks/2}$,
\begin{align*}
 \opnorm{D_k}
 &\le\sum_{j\le\eta m}\binom mj3^j
             2^{k[(m-j)r+js]/2}\\
 &\le4^m2^{km\bar s/2}.
\end{align*}
Tensor closure gives $\mathcal U_k^{\otimes m}\in\F_{km}$.
Scale $Z_k$ by $1+u_k$ to get a contraction, then apply
\cref{lem:defect-completion}, exactly as in the proof of
\cref{prop:free-amplification}. This gives CPTP approximations
at lengths $N=k^2$, with error at most $8u_k\to0$ and domination
cost at most
\[
 \lambda_N=2^{N\bar s+4k}+1.
\]
In particular $\limsup (\log_2\lambda_N)/N\le\bar s$ along
this unbounded sequence. By \cref{prop:free-amplification},
$P(s')$ holds for every $s'>\bar s$.
Here both $k$ and $m=k$ grow, and the truncation alone supplies
vanishing error only along the lengths $N=k^2$.
The application of \cref{prop:free-amplification} then fixes one
accurate block and supplies exponential accuracy at all sufficiently
large lengths, including those outside this sequence.

\emph{Step 3: Iterate a fixed contraction of the rate gap.}
The choice of $\eta$ in \eqref{eq:iterative-tail-choice} depends only
on $d=\sqrt{1-\eps}$, so the same $\eta$ works at every iteration.
Set $\theta=(1+\eta)/2\in(\eta,1)$ and
$s_j=r+\theta^j(C-r)$. Starting from $P(s_0)=P(C)$,
Step 2 proves $P(s_{j+1})$ because
\[
 s_{j+1}>(1-\eta)r+\eta s_j.
\]
After finitely many steps, $s_j<S$, proving $P(S)$.
All constants may depend on the chosen rate margin. No quantitative
completion estimate, permutation average, or marginal reduction
is used in this argument.
\end{proof}

\subsection{Entropy from an accurate free-dominated channel}

The following comparison identifies the entropy rate without symmetry
or an entropy minimax argument. Mixing the dominator with the replacer
gives a bound on $\log_2\sigma-\log_2(\mu\otimes I)$, where $\sigma$
is the resulting reference--output state and $\mu$ its reference marginal.
On $\supp\mu\otimes B^n$, this bound is uniform over input states.

\begin{lemma}[Entropy comparison]\label{lem:entropy-comparison}
Suppose $\mathcal L\cple2^z\mathcal S$ for a CPTP map
$\mathcal L$ and $\mathcal S\in\F_n$, and
$\dnorm{\mathcal L-\N_n}\le\delta\le1$. Then
\begin{equation*}
 E_n\le z+1+\frac\delta2
       \bigl[n(C+\log_2(ab))+1\bigr]+h_2(\delta/2).
\end{equation*}
Only convexity of $\F_n$ and membership of
$\R_\omega^{\otimes n}$ in $\F_n$ are needed.
\end{lemma}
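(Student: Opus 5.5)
The plan is to compare the target output with the output of the approximating channel $\mathcal L$. We measure both against a single free output state whose logarithm is controlled uniformly. First replace the dominator by the free channel
\[
 \mathcal S'=\tfrac12\mathcal S+\tfrac12\R_\omega^{\otimes n}\in\F_n,
\]
using convexity and membership of the replacer power. Then $\mathcal L\cple2^{z+1}\mathcal S'$. Since $E_n\le D_{\ch}(\N_n\Vert\mathcal S')$, it suffices to bound $D(\rho\Vert\sigma)$ uniformly over unit inputs $\ket\psi\in R\otimes A^n$, where
\[
 \rho=(\id_R\otimes\N_n)(\ket\psi\bra\psi),\qquad
 \tau=(\id_R\otimes\mathcal L)(\ket\psi\bra\psi),\qquad
 \sigma=(\id_R\otimes\mathcal S')(\ket\psi\bra\psi).
\]
By Schmidt compression we may take $R\simeq A^n$, so that $R\otimes B^n$ has dimension $(ab)^n$. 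All three channels are trace preserving, so $\rho,\tau,\sigma$ share the reference marginal $\mu=\varrho_R$.

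Write
\[
 D(\rho\Vert\sigma)=D(\tau\Vert\sigma)+\bigl[H(\tau)-H(\rho)\bigr]-\Tr(\rho-\tau)\log_2\sigma .
\]
The first term is at most $z+1$, because $\tau\le2^{z+1}\sigma$. This follows exactly as in the operator-monotonicity argument used in the proof of \cref{lem:bounds}.

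For the second term we have $\tfrac12\trnorm{\rho-\tau}\le\delta/2\le1/2$. The Fannes--Audenaert inequality in dimension $(ab)^n$ then gives
\[
 |H(\rho)-H(\tau)|\le\tfrac\delta2\,n\log_2(ab)+h_2(\delta/2),
\]
using that $h_2$ is increasing on $[0,1/2]$.

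The third term is the main point and the place where the replacer mixing is used. On $\supp\mu\otimes B^n$ we have two operator bounds:
\begin{itemize}
 \item $\sigma\ge\tfrac12\mu\otimes\omega^{\otimes n}\ge\tfrac12w^n\,\mu\otimes I$, from the replacer half of $\mathcal S'$;
 \item $\sigma\le b^n\,\mu\otimes I$, from \eqref{eq:dimension-order} applied to $\sigma$.
\end{itemize}
Operator monotonicity of the logarithm turns these into bounds on $\log_2\sigma$. Both bounds are multiples of the same operator $\mu\otimes I$, which commutes with everything needed. So
\[
 G=\log_2\sigma-\log_2(\mu\otimes I)
\]
satisfies $-(n\log_2(1/w)+1)I\le G\le n\log_2b\,I$ on that subspace. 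Its spectral spread is therefore at most $nC+1$. All three states are supported in $\supp\mu\otimes B^n$, so the logarithms may be taken there.

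Since $\rho$ and $\tau$ have the same reference marginal, $\Tr(\rho-\tau)\log_2(\mu\otimes I)=0$. Hence $\Tr(\rho-\tau)\log_2\sigma=\Tr(\rho-\tau)G$. For a traceless Hermitian $\Delta$, $|\Tr\Delta G|\le\tfrac12\trnorm\Delta\,(\lambda_{\max}(G)-\lambda_{\min}(G))$. This gives the bound $\tfrac\delta2(nC+1)$.

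Summing the three contributions gives the stated bound for every input. Taking the supremum over inputs bounds $D_{\ch}(\N_n\Vert\mathcal S')$, and hence $E_n$.

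The only delicate step is the third term. The issue is that $\sigma$ may be singular and input dependent, and $\log_2\sigma$ alone has no uniform bound. Subtracting $\log_2(\mu\otimes I)$, whose contribution cancels because the reference marginals agree, resolves both problems. I would also check the supports carefully: $\tau\le2^{z+1}\sigma$ and $\rho\le2^{nC}\sigma'$ (with $\sigma'$ the replacer output) both place the states inside $\supp\mu\otimes B^n$.
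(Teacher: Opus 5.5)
Your proof is correct and is essentially the paper's argument: the same replacer mixture $\tfrac12\mathcal S+\tfrac12\R_\omega^{\otimes n}$, the same three-term split of $D(\rho\Vert\sigma)$, Fannes--Audenaert for the entropy difference, and the same cancellation of $\log_2(\mu\otimes I)$ via the common reference marginal, which gives a cross term with spectral spread $nC+1$. Your remark about commutation is unnecessary: for a scalar $c>0$ one has $\log_2(cX)=\log_2 c\,I+\log_2 X$, so operator monotonicity alone turns the two sandwich bounds into the stated bounds on $G$.
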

\begin{proof}
Set $\mathcal M=(\mathcal S+\R_\omega^{\otimes n})/2\in\F_n$.
On a common pure input with $R\simeq A^n$, let $\rho,\nu,\sigma$
be the outputs of $\N_n,\mathcal L,\mathcal M$, respectively,
and let $\mu$ be their common reference marginal.
The domination $\nu\le2^{z+1}\sigma$ gives
$D(\nu\Vert\sigma)\le z+1$.
On $\supp\mu\otimes B^n$, both logarithms in
$H=\log_2\sigma-\log_2(\mu\otimes I)$ are well-defined.
The replacer component and \eqref{eq:dimension-order} imply
\[
 \tfrac12w^n\mu\otimes I\le\sigma\le b^n\mu\otimes I.
\]
Operator monotonicity of the logarithm therefore gives
\[
 (n\log_2 w-1)I\le H\le n\log_2 b\,I.
\]
Put $t=\trnorm{\rho-\nu}/2\le\delta/2$.
Since $\rho-\nu$ has zero reference marginal and zero trace,
\[
 |\Tr(\rho-\nu)\log_2\sigma|
 =|\Tr(\rho-\nu)H|\le t(nC+1).
\]
Here a trace-zero Hermitian matrix of trace norm $2t$ pairs with
an operator of spectral width $nC+1$ by at most $t(nC+1)$.
Entropy continuity in dimension at most $(ab)^n$ gives
$|S(\rho)-S(\nu)|\le tn\log_2(ab)+h_2(t)$
\cite[Thm.~5.26]{Watrous}. This also holds trivially when $ab=1$.
Subtracting the two relative entropies and using $t\le\delta/2\le1/2$
proves the displayed bound uniformly over the input. Take the
supremum over inputs and then use $E_n\le D_\ch(\N_n\Vert\mathcal M)$.
\end{proof}

\subsection{Identification of the asymptotic rate}\label{sec:limits}

\begin{proof}[Proof of \cref{thm:main,thm:exponential-completion}]
\emph{Step 1: From a lower limit to a common ordinary limit.}
Fix $0<\eps<1$ and put $b_\eps=\liminf_n a_{\eps,n}/n\in[0,C]$.
For every $S>b_\eps$, \cref{prop:iterative-amplification} gives
free $\mathcal S_n$ and CPTP $\mathcal L_n$, for all sufficiently
large $n$, with domination $2^{nS}$ and error at most
$d_n=K_Se^{-\gamma_S n}$.
Fix $0<\eps'<1$ and a tester $(R,\ket\psi,Q)$ with target
acceptance at least $1-\eps'$. Let $\rho=\rho_n^\psi$ be its
target output and let
$\widetilde\rho=(\id_R\otimes\mathcal L_n)(\ket\psi\bra\psi)$.
The diamond bound gives $\trnorm{\widetilde\rho-\rho}\le d_n$.
Both outputs have trace one, so their difference is traceless.
For an acceptance effect $0\le Q\le I$, this gives the sharper
probability bound
\[
 \Tr Q\widetilde\rho
 \ge\Tr Q\rho-\tfrac12\trnorm{\widetilde\rho-\rho}
 \ge1-\eps'-d_n/2.
\]
Moreover, $\mathcal L_n\cple2^{nS}\mathcal S_n$ implies
$\widetilde\rho\le2^{nS}\sigma_{\mathcal S_n}^\psi$ for this
reference-assisted input. Since $\mathcal S_n\in\F_n$, the
tester's largest acceptance probability over free alternatives obeys
\[
 \begin{aligned}
 q_n&:=\sup_{\M\in\F_n}\Tr Q\sigma_\M^\psi\\
    &\ge\Tr Q\sigma_{\mathcal S_n}^\psi
     \ge2^{-nS}\Tr Q\widetilde\rho
     \ge2^{-nS}(1-\eps'-d_n/2).
 \end{aligned}
\]
Minimizing this uniform bound over feasible testers gives
\[
 \beta_{\eps',n}\ge2^{-nS}(1-\eps'-d_n/2).
\]
For fixed $S$ and $0<\eps'<1$, we have $d_n\to0$, so
$1-\eps'-d_n/2\to1-\eps'>0$.
The factor is therefore positive for all sufficiently large $n$,
and its logarithm converges to the finite constant
$\log_2(1-\eps')$. Dividing this bounded logarithm by $n$ gives
$n^{-1}\log_2(1-\eps'-d_n/2)\to0$.
To apply this observation, take negative logarithms of the testing
bound and use $a_{\eps',n}=-\log_2\beta_{\eps',n}$:
\[
 \frac{a_{\eps',n}}n
 \le S-\frac1n\log_2(1-\eps'-d_n/2).
\]
The second term on the right tends to zero, so
$\limsup_n a_{\eps',n}/n\le S$.
Since this holds for every $S>b_\eps$, letting $S\downarrow b_\eps$
yields
\begin{equation*}
 \limsup_n\frac{a_{\eps',n}}n
 \le\liminf_n\frac{a_{\eps,n}}n
 \qquad(0<\eps,\eps'<1).
\end{equation*}
Taking $\eps'=\eps$ proves ordinary convergence at each tolerance;
interchanging the two tolerances shows that all limits coincide. Hence
\begin{equation*}
 \frac{a_{\eps,n}}n\longrightarrow R
 \qquad(0<\eps<1).
\end{equation*}
The same construction already gives exponential free approximation
at each $S>R$. Neither this convergence argument nor the approximation
construction assumes additivity of channel relative entropy or
near subadditivity of the testing quantity.

\emph{Step 2: Identification of the entropy limit.}
The weak converse and \cref{lem:entropy-comparison} identify the common limit.
The weak converse \eqref{eq:weak-converse} gives
$\liminf_n E_n/n\ge(1-\eps)R$ for every fixed $\eps$.
Letting $\eps\downarrow0$ after this limit yields
$\liminf_n E_n/n\ge R$.
Conversely, for each $S>R$ the approximations from Step 1 have
$z=nS$ and exponentially vanishing $\delta=d_n$ in
\cref{lem:entropy-comparison}. Consequently
$\limsup_n E_n/n\le S$. Let $S\downarrow R$.
Thus the ordinary entropy limit exists and equals $R$.
The bound $0\le R\le C$ follows from \cref{lem:bounds}.

\Cref{thm:exponential-completion} follows from the approximation
already obtained in Step 1 and the identification in Step 2.
\end{proof}

\section{Asymptotic consequences}\label{sec:consequences}
The Stein identity and exponential approximation give three
consequences under the basic assumptions in \cref{def:axioms}:
asymptotic equipartition for trace-preserving smoothing, an
exponential strong converse, and stability under vanishing
diamond-norm perturbations.

\subsection{CPTP-smoothed asymptotic equipartition}

Classically, the smallest constant $\lambda$ such that
$P(x)\le\lambda Q(x)$ for every $x$ is the largest likelihood ratio.
Its logarithm is max-relative entropy. For quantum states, positive
operator order replaces pointwise order:
\[
 D_{\max}(\rho\Vert\sigma)
       =\log_2\inf\{\lambda\ge1:\rho\le\lambda\sigma\}.
\]
The factor $2^{D_{\max}(\rho\Vert\sigma)}$ bounds every measurement acceptance ratio
\cite{Datta}. For channels, complete positivity imposes the same
bound simultaneously on every reference-assisted input. Define
the optimized max-relative entropy of a channel by
\begin{equation}
 D_{\max}(\mathcal L\Vert\F_n)
 =\inf_{\mathcal S\in\F_n}\log_2
       \inf\{\lambda\ge1:\mathcal L\cple\lambda\mathcal S\}.
 \label{eq:resource-max}
\end{equation}
The restriction $\lambda\ge1$ follows already by taking traces
on a normalized input. If $\mathcal L\cple\lambda\mathcal S$
and $\lambda>1$, the map
$\mathcal R=(\lambda\mathcal S-\mathcal L)/(\lambda-1)$ is a
channel, so
\[
 \mathcal S=\lambda^{-1}\mathcal L+(1-\lambda^{-1})\mathcal R.
\]
Thus $2^{-D_{\max}}$ describes the largest possible channel weight
inside a free dominator; the residual channel need not be free.
For $\delta\ge0$ define its CPTP-smoothed value
\begin{equation}
 Z_n^\delta=
 \inf_{\substack{\mathcal L\in\CPTP(A^n\to B^n)\\
                  \dnorm{\mathcal L-\N_n}\le\delta}}
             D_{\max}(\mathcal L\Vert\F_n).
 \label{eq:smoothed-resource-max}
\end{equation}
Smoothing means allowing approximation in the specified metric,
not averaging channels or smoothing each output state separately.
The same map $\mathcal L$ must approximate $\N_n$ on every
reference-assisted input. All feasible approximations here are
trace preserving.
The quantities in \eqref{eq:resource-max} and
\eqref{eq:smoothed-resource-max} are nonnegative and finite:
the Choi operator of every channel is positive with trace $a^n$,
so $C_{\mathcal L}\le a^nI$, whereas
$C_{\R_\omega^{\otimes n}}\ge w^n I$.
Thus $\mathcal L\cple(a/w)^n\R_\omega^{\otimes n}$.
These infima are attained. Indeed, the domination constraint is
closed in Choi operators, and the displayed uniform bound allows
$\lambda$ to be restricted to the compact interval $[1,(a/w)^n]$.
The free-channel set and the diamond ball of CPTP approximations
are compact as well.

The terminology asymptotic equipartition originates in classical
typicality: for independent samples from $P$, the quantity
$-n^{-1}\log_2P(X_1,\ldots,X_n)$ converges in probability to
the Shannon entropy. In relative-entropy formulations, smoothing
removes the effect of exceptional likelihood ratios, and the
normalized smoothed maximum converges to an entropy rate. The
next theorem establishes this form of equipartition for the
channel resource quantity in \eqref{eq:smoothed-resource-max}.

\begin{theorem}[CPTP-smoothed resource AEP]\label{thm:resource-AEP}
Under the assumptions of \cref{thm:main}, if
$0<\delta_n\le\bar\delta<2$ and
$\log_2(1/\delta_n)=o(n)$, then
\begin{equation}
 \lim_{n\to\infty}\frac{Z_n^{\delta_n}}n=R_\F(\N).
 \label{eq:resource-AEP}
\end{equation}
This includes all fixed radii in $(0,2)$, inverse-polynomial errors,
and $\delta_n=2^{-n^\theta}$ for every $0<\theta<1$.
\end{theorem}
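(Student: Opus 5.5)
The proof splits into an achievability bound $\limsup_n Z_n^{\delta_n}/n\le R_\F(\N)$ and a converse $\liminf_n Z_n^{\delta_n}/n\ge R_\F(\N)$.

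\emph{Upper bound.} We use \cref{thm:exponential-completion}. For every $S>R_\F(\N)$ and all large $n$, it supplies CPTP $\mathcal L_n$ and free $\mathcal S_n$ with $\mathcal L_n\cple2^{nS}\mathcal S_n$ and $\dnorm{\mathcal L_n-\N_n}\le K_Se^{-\gamma_S n}$. Since $\log_2(1/\delta_n)=o(n)$, we have $\delta_n\ge2^{-o(n)}$, and this eventually exceeds $K_Se^{-\gamma_S n}$. So $\mathcal L_n$ is feasible in \eqref{eq:smoothed-resource-max}, which gives $Z_n^{\delta_n}\le nS$. Letting $S\downarrow R_\F(\N)$ yields the upper bound. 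This is the only place where the subexponential lower bound on $\delta_n$ is used.

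\emph{Lower bound.} We convert smoothed domination into a testing bound and then apply \cref{thm:main}.
\begin{enumerate}
\item Let $\mathcal L$ attain $Z_n^{\delta_n}$, with $\mathcal L\cple\lambda\mathcal S$ for some $\mathcal S\in\F_n$ and $\log_2\lambda=Z_n^{\delta_n}$.
\item Fix $\eps\in(0,1)$ with $\eps>\bar\delta/2$; such $\eps$ exists because $\bar\delta<2$. Take an optimal tester $(\ket\psi,Q)$ for $\beta_{\eps,n}$, so the target acceptance is at least $1-\eps$.
\item Both outputs are normalized, so their difference is traceless. Hence $\mathcal L$ accepts with probability at least $1-\eps-\delta_n/2\ge1-\eps-\bar\delta/2$. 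This is wrong-signed: we need a positive lower bound on the acceptance of $\mathcal L$.
\item Instead, choose $\eps<1-\bar\delta/2$. This works since $\bar\delta/2<1$, so any small $\eps>0$ qualifies. Then $\mathcal L$ accepts with probability at least $c:=1-\eps-\bar\delta/2>0$.
\item Domination gives $\Tr Q\sigma_{\mathcal S}^\psi\ge\lambda^{-1}c$. Therefore $\beta_{\eps,n}\ge\sup_{\M}\Tr Q\sigma_\M^\psi\ge2^{-Z_n^{\delta_n}}c$.
\item Taking logarithms, $Z_n^{\delta_n}\ge a_{\eps,n}+\log_2c$. Dividing by $n$ and using \cref{thm:main} gives $\liminf_n Z_n^{\delta_n}/n\ge R_\F(\N)$.
\end{enumerate}
This direction uses only the upper bound $\delta_n\le\bar\delta<2$. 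It also relies on the $\eps$-independence of the limit, which lets us take $\eps$ small.

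There is a subtlety in step 5. The Stein tester is optimal for the worst case over $\F_n$, and we bound that worst case from below by the particular free dominator $\mathcal S$; this is legitimate because the inequality points the right way. The main obstacle is making sure the smoothing radius can approach $2$ without destroying the acceptance bound. The traceless-difference factor $1/2$ in step 3 is exactly what permits any $\bar\delta<2$.

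Finally, the examples listed in the statement satisfy both hypotheses. Fixed $\delta$ and inverse-polynomial $\delta_n$ have $\log_2(1/\delta_n)=O(\log n)$, and $\delta_n=2^{-n^\theta}$ has $\log_2(1/\delta_n)=n^\theta=o(n)$; all of these are bounded by some $\bar\delta<2$.
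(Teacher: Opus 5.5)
Your proposal is correct and matches the paper's proof. The upper bound comes from \cref{thm:exponential-completion} once $K_Se^{-\gamma_S n}\le\delta_n$, and the lower bound from the testing inequality $a_{\eps,n}+\log_2(1-\eps-\bar\delta/2)\le Z_n^{\delta_n}$ with a fixed $\eps<1-\bar\delta/2$ (the paper takes $\eps=(1-\bar\delta/2)/2$), combined with \cref{thm:main}; the aborted choice $\eps>\bar\delta/2$ in your step 2 should simply be deleted.
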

\begin{proof}
For $0<\eps<1$ and $0\le\delta<2(1-\eps)$, suppose
$\mathcal L\cple2^z\mathcal S$ is feasible in
\eqref{eq:smoothed-resource-max}. Every test with target acceptance
at least $1-\eps$ accepts $\mathcal L$ with probability at least
$1-\eps-\delta/2$. The factor $1/2$ follows from exact trace
preservation and the use of the full diamond norm.
Its worst-case alternative acceptance is therefore at least
$2^{-z}(1-\eps-\delta/2)$. Minimizing over tests gives
$\beta_{\eps,n}\ge2^{-z}(1-\eps-\delta/2)$.
Taking logarithms and minimizing over the feasible approximations
gives the general testing lower bound
\begin{equation}
 a_{\eps,n}+\log_2(1-\eps-\delta/2)\le Z_n^\delta.
 \label{eq:AEP-lower}
\end{equation}

For the limit choose $\eps=(1-\bar\delta/2)/2>0$.
The lower logarithmic term is bounded uniformly in $n$, so
\cref{thm:main} gives $\liminf_n Z_n^{\delta_n}/n\ge R_\F(\N)$.
For any $S>R_\F(\N)$, \cref{thm:exponential-completion} gives
CPTP approximations at cost at most $nS$ and error at most
$K_Se^{-\gamma_S n}$. The condition on $\delta_n$ ensures that
$K_Se^{-\gamma_S n}\le\delta_n$ eventually. These approximations
are feasible, giving $\limsup_n Z_n^{\delta_n}/n\le S$.
Let $S\downarrow R_\F(\N)$.
\end{proof}

Three optimization problems should be distinguished. A prescribed
dominator $\M^{\otimes n}$ with trace-preserving smoothing, the same
dominator with trace-nonincreasing smoothing, and the free-family
optimization \eqref{eq:smoothed-resource-max} have different feasible
sets. Gour's counterexample \cite{Gour} concerns the first of these;
allowing a subchannel changes the normalization constraint. Here
we keep exact trace preservation and optimize the free dominator.
In the proof of this AEP, the optimization over the free dominator in
\eqref{eq:resource-max} permits tensor amplification and completion
of the trace defect.
The closure axioms construct a suitable dominator at each
blocklength, while the smoothing preserves the channel's exact
trace constraint. Equation~\eqref{eq:resource-AEP} identifies the
asymptotic resource cost of CP domination with the Stein rate.

\subsection{Operational threshold}
For an arbitrary $n$-use parallel tester, let $p_n$ denote its
acceptance probability on $\N_n$ and let $q_n$ denote its largest
acceptance probability over $\F_n$.
The pair $(p_n,q_n)$ describes its performance at any
type-I error.

\begin{theorem}[Operational threshold]\label{thm:threshold}
Under the assumptions of \cref{thm:main}, there exists a sequence of uniform testers with
\[
 p_n\longrightarrow1,\qquad
 -\frac1n\log_2q_n\longrightarrow R_\F(\N).
\]
Conversely, fix $r>R_\F(\N)$. There are $K_r,c_r>0$ such that
every parallel $n$-use tester satisfies
\begin{equation*}
 q_n\le2^{-nr}\quad\Longrightarrow\quad p_n\le K_r2^{-c_r n}.
\end{equation*}
The constants are independent of the tester and its reference.
In particular, exceeding the Stein rate forces the type-I error
to tend to one.
\end{theorem}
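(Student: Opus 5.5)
For the achievability half I will use \cref{thm:main} and a diagonal argument over tolerances. Take a sequence $\eps_j\downarrow0$. For each $j$, \cref{thm:main} gives $a_{\eps_j,n}/n\to R_\F(\N)$. By \cref{lem:testing-minimax} there is an optimal uniform tester $(\ket\psi,Q)$ with $p_n\ge1-\eps_j$ and $q_n=\beta_{\eps_j,n}$. Choose thresholds $N_1<N_2<\cdots$ such that for $n\ge N_j$ we have $|a_{\eps_j,n}/n-R_\F(\N)|\le1/j$. For $N_j\le n<N_{j+1}$, use the optimal tester at tolerance $\eps_j$. Then $p_n\ge1-\eps_j\to1$ and $|-\frac1n\log_2q_n-R_\F(\N)|\le1/j\to0$. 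For small $n$ any feasible tester works, since finitely many terms do not affect the limits.

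For the converse, fix $r>R_\F(\N)$ and pick $S$ with $R_\F(\N)<S<r$, say $S=(R_\F(\N)+r)/2$. \Cref{thm:exponential-completion} gives, for $n\ge n_0$, free $\mathcal S_n$ and CPTP $\mathcal L_n\cple2^{nS}\mathcal S_n$ with $\dnorm{\mathcal L_n-\N_n}\le K_Se^{-\gamma_Sn}$. Take any tester $(R,\ket\psi,Q)$ with $q_n\le2^{-nr}$. The reduction at the start of the testing subsection lets us assume a pure input, since purifying and extending $Q$ by the identity leaves $p_n,q_n$ unchanged. Let $\widetilde\rho$ be the output of $\mathcal L_n$. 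Complete positivity of the domination gives $\widetilde\rho\le2^{nS}\sigma^\psi_{\mathcal S_n}$, so
\[
 \Tr Q\widetilde\rho\le2^{nS}\Tr Q\sigma^\psi_{\mathcal S_n}\le2^{nS}q_n\le2^{-n(r-S)}.
\]
The diamond bound, with the factor $1/2$ because both maps are channels, gives $p_n\le\Tr Q\widetilde\rho+\tfrac12K_Se^{-\gamma_Sn}$. Hence $p_n\le2^{-n(r-S)}+K_Se^{-\gamma_Sn}$ for $n\ge n_0$. Take $c_r=\min\{r-S,\gamma_S/\ln2\}$. Since $p_n\le1$, choosing $K_r\ge\max\{1+K_S,\,2^{c_rn_0}\}$ also covers $n<n_0$. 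None of these constants depends on the tester or the reference, because the dominator and approximation are independent of both.

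The final sentence of the statement follows directly: any tester sequence with $\liminf_n-\frac1n\log_2q_n>R_\F(\N)$ eventually satisfies $q_n\le2^{-nr}$ for some $r>R_\F(\N)$. Then $p_n\to0$, so the type-I error tends to one.

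The only point needing care is the diagonal selection in the achievability half. Uniformity of convergence across $\eps$ is not available, so the thresholds $N_j$ must be chosen increasing, one tolerance at a time. The converse is a direct consequence of \cref{thm:exponential-completion}, and I expect it to present no obstacle.
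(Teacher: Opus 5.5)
Your proposal is correct and follows essentially the same route as the paper. For achievability you make a diagonal selection of optimal uniform testers from \cref{lem:testing-minimax} at tolerances tending to zero (the paper takes $\eps_j=1/j$); for the converse you use the bound $p_n\le d_n/2+2^{nS}q_n$ from \cref{thm:exponential-completion}, with $c_r=\min\{r-S,\gamma_S/\ln2\}$ and an enlarged prefactor covering the finitely many small $n$.
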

\begin{proof}
For each $j\ge2$, \cref{thm:main} supplies an $N_j$ such that
$|a_{1/j,n}/n-R_\F(\N)|\le1/j$ for all $n\ge N_j$.
Increase the $N_j$ so that they are strictly increasing and
$N_j\ge j$.
For $n\ge N_2$ use an optimal uniform tester at error
$1/j(n)$, where $j(n)=\max\{j:N_j\le n\}$.
The optimum exists by \cref{lem:testing-minimax};
$p_n\ge1-1/j(n)\to1$ and its exponent converges to $R_\F(\N)$.

For the converse, fix $R_\F(\N)<S<r$ and apply
\cref{thm:exponential-completion}. A diamond error at most
$d_n=K_Se^{-\gamma_S n}$ changes acceptance by at most $d_n/2$,
since both maps are trace preserving. CP domination therefore gives
\begin{equation*}
 p_n\le d_n/2+2^{nS}q_n .
\end{equation*}
If $q_n\le2^{-nr}$, this is at most
$(K_S/2+1)2^{-c_rn}$ for all sufficiently large $n$, with
$c_r=\min\{\gamma_S/\ln2,r-S\}>0$.
Increasing the prefactor covers the finitely many remaining $n$.
\end{proof}

\subsection{Stability under target perturbations}
The same exponent persists under asymptotically vanishing
perturbations of the target sequence.

\begin{corollary}[Diamond-norm stability]\label{cor:target-stability}
Assume the hypotheses of \cref{thm:main}.
Let $\mathcal T_n:\mathrm L(A^n)\to\mathrm L(B^n)$ be arbitrary CPTP maps with
$\dnorm{\mathcal T_n-\N_n}\to0$.
Let $\widehat\beta_{\eps,n}$ be defined as in \eqref{eq:beta},
with target $\mathcal T_n$ and the same alternatives $\F_n$.
For every fixed $0<\eps<1$,
\begin{equation*}
 \lim_n-\frac1n\log_2\widehat\beta_{\eps,n}=R_\F(\N).
\end{equation*}
The approximating targets $\mathcal T_n$ may have arbitrary
correlations across uses.
\end{corollary}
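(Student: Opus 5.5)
We prove the two inequalities $\liminf_n -\frac1n\log_2\widehat\beta_{\eps,n}\ge R_\F(\N)$ and $\limsup_n -\frac1n\log_2\widehat\beta_{\eps,n}\le R_\F(\N)$ separately. In each direction we compare testers for $\mathcal T_n$ with testers for $\N_n$, using the elementary fact that a diamond distance $d_n:=\dnorm{\mathcal T_n-\N_n}$ between two channels changes any tester's target acceptance by at most $d_n/2$. The alternatives $\F_n$ and the worst-case type-II quantity are the same in both problems, so only the feasibility constraint moves.

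\emph{Upper bound on the exponent.} Fix $S>R_\F(\N)$ and take $\mathcal L_n\cple2^{nS}\mathcal S_n$ from \cref{thm:exponential-completion}, with $\dnorm{\mathcal L_n-\N_n}\le K_Se^{-\gamma_S n}$. Then
\[
 \dnorm{\mathcal L_n-\mathcal T_n}\le d_n+K_Se^{-\gamma_Sn}=:d_n'\to0 .
\]
Take any tester feasible for target $\mathcal T_n$ at tolerance $\eps$. Its acceptance of $\mathcal L_n$ is at least $1-\eps-d_n'/2$, and CP domination gives worst-case free acceptance at least $2^{-nS}(1-\eps-d_n'/2)$. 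This is exactly the argument of Step~1 of the proof in \cref{sec:limits}, with $\N_n$ replaced by $\mathcal T_n$. Hence
\[
 \limsup_n -\tfrac1n\log_2\widehat\beta_{\eps,n}\le S,
\]
and we then let $S\downarrow R_\F(\N)$.

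\emph{Lower bound on the exponent.} Fix $\eps'$ with $0<\eps'<\eps$. By \cref{lem:testing-minimax}, take an optimal uniform tester for $\N_n$ at tolerance $\eps'$, with worst-case type-II error $\beta_{\eps',n}$. For $\mathcal T_n$ its type-I error is at most $\eps'+d_n/2$, which is at most $\eps$ once $d_n\le2(\eps-\eps')$. So for all large $n$ it is feasible for the perturbed problem, and $\widehat\beta_{\eps,n}\le\beta_{\eps',n}$. By \cref{thm:main} at tolerance $\eps'$, the exponent of $\beta_{\eps',n}$ tends to $R_\F(\N)$, which gives
\[
 \liminf_n -\tfrac1n\log_2\widehat\beta_{\eps,n}\ge R_\F(\N).
\]

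The only point needing care is that the tolerance shift must be absorbed. In the lower bound we use a strictly smaller $\eps'$ and rely on the error-independence in \cref{thm:main}. In the upper bound the factor $1-\eps-d_n'/2$ stays bounded away from zero because $d_n'\to0$, so its logarithm divided by $n$ vanishes. I do not expect a genuine obstacle here. The arbitrary correlations in $\mathcal T_n$ play no role, since only the diamond distance to $\N_n$ and to $\mathcal L_n$ enters.
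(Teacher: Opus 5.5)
Your proof is correct. The lower bound on the exponent is exactly the paper's argument: a tester feasible for $\N_n$ at a smaller tolerance transfers to $\mathcal T_n$, giving $\widehat\beta_{\eps,n}\le\beta_{\eps',n}$.

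For the upper bound you take a different route. You invoke \cref{thm:exponential-completion} and rerun Step~1 of the proof in \cref{sec:limits} with $\mathcal T_n$ as the target, which gives $\widehat\beta_{\eps,n}\ge2^{-nS}(1-\eps-d_n'/2)$. The paper instead uses the same tester transfer in the reverse direction. Any tester feasible for $\mathcal T_n$ at tolerance $\eps$ is feasible for $\N_n$ at tolerance $\eps+u$ once the diamond distance is at most $u$. Hence $\beta_{\eps+u,n}\le\widehat\beta_{\eps,n}$, and the tolerance-independence of the limit in \cref{thm:main} finishes the argument.

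The paper's sandwich $\beta_{\eps+u,n}\le\widehat\beta_{\eps,n}\le\beta_{\eps-u,n}$ is symmetric. It needs only the existence and $\eps$-independence of the Stein limit. Your version calls on the stronger exponential free-channel approximation, which gains nothing for this statement. In fact, your own lower-bound argument, run with a tolerance $\eps'>\eps$ on the other side, would already have given the upper bound without \cref{thm:exponential-completion}.
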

\begin{proof}
Choose any $u>0$ with $u<\min\{\eps,1-\eps\}$.
Eventually the diamond distance is at most $u$, so transfer of
the same input state and effect between the two targets gives
\[
 \beta_{\eps+u,n}\le
 \widehat\beta_{\eps,n}\le\beta_{\eps-u,n}.
\]
For the first inequality every test feasible for $\mathcal T_n$
is feasible for $\N_n$ at error $\eps+u$; the other direction
uses a test feasible for $\N_n$ at error $\eps-u$.
Both bounding exponents tend to $R_\F(\N)$ by \cref{thm:main}.
\end{proof}

\section{Quantitative finite-block completion}\label{sec:quantitative}

The asymptotic results give exponential approximation at any rate
strictly above $R_\F(\N)$. We now prove an explicit bound at a
specified blocklength and accuracy, with domination cost measured
against the finite-block testing quantity $a_{\eps,n}$.
This refinement uses permutation invariance and marginal closure.

\subsection{Additional assumptions}
Write $S_n$ for the permutation group of $[n]$. For $\pi\in S_n$,
let $U_\pi^H$ permute the factors of $H^n$ according to $\pi$.

\begin{definition}[Additional conditions for quantitative completion]
\label{def:quantitative-axioms}
In addition to \cref{def:axioms}, consider the following conditions:
\begin{enumerate}[label=\textup{(F\arabic*)},ref=F\arabic*,start=4]
\item\label{ax:permutation} For every $\pi\in S_n$,
$\Ad_{U_\pi^B}\circ\M\circ\Ad_{(U_\pi^A)^{\dagger}}\in\F_n$ whenever
$\M\in\F_n$.
\item\label{ax:marginal} There is a fixed positive-definite
$\tau\in\Dens(A)$ such that, for every $n\ge2$ and $\M\in\F_n$,
\begin{equation*}
 X\longmapsto\Tr_{B_n}\M(X\otimes\tau)
 \quad\hbox{belongs to }\F_{n-1}.
\end{equation*}
\end{enumerate}
Iteration of \ref{ax:marginal} gives
\begin{equation}
 \begin{gathered}
 X\longmapsto
 \Tr_{B_{m+1}\cdots B_n}\M(X\otimes\tau^{\otimes(n-m)})\in\F_m,\\
 1\le m<n.
 \end{gathered}
 \label{eq:iterated-marginal}
\end{equation}
Together with \ref{ax:permutation}, this gives the analogous statement
for every discarded subset, after relabeling the remaining sites in
their original order. The input state $\tau$ and output state
$\omega$ are fixed across blocklengths and may be different.
\end{definition}

\begin{remark}[The marginal assumption]\label{rem:marginal-scope}
Condition \ref{ax:marginal} permits reduction of a free block
channel by inserting $\tau$ and discarding its output.
We use it for weighted discarding and to replace selected inputs
and outputs in the free multiplier construction.
The asymptotic results in \cref{sec:exponential,sec:consequences}
hold under the three basic assumptions alone.
\end{remark}

\subsection{Free-channel domination}
\begin{theorem}[Free-channel domination]\label{thm:completion}
Let $\F$ be admissible and also satisfy
\ref{ax:permutation} and \ref{ax:marginal}. For every fixed
$0<\eps<1$ there is a constant $0\le K_\eps<\infty$ such that, for
every $n\ge1$ and $0<\delta\le1/16$, there are
$\mathcal S_n\in\F_n$ and $\mathcal L_n\in\CPTP(A^n\to B^n)$ with
\begin{align}
 \mathcal S_n&\cpge
 \beta_{\eps,n}\,2^{-g_\eps(n,\delta)}\mathcal L_n,
 \label{eq:completion-order}\\
 \dnorm{\mathcal L_n-\N_n}&\le\delta.
 \notag
\end{align}
Here $g_\eps(n,\delta)=K_\eps n^{2/3}\log_2((n+1)/\delta)$.
The constant depends only on the local dimensions, the minimum
eigenvalues of $\tau$ and $\omega$, and $\eps$.
It is independent of $n$, $\delta$, the target channel, and the
particular free family.
\end{theorem}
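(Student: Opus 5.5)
The plan is to combine the one-shot testing tools of \cref{sec:one-shot} with a finite-block version of the truncated tensor expansion. The new ingredient is a local correction that replaces the exponential tail by an $n^{2/3}$ overhead.

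\emph{Step 1: reduction to an isometric branch with a rough approximation.} I apply \cref{lem:testing-minimax} to $\N_n$ and $\F_n$ to obtain a hardest alternative $\M_n\in\F_n$. By the argument of Step 1 in \cref{prop:iterative-amplification} with $t=\eps/\beta_{\eps,n}$, this gives $h_t(\N_n\Vert\M_n)\le1-\eps$. By \ref{ax:permutation} and convexity I replace $\M_n$ by its permutation average. This does not increase $h_t$, because $\N_n$ is permutation covariant and $h_t$ is convex in the alternative. \Cref{lem:auxiliary-map-approximation} then yields an auxiliary map $A$ with $\opnorm A^2\le\eps/\beta_{\eps,n}$ and
\[
 \opnorm{V^{\otimes n}-(I\otimes A)W}\le\sqrt{1-\eps}<1 .
\]
Thus the branch $X=(I\otimes A)W$ has positive overlap with $V^{\otimes n}$: $\Rea\,(V^{\otimes n})^\dagger X\ge c_\eps I$ with $c_\eps=1-\sqrt{1-\eps}>0$ (up to the usual polar adjustment). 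A general target is handled by comparison with an auxiliary extension, treating $V$ as an isometric channel into $B\otimes E$. This reduces everything to the isometric case, at the cost of dimension factors absorbed into $K_\eps$.

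\emph{Step 2: local operator corrections via weighted discarding.} This is where the main work lies. Positive overlap alone only gives constant accuracy. To upgrade it to accuracy $\delta$, I partition $[n]$ into a kept set of size $n-\ell$ and $\ell\approx n^{2/3}$ sites. On the kept set, \ref{ax:marginal} (via \eqref{eq:iterated-marginal}) with the faithful input $\tau$ turns $\M_n$ into a free channel on fewer sites. Permutation averaging makes the reduced operator $(V^{\otimes n})^\dagger X$, compressed to $\tau$-weighted marginals, close to a multiple of the identity plus a symmetric sum of local terms. I then correct it with a polynomial in $(V^{\otimes n})^\dagger X$ of degree $O(n^{1/3}\log((n+1)/\delta))$, a truncated Neumann-type inverse of the positive-overlap operator. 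Each monomial is a product of at most that many local factors. Each factor is dominated by a free channel at cost $O(\log(1/(w\lambda_{\min}(\tau))))$ per affected site, so the total domination cost is $n^{2/3}\log_2((n+1)/\delta)$ times a constant, multiplied against the base cost $\eps/\beta_{\eps,n}$. The correction error is geometric in the degree, $(1-c_\eps)^{\text{degree}}$ per block. A union over $n^{1/3}$ blocks then explains the $(n+1)/\delta$ inside the logarithm.

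\emph{Step 3: completion.} The corrected contraction defines a subchannel $\mathcal Q_n$ with $\dnorm{\mathcal Q_n-\N_n}\le\delta/2$ and $\mathcal Q_n\cple\beta_{\eps,n}^{-1}2^{g}\mathcal M'$ for some free $\mathcal M'$. The discarded sites are restored by tensoring with $\R_\omega^{\otimes\ell}$, using \eqref{eq:replacer-domination} at cost $2^{\ell C}$ with $\ell=O(n^{2/3})$. Finally, \cref{lem:defect-completion} restores trace preservation with error at most $\delta$ and adds at most $1$ to the multiplier, absorbed since $\beta_{\eps,n}\le1$. Taking the constant large enough also covers small $n$, where \cref{lem:bounds} suffices because $g\ge nC$ there.

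The main obstacle is Step 2: showing that the correction polynomial's factors are genuinely dominated by free channels, rather than just by CP maps. I would first try to express each correction term as a $\tau$-inserted marginal of the permutation-averaged $\M_n$, tensored with replacers, so that \ref{ax:marginal}, \ref{ax:permutation}, and \ref{ax:tensor} plus convexity place the dominator in $\F_n$.
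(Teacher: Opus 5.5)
Your Step 1 works, and it is actually shorter than the paper's route. Since $A^\dagger A\le tI$, the map $\Lambda(Y)=t^{-1}AYA^\dagger+\Tr[(I-t^{-1}A^\dagger A)Y]\sigma_0$ is a channel with $\Ad_A\cple t\Lambda$. Hence $\Ad_X\cple t(\id\otimes\Lambda)\circ\Ad_W$, and this last channel has $B^n$-marginal $\M_n$, so it lies in the lifted family. After rescaling $X$ by $1+\sqrt{1-\eps}$, you have exactly the input of \cref{prop:isometric-completion} with $p=\beta_{\eps,n}$. This bypasses \cref{lem:extension,lem:branch}. You still need to permutation-average the branch operator itself, via the covariance identity, and not only $\M_n$.

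The genuine gap is in Step 2, and it is not the obstacle you flag. Write the reduced branch as $K=J_kX+(I-J_kJ_k^\dagger)K$ with $X=J_k^\dagger K$. A polynomial $G\approx X^{-1}$ in the compression repairs only the first term. The leakage $(I-J_kJ_k^\dagger)K$ can have norm $\sqrt{1-c^2}$, independently of $n$ and $\delta$. Neither left multiplication by $G\otimes I$ nor right multiplication by $G$ removes it, so your $\mathcal Q_n$ stays at constant diamond distance from the target. Removing the leakage forces the image projector $J_kJ_k^\dagger$ into the correction operator. That projector acts on all $k\approx n$ kept output sites, so \cref{lem:multiplier} charges a cost exponential in $n$, and your per-site accounting silently omits this.

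The missing idea is \cref{lem:image-projector}. First rotate the outputs so that $J\ket x=\ket x\otimes\ket0$; this is why the paper takes $\dim E=ab$, making the lifted output dimension a multiple of $a$. Then $J_kJ_k^\dagger$ is the kernel projector of $Z=\sum_iq_i$, whose spectrum lies in $\{0,\dots,k\}$. A rescaled Chebyshev polynomial in $Z$ of degree $\lceil\sqrt k\ln(2/\xi)\rceil$ approximates it to accuracy $\xi$, with support $O(\sqrt k\log(1/\xi))$ and cost $15^{\ell}$.

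The other missing piece is the locality claim on which everything rests. You assert without any mechanism that the compressed operator is ``close to a multiple of the identity plus a symmetric sum of local terms''; with bounded-support terms this is not what holds. What is true, and what the cost accounting needs, is $\xi$-closeness to an expansion with supports $O(n^{2/3})$ \emph{and} coefficient mass $\exp(O(n^{2/3}))$; both enter $H$ in \cref{lem:multiplier}. The paper proves this in \cref{lem:local-approximation} in three moves:
\begin{enumerate}
\item Schur--Weyl duality writes a permutation-invariant $W$ as $\int f_W(g)g^{\otimes n}\,dg$ with $|f_W|\le v_n=\mathrm{poly}(n)$.
\item Weighted discarding of $m=\lceil n^{2/3}\ln((n+1)/\xi)\rceil$ sites multiplies the integrand by $[\Tr(\tau g)]^m$. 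By faithfulness of $\tau$, this suppresses every $g$ farther than $u=O(n^{-1/3})$ from a scalar unitary.
\item On the remaining set, $g^{\otimes k}=z^k\sum_SH_S$ is truncated at $|S|\le R=O(n^{2/3})$.
\end{enumerate}

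Three smaller points follow from this. The $\log(n+1)$ in $g_\eps$ comes from $v_n$, not from a union over blocks. The Neumann series needs only degree $O_c(\log(1/\delta))$, because $\opnorm{I-\lambda P}\le q_1<1$ uniformly in $n$. Your free-domination mechanism (insert $\tau$, discard, re-prepare $\omega$ on the support) is the right one. It must, however, be applied to each supported contraction of the full product expansion, through a Weyl-twirl Cauchy--Schwarz bound and a weighted convex mixture over terms, rather than factor by factor.
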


The usefulness of \cref{thm:completion} is that a free channel $\mathcal S_n$
contains an accurate approximation to the target as a component.
Indeed, set $p_n=\beta_{\eps,n}2^{-g_\eps(n,\delta)}$.
Since both $\mathcal S_n$ and $\mathcal L_n$ are trace preserving, the CP order
in \eqref{eq:completion-order} gives the convex decomposition
\[
 \mathcal S_n=p_n\mathcal L_n+(1-p_n)\mathcal Q_n,
 \qquad \mathcal Q_n\in\CPTP(A^n\to B^n).
\]
Here $0<p_n<1$ by \cref{lem:bounds}, and the residual channel
$\mathcal Q_n=(\mathcal S_n-p_n\mathcal L_n)/(1-p_n)$ need not be free.
Thus a free channel can contain a component that is $\delta$-close to $\N_n$ in
diamond norm, with a weight controlled by the testing coefficient.
For polynomially decreasing $\delta$, the loss $g_\eps(n,\delta)$ is $o(n)$, so
this component weight has the same exponential rate as $\beta_{\eps,n}$.
This connects the testing problem to an explicit CP domination of an
approximation to the target by a free channel.

\noindent\emph{Proof outline.}
For an isometric target, testing supplies a CP branch with positive
operator overlap in every input direction. Permutation averaging
and weighted discarding yield a compressed operator with a local
expansion. Polynomial approximations to its inverse and to the target
image projector correct the branch, while the additional family
conditions turn these corrections into free CP domination.
A free replacer restores trace preservation and pads the discarded
sites. Comparing testing against all auxiliary extensions reduces
an arbitrary target to the isometric case.
The full proof is given in \cref{app:quantitative-proof}.

\subsection{Finite-block smoothing bounds}
\begin{corollary}[Finite-block smoothing bounds]\label{cor:finite-AEP}
Assume the hypotheses of \cref{thm:completion}.
For every $0<\eps<1$, $0<\delta\le1/16$, and
$\eps+\delta/2<1$,
\begin{equation*}
 \begin{aligned}
 a_{\eps,n}+\log_2(1-\eps-\delta/2)
 &\le Z_n^\delta\\
 &\le a_{\eps,n}+K_\eps n^{2/3}
                      \log_2\frac{n+1}{\delta}.
 \end{aligned}
\end{equation*}
The lower bound requires only \cref{def:axioms} and holds for every
$0\le\delta<2(1-\eps)$.
\end{corollary}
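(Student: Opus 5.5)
The two inequalities have separate sources. The lower bound repeats the testing argument behind \eqref{eq:AEP-lower}, which needs only \cref{def:axioms}. The upper bound applies \cref{thm:completion} directly.

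\emph{Lower bound.} Fix $0\le\delta<2(1-\eps)$ and a feasible pair for \eqref{eq:smoothed-resource-max}, namely a CPTP $\mathcal L$ with $\dnorm{\mathcal L-\N_n}\le\delta$ and $\mathcal L\cple2^z\mathcal S$ with $\mathcal S\in\F_n$. Take any tester $(R,\ket\psi,Q)$ whose target acceptance is at least $1-\eps$.
\begin{enumerate}
\item Both $\N_n$ and $\mathcal L$ are trace preserving, so their outputs on $\ket\psi$ differ by a traceless operator of trace norm at most $\delta$. Hence $\mathcal L$ is accepted with probability at least $1-\eps-\delta/2>0$.
\item CP domination gives the output inequality $\widetilde\rho\le2^z\sigma^\psi_{\mathcal S}$.
\item Therefore the worst-case free acceptance of this tester is at least $2^{-z}(1-\eps-\delta/2)$.
\end{enumerate}
Minimizing over testers gives $\beta_{\eps,n}\ge2^{-z}(1-\eps-\delta/2)$. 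Taking $-\log_2$ and then the infimum over feasible $(\mathcal L,\mathcal S,z)$ yields $a_{\eps,n}+\log_2(1-\eps-\delta/2)\le Z_n^\delta$. The infimum is attained, as noted after \eqref{eq:smoothed-resource-max}, but attainment is not even needed. This step uses only that $\F_n$ is a set of channels and that \eqref{eq:beta} takes a supremum over it. Finiteness of $a_{\eps,n}$ is guaranteed by \cref{lem:bounds}.

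\emph{Upper bound.} For $0<\delta\le1/16$, \cref{thm:completion} provides $\mathcal S_n\in\F_n$ and a CPTP $\mathcal L_n$ with $\dnorm{\mathcal L_n-\N_n}\le\delta$ and $\mathcal S_n\cpge\beta_{\eps,n}2^{-g_\eps(n,\delta)}\mathcal L_n$. Rewriting this as $\mathcal L_n\cple\lambda\mathcal S_n$ gives $\lambda=\beta_{\eps,n}^{-1}2^{g_\eps(n,\delta)}$. \cref{lem:bounds} gives $\beta_{\eps,n}\le1-\eps<1$, so $\lambda\ge1$, which the constraint in \eqref{eq:resource-max} requires. Thus $\mathcal L_n$ is feasible, and $Z_n^\delta\le\log_2\lambda=a_{\eps,n}+K_\eps n^{2/3}\log_2((n+1)/\delta)$.

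There is no real obstacle, since \cref{thm:completion} carries all the weight. The only points to check are the factor $1/2$ in the lower bound, which comes from exact trace preservation, and the side condition $\eps+\delta/2<1$, which makes the logarithm finite.
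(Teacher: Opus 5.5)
Your proposal is correct and follows the paper's proof. The lower bound is the same testing argument that gives \eqref{eq:AEP-lower} in the proof of \cref{thm:resource-AEP}, and the upper bound is feasibility in \eqref{eq:smoothed-resource-max} of the channel supplied by \cref{thm:completion}; your explicit check that $\beta_{\eps,n}^{-1}2^{g_\eps(n,\delta)}\ge1$ is a small detail the paper leaves implicit.
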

\begin{proof}
The lower bound is \eqref{eq:AEP-lower}. For the upper bound,
\cref{thm:completion} supplies a channel within diamond distance
$\delta$ of $\N_n$ whose domination cost is at most
$a_{\eps,n}+K_\eps n^{2/3}\log_2((n+1)/\delta)$.
This channel is feasible in \eqref{eq:smoothed-resource-max}.
\end{proof}

\subsection{Comparison of testing tolerances}
\begin{lemma}\label{lem:error-independence}
Assume the hypotheses of \cref{thm:completion}.
For any $0<\eps_-\le\eps_+<1$, there is a finite constant $K$
such that
\begin{equation*}
 \sup_{\eps,\eps'\in[\eps_-,\eps_+]}
 |a_{\eps,n}-a_{\eps',n}|
 \le K\loss n\qquad(n\ge1).
\end{equation*}
In particular, the normalized difference tends to zero for every
two fixed errors in $(0,1)$.
\end{lemma}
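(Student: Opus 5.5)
The plan is to sandwich $a_{\eps,n}$ between two quantities that do not depend on $\eps$ except through $K_\eps$, namely the smoothed max-relative entropy $Z_n^\delta$. The tool is \cref{cor:finite-AEP}, applied at a suitably chosen fixed smoothing radius $\delta$.

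Fix $0<\eps_-\le\eps_+<1$ and choose $\delta=\min\{1/16,(1-\eps_+)\}$. With this choice every $\eps\in[\eps_-,\eps_+]$ satisfies $\eps+\delta/2\le1-\delta/2<1$, so both bounds of \cref{cor:finite-AEP} hold. The lower bound gives
\[
 a_{\eps,n}\le Z_n^\delta-\log_2(1-\eps-\delta/2)\le Z_n^\delta+\log_2(2/\delta),
\]
uniformly in $\eps$ on the interval. For the upper bound we use monotonicity of $\beta_{\eps,n}$ in $\eps$: a larger tolerance gives a larger feasible set, so $\beta_{\eps,n}$ is nonincreasing and $a_{\eps,n}$ is nondecreasing in $\eps$. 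It therefore suffices to invoke \cref{thm:completion} at the single tolerance $\eps_-$, which yields
\[
 Z_n^\delta\le a_{\eps_-,n}+K_{\eps_-}\loss n+K_{\eps_-}n^{2/3}\log_2(1/\delta).
\]
This avoids needing any uniformity of $K_\eps$ in $\eps$. Combining the two displays, for all $\eps,\eps'$ in the interval,
\[
 a_{\eps_-,n}\le a_{\eps',n}\le a_{\eps,n}\le a_{\eps_-,n}+K_{\eps_-}n^{2/3}\bigl(\log_2(n+1)+\log_2(1/\delta)\bigr)+\log_2(2/\delta).
\]
Hence $|a_{\eps,n}-a_{\eps',n}|$ is bounded by the last two terms.

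The remaining step is to absorb the constants into the target form $K\loss n$. Since $\delta$ is fixed and $\log_2(n+1)\ge1$ for every $n\ge1$, both $n^{2/3}\log_2(1/\delta)$ and $\log_2(2/\delta)$ are at most a constant multiple of $\loss n$. This gives a single $K$ depending on $\eps_\pm$, the dimensions and the eigenvalue bounds. The final claim follows because $n^{-1/3}\log_2(n+1)\to0$.

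The only real obstacle would have been a dependence of $K_\eps$ on $\eps$ that blows up near the ends of the interval, and the monotonicity reduction to the single endpoint $\eps_-$ removes that issue. I would double-check only that the lower bound's feasibility condition $\eps_++\delta/2<1$ holds for the chosen $\delta$, which it does.
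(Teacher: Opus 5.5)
Your proof is correct and essentially the paper's argument. The two halves of \cref{cor:finite-AEP} that you invoke are exactly the tester-domination bound and \cref{thm:completion} that the paper combines directly, and, like the paper, you use monotonicity of $a_{\eps,n}$ in $\eps$ to reduce to a single completion at the endpoint $\eps_-$; routing through $Z_n^\delta$ with one interval-wide $\delta$ is only a repackaging (it happens to use the sharper $\delta/2$ acceptance bound).
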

\begin{proof}
For fixed $\eps,\eps'\in(0,1)$ choose
$\delta=\min\{1/16,(1-\eps')/2\}$ in \cref{thm:completion}.
Every tester feasible at error $\eps'$ accepts $\mathcal L_n$
with probability at least $1-\eps'-\delta>0$.
Since $\mathcal S_n$ is a free alternative, its CP domination gives
\begin{equation*}
 \beta_{\eps',n}\ge
 \beta_{\eps,n}2^{-g_\eps(n,\delta)}(1-\eps'-\delta).
\end{equation*}
Taking negative logarithms yields the upper bound
\[
 a_{\eps',n}-a_{\eps,n}
 \le g_\eps(n,\delta)-\log_2(1-\eps'-\delta)
 =O_{\eps,\eps'}(\loss n).
\]
Interchanging the errors gives the reverse bound.
For the uniform statement it suffices to apply this result to the
endpoints: $a_{\eps,n}$ is nondecreasing in $\eps$, so any difference
within the interval is bounded by $a_{\eps_+,n}-a_{\eps_-,n}$.
\end{proof}

\section{Resource families and exact benchmarks}\label{sec:applications}
\subsection{Admissible resource families}\label{sec:examples}

If $\dim A=1$, channels are states and \cref{thm:main} recovers
the generalized state Stein identity under \ref{ax:convex},
\ref{ax:tensor}, and \ref{ax:replacer}.
For the quantitative completion bound, the additional marginal axiom
becomes ordinary partial-trace closure.
The examples below satisfy both the basic conditions in
\cref{def:axioms} and the additional conditions
\ref{ax:permutation} and \ref{ax:marginal}. We verify these directly
and show why the witnessing input state for the quantitative bound
should be adapted to the resource family.
The exact examples give finite-block testing and smoothing
formulas. Table~\ref{tab:families} summarizes the channel families.

\begin{table}[t]
\centering
\small
\renewcommand{\arraystretch}{1.12}
\caption{Channel families satisfying the basic and additional
conditions in \cref{def:axioms,def:quantitative-axioms}.
Each row refers to all block channels satisfying the indicated
constraint. The witnessing states are those in the marginal and
replacer conditions.}
\label{tab:families}
\begin{tabularx}{\textwidth}{@{}Xll@{}}
\toprule
Channel family & Witnessing states & Verification\\
\midrule
Preserving $\tau^{\otimes n}\mapsto\omega^{\otimes n}$
 & Faithful $\tau,\omega$ & \Cref{prop:state-preserving}\\
Unital ($A=B=\mathbb C^d$)
 & $\tau=\omega=I/d$ & \Cref{prop:state-preserving}\\
Maximally incoherent or dephasing-covariant
 & $\tau=I/a$, $\omega=I/b$ & \Cref{prop:coherence}\\
Symmetry-covariant
 & Faithful invariant $\tau,\omega$ & \Cref{prop:covariant}\\
Entanglement-breaking or positive-partial-transpose
 & Any faithful $\tau,\omega$ & \Cref{prop:EB}\\
\bottomrule
\end{tabularx}
\end{table}

\begin{proposition}[Preservation of a faithful reference state]
\label{prop:state-preserving}
Fix faithful states $\tau\in\Dens(A)$ and $\omega\in\Dens(B)$.
The family
\begin{equation}
 \F_n^{\tau\to\omega}
 =\{\M\in\CPTP(A^n\to B^n):
                 \M(\tau^{\otimes n})=\omega^{\otimes n}\}
 \label{eq:state-preserving-family}
\end{equation}
is admissible and also satisfies \ref{ax:permutation} and
\ref{ax:marginal}, with the displayed $\tau$ and $\omega$.
For a replacer target $\N=\R_\rho$, one has the exact identities
\begin{equation}
 E_n=nD(\rho\Vert\omega),\qquad
 \beta_{\eps,n}
 =\beta_\eps^{\rm st}(\rho^{\otimes n}\Vert\omega^{\otimes n}),
 \label{eq:replacer-exact}
\end{equation}
where
$\beta_\eps^{\rm st}(\varrho\Vert \sigma)=
\min_{0\le Q\le I,\ \Tr Q \varrho\ge1-\eps}\Tr Q \sigma$.
In particular $R_\F(\R_\rho)=D(\rho\Vert\omega)$.
\end{proposition}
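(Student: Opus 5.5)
We check conditions \ref{ax:convex}--\ref{ax:marginal} directly from the defining constraint $\M(\tau^{\otimes n})=\omega^{\otimes n}$. Then, for the replacer target, we bound $\beta_{\eps,n}$ and $E_n$ from both sides: the upper bounds use the free replacer $\R_\omega^{\otimes n}$ or a product input, and the lower bounds use the marginal constraint on an input with marginal $\tau^{\otimes n}$.

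\emph{The five conditions.} For \ref{ax:convex}, $\R_\omega^{\otimes n}$ belongs to $\F_n^{\tau\to\omega}$ because $\Tr\tau^{\otimes n}=1$, so the set is nonempty. It is convex because the constraint is affine in $\M$. It is compact because it is the intersection of the compact set $\CPTP(A^n\to B^n)$ with a closed affine subspace. For \ref{ax:tensor}, we have $(\M\otimes\mathcal L)(\tau^{\otimes(n+m)})=\M(\tau^{\otimes n})\otimes\mathcal L(\tau^{\otimes m})=\omega^{\otimes(n+m)}$. Condition \ref{ax:replacer} holds with the given faithful $\omega$. For \ref{ax:permutation}, the tensor powers are permutation invariant: $(U_\pi^A)^\dagger\tau^{\otimes n}U_\pi^A=\tau^{\otimes n}$ and $U_\pi^B\omega^{\otimes n}(U_\pi^B)^\dagger=\omega^{\otimes n}$, so the conjugated channel again maps $\tau^{\otimes n}$ to $\omega^{\otimes n}$. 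For \ref{ax:marginal}, the reduced map $X\mapsto\Tr_{B_n}\M(X\otimes\tau)$ is CPTP. Evaluated at $X=\tau^{\otimes(n-1)}$ it gives $\Tr_{B_n}\omega^{\otimes n}=\omega^{\otimes(n-1)}$, so it lies in $\F_{n-1}$.

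\emph{Testing identity.} For any input $\ket\psi$ with reference marginal $\varrho_R$, the target output is $\varrho_R\otimes\rho^{\otimes n}$. For the lower bound on $\beta_{\eps,n}$, we use the particular free alternative $\R_\omega^{\otimes n}$, whose output is $\varrho_R\otimes\omega^{\otimes n}$. Appending the fixed state $\varrho_R$ is a channel, and the hypothesis-testing quantity $\beta^{\rm st}_\eps$ can only increase under a channel applied to both states. Hence every feasible $Q$ gives
\[
 \sup_{\M}\Tr Q\sigma_\M^\psi\ge\Tr Q(\varrho_R\otimes\omega^{\otimes n})\ge\beta^{\rm st}_\eps(\rho^{\otimes n}\Vert\omega^{\otimes n}).
\]
For the upper bound, we take $\ket\psi$ to be a purification of $\tau^{\otimes n}$ and $Q=I_R\otimes Q'$, where $Q'$ is optimal for $\beta^{\rm st}_\eps(\rho^{\otimes n}\Vert\omega^{\otimes n})$. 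The target acceptance is then $\Tr Q'\rho^{\otimes n}\ge1-\eps$. For every $\M\in\F_n$, the alternative acceptance is $\Tr Q'\M(\tau^{\otimes n})=\Tr Q'\omega^{\otimes n}$. This single uniform tester therefore achieves the bound, which proves the second identity in \eqref{eq:replacer-exact}.

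\emph{Entropy identity.} For the upper bound on $E_n$, we take $\M=\R_\omega^{\otimes n}$. Then
\[
 D(\varrho_R\otimes\rho^{\otimes n}\Vert\varrho_R\otimes\omega^{\otimes n})=nD(\rho\Vert\omega)
\]
for every input, using additivity of relative entropy and the fact that $D(\varrho_R\Vert\varrho_R)=0$. For the lower bound, fix any $\M\in\F_n$ and use the same purification of $\tau^{\otimes n}$. Tracing out $R$ and applying data processing gives
\[
 D_{\ch}(\R_\rho^{\otimes n}\Vert\M)\ge D(\rho^{\otimes n}\Vert\M(\tau^{\otimes n}))=nD(\rho\Vert\omega).
\]
Taking the infimum over $\M$ yields $E_n=nD(\rho\Vert\omega)$, which is finite because $\omega$ is faithful. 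Dividing by $n$ and using \cref{thm:main}, or alternatively the quantum Stein lemma applied to the testing identity, gives $R_\F(\R_\rho)=D(\rho\Vert\omega)$.

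The only step beyond bookkeeping is the lower bound for testing. It needs the monotonicity of $\beta^{\rm st}_\eps$ under appending $\varrho_R$, which I would state explicitly: a test for $(\varrho_R\otimes\rho^{\otimes n},\varrho_R\otimes\omega^{\otimes n})$ pulls back, via the adjoint of the appending channel, to a test for $(\rho^{\otimes n},\omega^{\otimes n})$ with the same acceptance probabilities.
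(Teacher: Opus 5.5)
Your proposal is correct and follows the paper's proof essentially step for step. You check the axioms directly from the affine constraint, use the input $\tau^{\otimes n}$ (you purify it, while the paper uses it without a reference) to get $\beta_{\eps,n}\le\beta_\eps^{\rm st}$ and $E_n\ge nD(\rho\Vert\omega)$, and use the free replacer $\R_\omega^{\otimes n}$ for the reverse bounds, where your pullback of a joint effect through the adjoint of the appending channel is exactly the paper's reduction to $Q_B=\Tr_R[(\varrho_R^{1/2}\otimes I)Q(\varrho_R^{1/2}\otimes I)]$.
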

\begin{proof}
The preservation constraint is affine and closed inside the
compact convex set of channels.
The replacer $\R_\omega$ belongs to $\F_1^{\tau\to\omega}$.
Permutation and tensor closure follow because the prescribed
reference states are tensor powers.
For the one-site marginal $\M'$,
\[
 \M'(\tau^{\otimes(n-1)})
 =\Tr_{B_n}\M(\tau^{\otimes n})
 =\omega^{\otimes(n-1)}.
\]
This proves all axioms.

Using the input $\tau^{\otimes n}$ without a reference gives
outputs $\rho^{\otimes n}$ and $\omega^{\otimes n}$ for the target
and every alternative, respectively.
It follows that $E_n\ge nD(\rho\Vert\omega)$ and
$\beta_{\eps,n}\le\beta_\eps^{\rm st}$.
For the reverse bounds choose the free alternative
$\R_\omega^{\otimes n}$.
On any reference-assisted input state, the target and alternative
outputs are $\varrho_R\otimes\rho^{\otimes n}$ and
$\varrho_R\otimes\omega^{\otimes n}$.
Their relative entropy is $nD(\rho\Vert\omega)$.
Every joint effect is equivalent on these two outputs to the
single-system effect
\[
 Q_B=\Tr_R[(\varrho_R^{1/2}\otimes I)Q
                    (\varrho_R^{1/2}\otimes I)],\qquad 0\le Q_B\le I.
\]
Thus no tester improves on the displayed state-testing optimum.
This proves \eqref{eq:replacer-exact}.
\end{proof}

For finite-dimensional Hamiltonians at finite inverse temperature,
take $\tau$ and $\omega$ to be the corresponding Gibbs states.
Equation~\eqref{eq:state-preserving-family} then gives the full
Gibbs-preserving channel family \cite{FaistOppenheimRenner}.
When $A=B=\mathbb C^d$ and $\tau=\omega=I/d$, this is the
family of all unital block channels.
In particular a pure-state reset target has rate $\log_2d$
against unital alternatives.

The fixed-state marginal condition is strictly weaker than
closure under every auxiliary input state.
For $A=B$ and $\tau=\omega$, the two-site SWAP channel belongs
to $\F_2^{\tau\to\tau}$.
Inserting a state $\eta$ in the second input and tracing its
corresponding output yields $X\mapsto\Tr(X)\eta$, which preserves
$\tau$ only when $\eta=\tau$.
If $\tau\ne I/d$, even insertion of the maximally mixed state
fails this test.
Allowing a general faithful $\tau$ in the quantitative completion theorem therefore
enlarges its scope beyond a maximally-mixed-insertion formulation.

Coherence constraints give a second class of admissible families.
Fix product reference bases and let $\Delta_A,\Delta_B$ denote
complete dephasing in those bases.
The following are standard coherence-operation classes
\cite{ChitambarGour}.
At blocklength $n$, maximally incoherent operations (MIO) are channels mapping every
diagonal input state to a diagonal output state, equivalently
\[
 \Delta_B^{\otimes n}\circ\M\circ\Delta_A^{\otimes n}
 =\M\circ\Delta_A^{\otimes n}.
\]
Dephasing-covariant incoherent operations (DIO) are channels satisfying
$\Delta_B^{\otimes n}\circ\M=\M\circ\Delta_A^{\otimes n}$.

\begin{proposition}[Coherence-restricted channels]\label{prop:coherence}
The full MIO and DIO block-channel families are admissible and also
satisfy \ref{ax:permutation} and \ref{ax:marginal}, with
$\tau=I_A/a$ and $\omega=I_B/b$.
Consequently all results of
\cref{thm:main,thm:resource-AEP,thm:threshold} apply to these
families and arbitrary target channels.
\end{proposition}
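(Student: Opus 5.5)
We verify the five conditions \ref{ax:convex}--\ref{ax:marginal} directly for each family, using the operator identities that define them. The final sentence then follows by citing \cref{thm:main,thm:resource-AEP,thm:threshold}, which need only \cref{def:axioms}.

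\emph{Convexity and compactness (\ref{ax:convex}).} For DIO the constraint $\Delta_B^{\otimes n}\circ\M=\M\circ\Delta_A^{\otimes n}$ is linear in $\M$. For MIO the constraint $\Delta_B^{\otimes n}\circ\M\circ\Delta_A^{\otimes n}=\M\circ\Delta_A^{\otimes n}$ is also linear. Each is therefore a closed affine condition intersected with the compact convex set $\CPTP(A^n\to B^n)$. Nonemptiness, and \ref{ax:replacer} with $\omega=I_B/b$, follow because $\R_{I/b}$ lies in both classes. Indeed $\R_{I/b}\circ\Delta_A=\R_{I/b}$ since $\Delta_A$ is trace preserving, and $\Delta_B(I/b)=I/b$, so $\Delta_B\circ\R_{I/b}=\R_{I/b}=\R_{I/b}\circ\Delta_A$.

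\emph{Tensor closure (\ref{ax:tensor}).} Use $\Delta^{\otimes(n+m)}=\Delta^{\otimes n}\otimes\Delta^{\otimes m}$ and the mixed-product rule $(\Phi_1\otimes\Phi_2)\circ(\Psi_1\otimes\Psi_2)=(\Phi_1\circ\Psi_1)\otimes(\Phi_2\circ\Psi_2)$. Each defining identity then factors across the two blocks.

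\emph{Permutation invariance (\ref{ax:permutation}).} Because the reference bases are product bases, $\Ad_{U_\pi}$ commutes with $\Delta^{\otimes n}$ on both input and output. Conjugating the defining identity by the permutations therefore preserves it.

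\emph{Marginal closure (\ref{ax:marginal}).} This is the only step requiring care; we take $\tau=I_A/a$. Write $\M'(X)=\Tr_{B_n}\M(X\otimes\tau)$. Two facts carry the argument. First, $\Delta_A(\tau)=\tau$, so
\[
 \Delta_A^{\otimes n}(X\otimes\tau)=\Delta_A^{\otimes(n-1)}(X)\otimes\tau.
\]
Second, partial trace commutes with dephasing on the remaining factors:
\[
 \Tr_{B_n}\circ\Delta_B^{\otimes n}=\Delta_B^{\otimes(n-1)}\circ\Tr_{B_n}.
\]
For DIO these give
\[
 \Delta_B^{\otimes(n-1)}\M'(X)=\Tr_{B_n}\Delta_B^{\otimes n}\M(X\otimes\tau)
 =\Tr_{B_n}\M\bigl(\Delta_A^{\otimes(n-1)}(X)\otimes\tau\bigr)=\M'(\Delta_A^{\otimes(n-1)}X).
\]
MIO is handled the same way, starting from $\M'\circ\Delta_A^{\otimes(n-1)}$ and using the MIO identity in the middle step. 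Complete positivity and trace preservation of $\M'$ are clear. The point to note is that $\tau$ must be diagonal, i.e. incoherent, in the reference basis, and the maximally mixed state is; a non-incoherent faithful $\tau$ would break the first identity.
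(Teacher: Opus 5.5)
Your proof is correct and follows the paper's argument: linearity of the defining identities for \ref{ax:convex}, the maximally mixed replacer for \ref{ax:replacer}, commutation of site permutations with the product dephasing for \ref{ax:permutation}, and, for \ref{ax:marginal}, the two intertwining relations $\Delta_A^{\otimes n}(X\otimes\tau)=\Delta_A^{\otimes(n-1)}(X)\otimes\tau$ and $\Tr_{B_n}\circ\Delta_B^{\otimes n}=\Delta_B^{\otimes(n-1)}\circ\Tr_{B_n}$. The one small difference is that you obtain MIO tensor closure and the MIO marginal by factoring the map identity $\Delta_B^{\otimes n}\circ\M\circ\Delta_A^{\otimes n}=\M\circ\Delta_A^{\otimes n}$ with the mixed-product rule, whereas the paper argues via decomposing diagonal inputs into product basis projections; both arguments are equally valid.
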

\begin{proof}
Each defining equality is linear, so both sets are compact
and convex inside the set of channels.
Both contain the maximally mixed replacer.
Permutations commute with product dephasing.
Tensor products preserve the DIO equality.
They preserve MIO because every diagonal state on two blocks
is a convex combination of product basis-state projections;
each such projection is sent to a diagonal state.
Finally, let $\mathcal P(X)=X\otimes I_A/a$ and
$\mathcal T=\Tr_{B_n}$. These maps intertwine dephasing at the
two blocklengths:
\[
 \Delta_A^{\otimes n}\circ\mathcal P
 =\mathcal P\circ\Delta_A^{\otimes(n-1)},\qquad
 \Delta_B^{\otimes(n-1)}\circ\mathcal T
 =\mathcal T\circ\Delta_B^{\otimes n}.
\]
Combining these identities with the DIO equality for $\M$
shows that $\mathcal T\circ\M\circ\mathcal P$ remains DIO.
For MIO, a diagonal input on the remaining sites together with the
prepared maximally mixed state is diagonal; its output and output marginal
remain diagonal. This proves the marginal axiom as well.
\end{proof}

Covariance provides a further common structure, compatible with
intersections of resource constraints.

\begin{proposition}[Covariance and intersections]\label{prop:covariant}
Let $G$ be a group with finite-dimensional unitary representations
$U_A(g)$ and $U_B(g)$.
Suppose $\tau$ and $\omega$ are faithful invariant states.
The family of all channels obeying
\begin{equation*}
 \M\circ\Ad_{U_A(g)^{\otimes n}}
 =\Ad_{U_B(g)^{\otimes n}}\circ\M
 \quad(g\in G)
\end{equation*}
is admissible and also satisfies \ref{ax:permutation} and
\ref{ax:marginal}, with $\tau,\omega$.
Moreover, any intersection of families satisfying all five conditions
with the same witnessing states $\tau,\omega$ also satisfies all five.
\end{proposition}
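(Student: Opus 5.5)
We verify the five conditions one by one for the covariant family, which we denote $\F_n^G$; after that we treat intersections. Since the covariance constraint is linear in $\M$ (equivalently, in the Choi operator), $\F_n^G$ is the intersection of the compact convex set $\CPTP(A^n\to B^n)$ with a closed linear subspace, so it is compact and convex. It is nonempty because it contains $\R_\omega^{\otimes n}$, as shown next, and this gives \ref{ax:convex}. For \ref{ax:replacer}, invariance of $\omega$ gives
\[
 \R_\omega(U_A(g)XU_A(g)^\dagger)=\Tr(X)\,\omega
 =U_B(g)\,\omega\, U_B(g)^\dagger\Tr X,
\]
so $\R_\omega\in\F_1^G$. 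For \ref{ax:tensor}, if $\M\in\F_n^G$ and $\mathcal L\in\F_m^G$, then
\[
 (\M\otimes\mathcal L)\circ\Ad_{U_A(g)^{\otimes(n+m)}}
 =(\M\circ\Ad_{U_A(g)^{\otimes n}})\otimes(\mathcal L\circ\Ad_{U_A(g)^{\otimes m}}),
\]
and applying each covariance relation factorwise gives the claim.

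For \ref{ax:permutation}, we use that the permutation unitaries $U_\pi^A,U_\pi^B$ commute with the diagonal tensor powers $U_A(g)^{\otimes n}$ and $U_B(g)^{\otimes n}$. Hence the conjugated channel $\Ad_{U_\pi^B}\circ\M\circ\Ad_{(U_\pi^A)^\dagger}$ is still covariant: move $\Ad_{U_A(g)^{\otimes n}}$ through $\Ad_{(U_\pi^A)^\dagger}$, apply covariance of $\M$, and move the output factor back through $\Ad_{U_\pi^B}$.

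For \ref{ax:marginal}, set $\mathcal P(X)=X\otimes\tau$ and $\mathcal T=\Tr_{B_n}$, as in the proof of \cref{prop:coherence}. Invariance of $\tau$ gives the intertwining relation
\[
 \Ad_{U_A(g)^{\otimes n}}\circ\mathcal P=\mathcal P\circ\Ad_{U_A(g)^{\otimes(n-1)}}.
\]
The partial trace over the last factor satisfies
\[
 \mathcal T\circ\Ad_{U_B(g)^{\otimes n}}=\Ad_{U_B(g)^{\otimes(n-1)}}\circ\mathcal T,
\]
because the factor $U_B(g)$ on $B_n$ cancels under the trace by cyclicity. Composing these two relations with the covariance of $\M$ shows that $\mathcal T\circ\M\circ\mathcal P$ is covariant at blocklength $n-1$. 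It is also CPTP, so it lies in $\F_{n-1}^G$.

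For intersections, let $(\F^{(i)})_{i\in I}$ be families satisfying all five conditions with the same $\tau,\omega$, and put $\F_n=\bigcap_i\F_n^{(i)}$. Each remaining condition is a closure statement that holds in every $\F^{(i)}$ separately, so it transfers to the intersection. Concretely, $\F_n$ is an intersection of compact convex sets, hence compact and convex. It contains $\R_\omega^{\otimes n}$, since each $\F_n^{(i)}$ does by \ref{ax:replacer} and \ref{ax:tensor}, so it is nonempty. If $\M$ and $\mathcal L$ lie in every $\F^{(i)}$, then so does $\M\otimes\mathcal L$, and likewise the permuted channel and the $\tau$-marginal.

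The only point needing care is the shared witnesses: \ref{ax:replacer} and \ref{ax:marginal} must hold with the same $\omega$ and $\tau$ in every family, which is exactly what the hypothesis provides. Apart from this, no step is a real obstacle. The one computation worth writing out in full is the cancellation of $U_B(g)$ under $\Tr_{B_n}$ in the marginal step.
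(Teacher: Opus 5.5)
Your proposal is correct and takes essentially the same route as the paper's proof. The shared steps are: closed linear covariance constraints for (F1); the invariant replacer for (F3) and nonemptiness; factorwise covariance for (F2); commutation of site permutations with the collective representations for (F4); intertwining of the $\tau$-preparation and the partial trace for (F5); and inheritance of each closure property, with common witnesses, under intersection. The only difference is that you write out the intertwining computations, which the paper states in words.
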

\begin{proof}
The covariance constraints are closed and linear.
The invariant-state replacer is covariant and supplies
nonemptiness. Collective tensor representations commute with
site permutations, and tensor products preserve the covariance
equation.
Preparation of invariant $\tau$ intertwines the reduced and
full input actions; partial trace intertwines the corresponding
output actions. Thus the resulting marginal is covariant.

For an intersection, compactness, convexity, permutation
invariance, and tensor and marginal closure are inherited.
Nonemptiness is guaranteed by the common faithful replacer and
its tensor powers. All five axioms therefore hold.
\end{proof}

For Hamiltonians $H_A,H_B$, take the time-translation representations
$U_A(s)=e^{-isH_A}$ and $U_B(s)=e^{-isH_B}$ for $s\in\mathbb R$.
With Gibbs reference states, the proposition and \cref{prop:state-preserving}
cover the intersection of time-covariant and Gibbs-preserving
channels.

\subsection{Exact formulas and correlated alternatives}\label{sec:exact-examples}
A channel is entanglement breaking (EB) if its output is separable
from every reference for every input.
Equivalently its Choi operator is separable across reference
and output, or it has a measure-and-prepare representation
\cite{HorodeckiShorRuskai}.
A positive-partial-transpose (PPT) channel here means a channel whose Choi operator is
positive after transposition on the entire output system.
Equivalently $T_B\circ\M$ is CP, where $T_B$ is transposition.
At blocklength $n$, the relevant cut is $A^n:B^n$.

\begin{proposition}[Entanglement-breaking and PPT alternatives]\label{prop:EB}
The families of all EB block channels and all PPT block channels
are admissible and also satisfy \ref{ax:permutation} and
\ref{ax:marginal}, for any faithful input state $\tau$ and any
faithful output replacer state $\omega$.
For any isometry $J:A\to B$ and target $\N=\Ad_J$,
both families have the exact finite-block values
\begin{equation*}
 E_n=n\log_2a,\qquad
 \beta_{\eps,n}=(1-\eps)a^{-n},\qquad a=\dim A.
\end{equation*}
\end{proposition}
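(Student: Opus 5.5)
Two parts: verifying the five axioms for both families, then the exact values for an isometric target.

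\emph{Axioms.} For each family the EB (resp. PPT) Choi constraint is closed and convex: separability across $A^n:B^n$ is preserved under convex combinations and limits, and for PPT the condition $C_\M^{T_{B^n}}\ge0$ is linear plus a positivity constraint. Both sets lie inside the compact set of channels, which gives \ref{ax:convex}. Tensor closure \ref{ax:tensor} holds because the Choi operator of $\M\otimes\mathcal L$ is, after reordering factors, $C_\M\otimes C_{\mathcal L}$. A tensor product of separable operators across the cuts $A^n:B^n$ and $A^m:B^m$ is separable across $A^{n+m}:B^{n+m}$, and partial transposition of a tensor product factorizes. The replacer $\R_\omega$ has Choi operator $I_A\otimes\omega$, which is product and hence both EB and PPT, giving \ref{ax:replacer} for every faithful $\omega$.

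Permutation covariance \ref{ax:permutation} conjugates the Choi operator by $\overline{U_\pi^A}\otimes U_\pi^B$, a local product unitary across the cut. This preserves separability, and it preserves PPT because $(I\otimes U)X(I\otimes U^\dagger)$ has partial transpose $(I\otimes\bar U)X^{T_B}(I\otimes U^T)$. For marginal closure \ref{ax:marginal}, the reduced map has Choi operator $\Tr_{A_nB_n}[(I\otimes\tau^T_{A_n}\otimes I)C_\M]$, up to ordering. This is a local positive operation on the $A$ side combined with a partial trace on the $B$ side, both of which preserve separability and PPT. For PPT, note that tracing $B_n$ commutes with transposing the remaining $B$ factors.

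\emph{Exact values.} The upper bound $E_n\le n\log_2a$ would follow from a direct choice of alternative rather than from the $C$ bound of \cref{lem:bounds}. Take the measure-and-prepare EB channel
\[
\M=\Ad_{J}^{\otimes n}\circ\Delta',\qquad \Delta'(X)=\sum_j\bra jX\ket j\,\ket j\bra j,
\]
or more robustly the twirled version $\M(X)=\Tr(X)\,\frac{1}{a^n}J^{\otimes n}J^{\dagger\otimes n}$ mixed appropriately. The cleanest choice is $\M=\R_{\omega_J}^{\otimes n}$ with $\omega_J=JJ^\dagger/a$, which is EB because it is a replacer. Each target output is $\Ad_{I\otimes J^{\otimes n}}$ of a pure state whose reduced state is $\varrho_R$, so the target output satisfies $\rho\le a^n\varrho_R\otimes\omega_J^{\otimes n}$ by \eqref{eq:dimension-order} applied inside the image of $J^{\otimes n}$, which has dimension $a^n$. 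This yields $D\le n\log_2 a$ and, exactly as in \cref{lem:bounds}, $\beta_{\eps,n}\ge(1-\eps)a^{-n}$ for this free alternative. Note that $\omega_J$ need not be faithful: the replacer in \ref{ax:replacer} is only a witness, and $\R_{\omega_J}$ is still a member of the family.

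For the reverse bounds I use the maximally entangled input $\Phi=\ket{\Gamma}\bra{\Gamma}/a^n$ on $R\otimes A^n$. The target output is the pure, maximally entangled state $\rho=(I\otimes J^{\otimes n})\Phi(I\otimes J^{\otimes n})^\dagger$ of Schmidt rank $a^n$, and the tester effect is $Q=(1-\eps)\ket\rho\bra\rho$. For any PPT (hence any EB) state $\sigma$, the overlap satisfies $\bra\rho\sigma\ket\rho\le a^{-n}$. This is the standard bound: $\Tr\sigma P=\Tr\sigma^{T_B}P^{T_B}$, and $P^{T_B}$ is $a^{-n}$ times a partial isometry (a swap restricted to the image), so it has operator norm $a^{-n}$. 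Since PPT channels produce PPT states on this input, $\beta_{\eps,n}\le(1-\eps)a^{-n}$, and the relative entropy bound $D(\rho\Vert\sigma)=-\log_2\bra\rho\sigma\ket\rho\ge n\log_2a$ gives $E_n\ge n\log_2a$. Combining with the upper bounds yields equality.

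The main obstacle I anticipate is the PPT version of marginal closure together with the overlap bound for a non-square isometry $J:A\to B$. For the overlap, I would reduce via $J$ to the $a^n$-dimensional subspace and check that $P^{T_B}$ still has norm $a^{-n}$, by writing $\ket\rho=(I\otimes J^{\otimes n})\ket\Gamma/\sqrt{a^n}$ and using that partial transpose commutes with the local isometry up to $\bar J$.
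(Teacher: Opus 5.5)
\textbf{Gap in the achievability direction.} The bounds $E_n\le n\log_2a$ and $\beta_{\eps,n}\ge(1-\eps)a^{-n}$ need an alternative $\M$ with $\rho\le a^n\sigma$ on every input. Your choice $\R_{\omega_J}^{\otimes n}$, $\omega_J=JJ^\dagger/a$ (the same map as your ``twirled version''), does not achieve this.

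Applying \eqref{eq:dimension-order} inside the image gives $\rho\le a^n\varrho_R\otimes J_nJ_n^\dagger$, where $J_n=J^{\otimes n}$. Since $J_nJ_n^\dagger=a^n\omega_J^{\otimes n}$, the correct bound is $\rho\le a^{2n}\varrho_R\otimes\omega_J^{\otimes n}$; you dropped the normalization of $\omega_J$. This bound is tight. On the maximally entangled input, $\bra{\phi_J}(I/a^n\otimes\omega_J^{\otimes n})\ket{\phi_J}=a^{-2n}$. The tester $(1-\eps)\ket{\phi_J}\bra{\phi_J}$ therefore accepts this replacer with probability only $(1-\eps)a^{-2n}$, and the relative entropy equals $2n\log_2a$.

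Changing the replaced state does not help. Every replacer output on this input is $I/a^n\otimes\omega'$, whose overlap with $\ket{\phi_J}$ is $a^{-n}\Tr(\omega'J_nJ_n^\dagger)/a^n\le a^{-2n}$. An alternative uncorrelated with the reference can only certify rate $2\log_2a$.

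\textbf{What you need instead.} The alternative must keep classical correlation with the reference. This is the measure-and-prepare channel $\M_*=\Ad_{J_n}\circ\Delta$ that you wrote down first and then abandoned, and the argument hinges on one CP inequality you never proved. The Kraus operators $K_j=J_n\ket j\bra j$ sum to $J_n$, so Cauchy--Schwarz on their Choi vectors gives $\sum_j\ket{K_j}\bra{K_j}\ge a^{-n}\ket{J_n}\bra{J_n}$, that is, $\M_*\cpge a^{-n}\Ad_{J_n}$. This yields $\rho\le a^n\sigma$ for every reference-assisted input. Hence $D_{\ch}(\N_n\Vert\M_*)\le n\log_2a$ by operator monotonicity of the logarithm, and $\Tr Q\sigma\ge(1-\eps)a^{-n}$ for every feasible tester. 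Since $\M_*$ is EB, it is also PPT, so it serves both families.

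\textbf{Remaining parts.} Your axiom verification and the PPT overlap bound via $\|\rho^{T_B}\|_\infty=a^{-n}$ agree with the paper. One small correction: for pure $\rho$ you have $D(\rho\Vert\sigma)=-\bra\rho\log_2\sigma\ket\rho\ge-\log_2\bra\rho\sigma\ket\rho$. This is an inequality (by Jensen, or by data processing under the binary measurement, as in the paper), not an equality. It points in the direction you need, so the conclusion $E_n\ge n\log_2a$ stands.
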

\begin{proof}
Normalized EB Choi operators form the intersection of the compact
convex separable-state set with the channel marginal constraint.
For PPT replace separability by positivity of the output partial
transpose, also a closed convex constraint.
Both families contain every replacer.
Tensor products and simultaneous permutations preserve their
respective Choi properties.

An EB channel remains EB after any input or output channel
composition: use reference separability after the input
composition and preservation of separability by a local output
channel. This proves its marginal closure.
For a PPT channel the reduced map $\M'$ satisfies
\[
 T_{B^{n-1}}\circ\M'
 =\Tr_{B_n}\circ T_{B^n}\circ\M
             \circ(X\mapsto X\otimes\tau).
\]
The right-hand side is CP because preparation and partial trace
are CP and $T_{B^n}\circ\M$ is CP by the PPT assumption.
The displayed identity uses
$\Tr_{B_n}\circ T_{B^n}=T_{B^{n-1}}\circ\Tr_{B_n}$:
transposition on the discarded factor leaves its trace unchanged,
while transposition on the remaining factors passes through the
partial trace. This proves the PPT marginal axiom.
Every reference-assisted PPT-channel output is PPT because
$T_{B^n}\circ\M$ is CP.
Every EB output is PPT as well.

Put $D=a^n$, $J_n=J^{\otimes n}$, and use the normalized
maximally entangled input
$\ket{\Phi_D}=D^{-1/2}\sum_{j=1}^D\ket j_R\ket j_{A^n}$.
The target output is the pure state
$\rho_J=\ket{\phi_J}\bra{\phi_J}$ with
$\ket{\phi_J}=(I_R\otimes J_n)\ket{\Phi_D}$.
All its nonzero Schmidt coefficients equal $D^{-1/2}$.
Write $X^{T_B}$ for partial transposition on the entire $B^n$
output in this calculation. In Schmidt bases,
$\rho_J^{T_B}$ is $1/D$ times the swap operator on a
$D$-by-$D$ subspace and zero on its orthogonal complement.
Consequently $\|\rho_J^{T_B}\|_\infty=1/D$.
For every alternative output $\sigma$ from either family,
$\sigma^{T_B}$ is positive with trace one. Therefore
\begin{equation}
 \bra{\phi_J}\sigma\ket{\phi_J}
 =\Tr\rho_J^{T_B}\sigma^{T_B}\le1/D .
 \label{eq:sep-overlap}
\end{equation}
The effect $(1-\eps)\ket{\phi_J}\bra{\phi_J}$ proves
$\beta_{\eps,n}\le(1-\eps)/D$.
Binary measurement onto $\rho_J$ also gives
$D(\rho_J\Vert\sigma)\ge\log_2D$, so $E_n\ge\log_2D$.

For both reverse bounds, let $\Delta$ be complete dephasing
in any basis of $A^n$ and take the EB alternative
$\mathcal M_*=\Ad_{J_n}\circ\Delta$.
Its Kraus operators $K_j=J_n\ket j\bra j$ satisfy
$\sum_jK_j=J_n$.
Cauchy--Schwarz on their Choi vectors gives
\begin{equation}
 \sum_j\ket{K_j}\bra{K_j}
 \ge D^{-1}\ket{J_n}\bra{J_n},
 \qquad \mathcal M_*\cpge D^{-1}\Ad_{J_n}.
 \label{eq:pinching-CP-benchmark}
\end{equation}
Thus every feasible tester has alternative acceptance at least
$(1-\eps)/D$, and operator monotonicity of logarithm yields
$D_{\ch}(\Ad_{J_n}\Vert\mathcal M_*)\le\log_2D$.
Since EB channels are PPT, the same alternative proves both
upper bounds on $E_n$ and both lower bounds on $\beta_{\eps,n}$.
\end{proof}

\begin{corollary}[Exact smoothed isometric benchmark]
\label{cor:exact-smoothed-isometry}
For either family in \cref{prop:EB}, an isometric target, and
every $0\le\delta\le2$,
\begin{equation}
 Z_n^\delta=\log_2\max\{1,a^n(1-\delta/2)\}.
 \label{eq:exact-smoothed-isometry}
\end{equation}
The factor $1/2$ here corresponds to our use of the full diamond
norm in the smoothing constraint.
\end{corollary}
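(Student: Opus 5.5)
We prove \eqref{eq:exact-smoothed-isometry} by matching a lower bound, taken from the testing lower bound \eqref{eq:AEP-lower} in a sharpened single-test form, with an explicit construction for the upper bound. Throughout put $D=a^n$, $J_n=J^{\otimes n}$, and let $\ket{\phi_J}$ be the maximally entangled target output used in the proof of \cref{prop:EB}.

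\emph{Lower bound.} Suppose $\mathcal L$ is CPTP with $\dnorm{\mathcal L-\N_n}\le\delta$ and $\mathcal L\cple\lambda\mathcal S$ for some free $\mathcal S$. Feed in $\ket{\Phi_D}$ and measure the effect $Q=\ket{\phi_J}\bra{\phi_J}$. The target accepts with probability one. Both maps are trace preserving, so $\mathcal L$ accepts with probability at least $1-\delta/2$. By \eqref{eq:sep-overlap}, every free output accepts with probability at most $1/D$. Domination therefore gives
\[
 1-\delta/2\le\lambda/D,
\]
and together with $\lambda\ge1$ this yields $\lambda\ge\max\{1,D(1-\delta/2)\}$. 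Taking the infimum gives the lower bound on $Z_n^\delta$. It holds for every $\delta\in[0,2]$, and no $\eps$-parameter is needed.

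\emph{Upper bound.} We interpolate between the target and the pinched channel $\mathcal M_*=\Ad_{J_n}\circ\Delta$, which is EB and hence PPT. Set
\[
 \mathcal L_p=p\,\Ad_{J_n}+(1-p)\mathcal M_*,
 \qquad p\in[0,1].
\]
\begin{enumerate}
\item \emph{Domination.} From \eqref{eq:pinching-CP-benchmark} we have $\Ad_{J_n}\cple D\mathcal M_*$, so $\mathcal L_p\cple\bigl(pD+1-p\bigr)\mathcal M_*$.
\item \emph{Error.} We have $\dnorm{\mathcal L_p-\N_n}=(1-p)\dnorm{\Ad_{J_n}-\mathcal M_*}$. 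Since $J_n$ is an isometry, this equals $(1-p)\dnorm{\id-\Delta}$. The value of $\dnorm{\id-\Delta}$ is the step to check: on the maximally entangled input, $\id-\Delta$ sends $\Phi_D$ to the off-diagonal part of $\Phi_D$, whose trace norm is $2(1-1/D)$. Any two channels are at diamond distance at most $2$, so $\dnorm{\id-\Delta}\le2$.
\item \emph{Choice of $p$ when $\delta<2(1-1/D)$.} Take $p$ with $(1-p)\cdot2(1-1/D)=\delta$. The cost is then
\[
 pD+1-p=1+(D-1)p=D-\tfrac{\delta D}{2}=D(1-\delta/2),
\]
which matches the lower bound exactly. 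This uses the exact value $\dnorm{\id-\Delta}=2(1-1/D)$, so it must also be shown that no input does worse than the maximally entangled one.
\item \emph{Choice of $p$ when $\delta\ge2(1-1/D)$.} Here $D(1-\delta/2)\le1$. Take $p=0$, i.e.\ $\mathcal L_0=\mathcal M_*$, which is free and has cost $1$, again matching the lower bound.
\end{enumerate}

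\emph{Main obstacle: the upper bound on $\dnorm{\id-\Delta}$.} Write $\id-\Delta=(1-1/D)(\id-\mathcal R)$, where $\mathcal R$ is the pinching twirl channel. The identity holds because $\Delta$ is the average of conjugations by diagonal phase unitaries, and the trivial phase contributes weight $1/D$ when the phases are taken from the group $\mathbb Z_D$ (the group average over $\mathbb Z_D$ phases is exactly $\Delta$). Thus $\mathcal R$ is the average over the nontrivial phases, a mixture of unitary conjugations and hence a channel. Applying the bound $\dnorm{\id-\mathcal R}\le2$ for two channels gives $\dnorm{\id-\Delta}\le2(1-1/D)$, which combined with the maximally entangled input settles equality. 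If $\delta=2$ the formula gives $Z_n^\delta=0$, which is consistent with both bounds.
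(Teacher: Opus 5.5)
Your proposal is correct and follows the paper's proof essentially step for step. For the lower bound it uses the same maximally entangled test with \eqref{eq:sep-overlap}. For the upper bound it uses the same interpolation between $\Ad_{J_n}$ and $\mathcal M_*=\Ad_{J_n}\circ\Delta$, together with the $\mathbb Z_D$-twirl estimate $\dnorm{\id-\Delta}\le2(1-1/D)$. The one extra step, your check that this bound is attained on $\Phi_D$, is harmless but unnecessary: feasibility of $\mathcal L_p$ only needs the upper bound on the diamond distance.
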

\begin{proof}
Put $D=a^n$. If $D=1$, every channel with this input is a
replacer and the assertion is immediate. Assume $D>1$.
Let $\mathcal L$ be a feasible smoothing channel and let $\sigma$
be its output on the maximally entangled input state used above.
Since $\sigma$ and $\rho_J$ are states,
\[
 \bra{\phi_J}\sigma\ket{\phi_J}\ge1-\tfrac12\trnorm{\sigma-\rho_J}
 \ge1-\delta/2.
\]
If $\mathcal L\cple\lambda\mathcal S$ with $\mathcal S$
free, \eqref{eq:sep-overlap} gives
$1-\delta/2\le\lambda/D$. Together with $\lambda\ge1$,
this proves the lower bound in \eqref{eq:exact-smoothed-isometry}.

For the upper bound use the same
$\mathcal M_*=\Ad_{J_n}\circ\Delta$ as in \cref{prop:EB}.
In a basis indexed by $j=0,\ldots,D-1$, let
$U\ket j=e^{2\pi i j/D}\ket j$.
Then $\Delta=D^{-1}\sum_{l=0}^{D-1}\Ad_{U^l}$ and
\[
 \dnorm{\mathcal M_*-\Ad_{J_n}}
 \le \dnorm{\Delta-\id}\le 2(1-1/D).
\]
Indeed, the $l=0$ term is $D^{-1}\id$, and the other terms
form $(1-1/D)$ times a channel.
Set
\[
 s=\min\{1,\delta/[2(1-1/D)]\},\qquad
 \mathcal L_s=(1-s)\Ad_{J_n}+s\mathcal M_*.
\]
This is CPTP and within diamond distance $\delta$ of the target.
Equation~\eqref{eq:pinching-CP-benchmark} gives
\[
 \begin{gathered}
 \mathcal L_s\cple[(1-s)D+s]\mathcal M_*,\\
 (1-s)D+s=\max\{1,D(1-\delta/2)\}.
 \end{gathered}
\]
The same EB dominator is available in both families, completing
the proof.
\end{proof}

The preceding exact formulas describe exponential discrimination.
The following construction exhibits how persistent correlations
can instead produce a zero rate for a target outside the one-use
free set.

\begin{example}[Correlated alternatives with zero Stein rate]\label{sec:nonfaithfulness}
Work in the state specialization $A=\mathbb C$.
Choose a faithful state $\omega$, a unit vector $\ket\phi$ with
$\rho=\ket\phi\bra\phi\ne\omega$, and $0<c<1$.
On a block of size $k$ put
$\zeta_k=c\rho^{\otimes k}+(1-c)\omega^{\otimes k}$.
Let $\F_n$ be the convex hull of the states obtained by partitioning
the $n$ labeled sites into blocks and assigning to each block of
size $k$ either $\zeta_k$ or $\omega^{\otimes k}$.
For fixed $n$ this is a finite convex hull and hence compact.
All partitions ensure permutation invariance; joining partitions
ensures tensor closure. Tracing a site from a block replaces
$\zeta_k$ by $\zeta_{k-1}$ or removes a singleton, and preserves
the analogous product state. Thus marginal closure holds.
The faithful product state is included, so the family satisfies
both the basic and the additional conditions.
Writing $\operatorname{conv}$ for the convex hull, the one-use set is
$\F_1=\operatorname{conv}\{\omega,c\rho+(1-c)\omega\}$,
so $\rho\notin\F_1$. At blocklength $n$,
$\zeta_n\ge c\rho^{\otimes n}$ gives, identifying $\rho$ with
its preparation channel,
\[
 0\le E_n\le\log_2(1/c),\qquad
 \beta_{\eps,n}\ge c(1-\eps),\qquad R_\F(\rho)=0 .
\]
The correlated alternative $\zeta_n$ contains a target component
of fixed weight $c$ at every blocklength. This keeps the worst-case
type-II error bounded away from zero and explains the vanishing rate.
\end{example}

\section{Discussion}\label{sec:discussion}
For parallel discrimination against correlated alternative channels,
the regularized Umegaki channel relative entropy of resource is the
fixed-error Stein exponent, the CPTP-smoothed max-relative entropy
rate, and the threshold above which target acceptance decays
exponentially. Compactness and convexity, tensor closure, and a
faithful free replacer suffice for all these asymptotic statements.

Uniform auxiliary map approximation provides the operator control
needed to prove this correspondence. Iterative reduction of the
domination rate and amplification of a fixed accurate block produce
exponential diamond accuracy. These approximations establish the
ordinary testing limits directly, and an entropy comparison identifies
their common value. The same construction gives equipartition whenever
$0<\delta_n\le\bar\delta<2$ and $\log_2(1/\delta_n)=o(n)$, as well
as an exponential strong converse.

The quantitative completion theorem controls a different aspect of
the problem: its bound applies at a specified blocklength and accuracy
with a constant uniform over the stated class of targets and families.
Permutation invariance and fixed-input marginal closure permit weighted
discarding and local corrections, yielding an explicit overhead
$O_\varepsilon(n^{2/3}\log_2((n+1)/\delta))$ relative to the testing
coefficient. The faithful input state can be chosen to respect a
preferred reference state of the resource family.
This construction also gives an alternative proof of the Stein
identity through near subadditivity.

The applications include state preservation, coherence restrictions,
covariance, entanglement breaking, and positive partial transpose.
The isometric benchmarks resolve testing and smoothing exactly at
every blocklength. The zero-rate construction shows how persistent
correlations can suppress exponential discrimination even when the
target is outside the one-use free set.

The access model is parallel, with arbitrary correlated block
alternatives. Extensions to sequential access require a corresponding
causal model for those alternatives.
Further questions include the optimal strong-converse exponent,
equipartition at fixed positive exponential smoothing rates, and
resource-conversion rates under specified classes of superchannels.

\appendix
\crefalias{section}{appendix}
\crefalias{subsection}{appendix}
\crefalias{subsubsection}{appendix}
\section{Proof of quantitative completion}\label{app:quantitative-proof}

We prove \cref{thm:completion} under \cref{def:axioms} and the
additional conditions \ref{ax:permutation} and \ref{ax:marginal}.
We first extract a branch with uniform positive overlap and compare
testing against auxiliary extensions. Local operator estimates
then give quantitative completion for isometric targets, followed
by the reduction for general targets.

\subsection{Branch extraction and auxiliary extensions}\label{app:testing-lifts}
A completely positive branch of a channel $\M$ is a CP map
$\mathcal B$ for which $\M-\mathcal B$ is CP. Positivity of both
maps and trace preservation of $\M$ imply
$0\le\mathcal B^{\dagger}(I)\le\M^{\dagger}(I)=I$ and
$0\le(\M-\mathcal B)^{\dagger}(I)\le I$.
Thus both maps are trace nonincreasing; together
they can be viewed as the two outcomes of an instrument.
For $\mathcal B=p\Ad_V$, the probability of the branch on an
input state $\rho$ is $p\Tr V\rho V^{\dagger}$, which need not equal $p$.
In particular, the branch coefficient and its input-dependent
probability must be distinguished.

We first extract a branch with positive overlap from a testing
lower bound. We then compare discrimination of a channel with
discrimination of a Stinespring isometry against all extensions
of the original alternatives to an auxiliary output system.
An isometric target produces a pure output on a purified input,
making the branch argument available.
Access to its auxiliary output can improve discrimination, however, so
an explicit comparison is needed before using that argument for a
noisy target.

\subsubsection{A branch with uniform positive overlap}
\begin{lemma}\label{lem:branch}
Let $J:H\to K$ be an isometry and let $\M:\mathrm L(H)\to\mathrm L(K)$ be CPTP.
Fix $0<\delta<1$, and suppose
$\beta=b_\delta(\Ad_J,\M)>0$.
Set $r=\sqrt{1-\delta}$ and $c=(1-r)/(1+r)$.
Then there is $V:H\to K$ such that
\begin{equation}
 \opnorm V\le1,\qquad
 \Rea(J^{\dagger}V)\ge cI,\qquad
 \M\cpge\beta\Ad_V.
 \label{eq:extracted-branch}
\end{equation}
\end{lemma}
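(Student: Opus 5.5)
The plan is to run the rough-approximation step from the proof of \cref{prop:iterative-amplification} in the one-shot setting, and then rescale the resulting Kraus operator.

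\emph{Step 1: a testing tail bound.} Set $t=\delta/\beta$, which is finite because $\beta>0$. Take any tester with target acceptance $p$ and alternative acceptance $q$.
\begin{itemize}
\item If $p\ge1-\delta$, then the definition of $b_\delta(\Ad_J,\M)$ gives $q\ge\beta$, so $p-tq\le1-\delta$.
\item If $p<1-\delta$, the same bound holds trivially because $q\ge0$.
\end{itemize}
Hence $h_t(\Ad_J\Vert\M)\le1-\delta=r^2$.

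\emph{Step 2: auxiliary map approximation with a trivial environment.} Regard $J$ as a Stinespring isometry of $\Ad_J$ with one-dimensional environment $E\simeq\mathbb C$. Fix any dilation $W:H\to K\otimes F$ of $\M$. \Cref{lem:auxiliary-map-approximation} then gives a linear map $A:F\to\mathbb C$ with
\[
\opnorm A^2\le t,\qquad \opnorm{J-X}\le r,\qquad X=(I_K\otimes A)W:H\to K .
\]
Since $E$ is trivial, tracing out $E$ leaves $\Ad_X$. The converse assertion of the lemma therefore gives $\Ad_X\cple\opnorm A^2\M\cple t\M$, that is, $\M\cpge(\beta/\delta)\Ad_X$.

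\emph{Step 3: normalization.} Put $V=X/(1+r)$.
\begin{itemize}
\item Norm bound: $\opnorm X\le\opnorm J+\opnorm{X-J}\le1+r$, so $\opnorm V\le1$.
\item Overlap: because $J^\dagger J=I$,
\[
\Rea(J^\dagger X)=I+\Rea\bigl(J^\dagger(X-J)\bigr)\ge(1-r)I,
\]
using $\opnorm{J^\dagger(X-J)}\le r$ and the fact that the Hermitian part of an operator of norm at most $r$ is at least $-rI$. Dividing by $1+r$ gives $\Rea(J^\dagger V)\ge cI$.
\item Domination: $\beta\Ad_V=\beta(1+r)^{-2}\Ad_X$. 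Since $\delta<1\le(1+r)^2$, this is CP-dominated by $(\beta/\delta)\Ad_X$, which is in turn dominated by $\M$.
\end{itemize}
This proves \eqref{eq:extracted-branch}.

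The only delicate point is Step 2. One must check that \cref{lem:auxiliary-map-approximation} applies with a one-dimensional auxiliary space for the isometric target, so that a single operator $X$ (rather than an operator-valued map into an environment) approximates $J$ uniformly in operator norm. One must also check that the converse domination transfers the bound $\opnorm A^2\le t$ to $\Ad_X\cple t\M$. Both follow directly from the statement of that lemma, and the remaining steps are elementary norm estimates.
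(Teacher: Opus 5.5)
Your proof is correct, but it reaches the branch by a different route from the paper. You obtain the good Kraus combination from the SDP-based \cref{lem:auxiliary-map-approximation}, taking a trivial environment and $t=\delta/\beta$. This mirrors the rough approximation step in \cref{prop:iterative-amplification}. With $E\simeq\mathbb C$ the auxiliary map is a functional on $F$, so your $X$ is a combination $\sum_i z_iK_i$ of Kraus operators of $\M$ with $\sum_i|z_i|^2\le\delta/\beta$.

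The paper instead proves the lemma directly. It works with the ellipsoid $\mathcal E$ of Kraus combinations satisfying $\sum_i|z_i|^2\le1$, and sets $u=\sqrt\beta/(1-r)$. Using trace-norm duality and a minimax, it writes the operator-norm distance from $uJ$ to $\mathcal E$ as a maximum over $\trnorm X\le1$. A separating $X$ is then turned, through its singular-value decomposition, into a pure-input, rank-one-effect test whose alternative acceptance would fall below $\beta$.

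Your route is shorter given \cref{sec:one-shot}, and it makes the link with the amplification argument explicit. The paper's route is self-contained: it needs no SDP strong duality and no Moore--Penrose factorization, and it uses only rank-one tests. Because it controls the distance to $uJ$ directly rather than through the linearized quantity $p-tq$, it also gives the slightly stronger $\M\cpge(\beta/c^2)\Ad_V$. Your constants give $\M\cpge(\beta/\delta)(1+r)^2\Ad_V=(\beta/c)\Ad_V$. Both imply the stated $\M\cpge\beta\Ad_V$.
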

\begin{proof}
We construct $V$ by finding a Kraus combination $A$ close in
operator norm to a positive multiple of $J$, and then rescaling $A$.
The Kraus representation will give CP domination, while closeness
to $J$ will give the norm and overlap bounds in
\eqref{eq:extracted-branch}. To obtain this approximation, we show
that its failure would produce a feasible test with alternative
acceptance strictly below $\beta$.

Choose Kraus operators $K_i$ for $\M$ and define
\[
 \mathcal E=\left\{\sum_i z_iK_i:\sum_i|z_i|^2\le1\right\}.
\]
This set is compact and convex because it is the linear image of
the Euclidean unit ball of Kraus coefficients. We first check
that every $A\in\mathcal E$ gives a CP map dominated by $\M$.
Fix $A=\sum_i z_iK_i\in\mathcal E$, so that $\sum_i|z_i|^2\le1$.
By linearity of vectorization, $\ket A=\sum_i z_i\ket{K_i}$.
The Kraus representation of $\M$ and the definition of $\Ad_A$ give
\[
 C_\M=\sum_i\ket{K_i}\bra{K_i},\qquad
 C_{\Ad_A}=\ket A\bra A.
\]
For any vector $\ket v\in H\otimes K$, Cauchy--Schwarz therefore yields
\begin{align*}
 |\langle v\mid A\rangle|^2
 &=\left|\sum_i z_i\langle v\mid K_i\rangle\right|^2\\
 &\le\Bigl(\sum_i|z_i|^2\Bigr)
       \sum_i|\langle v\mid K_i\rangle|^2\\
 &\le\sum_i|\langle v\mid K_i\rangle|^2
 =\bra v C_\M\ket v.
\end{align*}
Thus $\bra v(C_\M-C_{\Ad_A})\ket v\ge0$ for every $\ket v$,
which means $C_\M-C_{\Ad_A}\ge0$. By the Choi criterion,
$\M-\Ad_A$ is completely positive, or equivalently
$\M\cpge\Ad_A$.

We now seek $A\in\mathcal E$ such that
\[
 \opnorm{uJ-A}\le ru,\qquad u=\frac{\sqrt\beta}{1-r}>0.
\]
This is the approximation needed for the final rescaling.
To prove that it exists, we express the distance from $uJ$ to
$\mathcal E$ as an optimization over operators $X$ that can be
converted into tests.
For $X\in\mathrm L(H,K)$, the support function of $\mathcal E$
with respect to the real trace pairing is defined by
\[
 h_{\mathcal E}(X)=\max_{A\in\mathcal E}\Rea\Tr X^{\dagger}A.
\]
It is the largest value of the real linear functional
$A\mapsto\Rea\Tr X^{\dagger}A$ on $\mathcal E$.
Writing $c_i=\Tr X^{\dagger}K_i$, we obtain
\[
 h_{\mathcal E}(X)
 =\max_{\sum_i|z_i|^2\le1}\Rea\sum_i z_ic_i
 =\left(\sum_i|c_i|^2\right)^{1/2}.
\]
Indeed, Cauchy--Schwarz gives the upper bound, and for $c\ne0$
it is attained by $z_i=\overline{c_i}/\|c\|_2$; when $c=0$,
both sides vanish.

To express the operator-norm distance from $uJ$ to $\mathcal E$,
use duality between the operator and trace norms:
\[
 \opnorm Y=\max_{\trnorm X\le1}\Rea\Tr X^{\dagger}Y.
\]
Applying this identity to $Y=uJ-A$ gives
\[
 \min_{A\in\mathcal E}\opnorm{uJ-A}
 =
 \min_{A\in\mathcal E}\max_{\trnorm X\le1}
       \Rea\Tr X^{\dagger}(uJ-A).
\]
Both feasible sets are compact and convex. For each fixed $u$,
the objective is continuous and affine in each variable, so minimax
exchanges the extrema. The inner minimum over $A$ then equals
$u\Rea\Tr X^{\dagger}J-h_{\mathcal E}(X)$. Substituting the
support function computed above yields
\begin{equation}
 \begin{aligned}
 &\min_{A\in\mathcal E}\opnorm{uJ-A}\\
 &=\max_{\trnorm X\le1}
 \left\{u\Rea\Tr X^{\dagger}J-
       \left(\sum_i|\Tr X^{\dagger}K_i|^2\right)^{1/2}\right\}.
 \end{aligned}
 \label{eq:ellipsoid-distance}
\end{equation}

Suppose, for a contradiction, that
$\min_{A\in\mathcal E}\opnorm{uJ-A}>ru$.
By \eqref{eq:ellipsoid-distance}, there is an operator $X$ with
$\trnorm X\le1$ such that
\[
 u\Rea\Tr X^{\dagger}J
 -\left(\sum_i|\Tr X^{\dagger}K_i|^2\right)^{1/2}>ru.
\]
Denote the left-hand side by $F(X)$, so that $F(X)>ru>0$.
In particular, $X\ne0$, and hence $0<\trnorm X\le1$.
Both terms in $F(X)$ scale by $s$ when $X$ is replaced by $sX$,
for $s>0$. Therefore
\[
 F\!\left(\frac{X}{\trnorm X}\right)
 =\frac{F(X)}{\trnorm X}\ge F(X)>ru,
\]
where the middle inequality uses $F(X)>0$ and $\trnorm X\le1$.
The normalized operator thus still satisfies the strict inequality,
and we may replace $X$ by it and assume $\trnorm X=1$.
Define
\[
 \alpha=\Rea\Tr X^{\dagger}J,\qquad
 q_X=\sum_i|\Tr X^{\dagger}K_i|^2.
\]
Rearranging the strict inequality gives
$\sqrt{q_X}<u(\alpha-r)$. Since its left-hand side is
nonnegative and $u>0$, we obtain $\alpha>r$.
On the other hand, trace-norm duality and the isometry property give
\[
 \alpha\le|\Tr X^{\dagger}J|
 \le\trnorm X\opnorm J=1.
\]
Consequently,
\[
 \sqrt{q_X}<u(\alpha-r)\le u(1-r)=\sqrt\beta,
\]
so $q_X<\beta$. We now use $X$ to construct a test whose target
acceptance exceeds $r^2=1-\delta$ and whose alternative acceptance
is $q_X$. This will contradict the definition of $\beta$.
Write a singular-value decomposition
$X=\sum_j x_j\ket{u_j}\bra{v_j}$ with $\sum_jx_j=1$.
The normalized input and measurement vectors
\[
 \ket\psi=\sum_j\sqrt{x_j}\ket j_R\ket{v_j}_H,\qquad
 \ket\phi=\sum_j\sqrt{x_j}\ket j_R\ket{u_j}_K
\]
define a feasible test. Indeed, its effect $Q=\ket\phi\bra\phi$
has target acceptance
\[
 \Tr Q(\id_R\otimes\Ad_J)(\ket\psi\bra\psi)
 =|\Tr X^{\dagger}J|^2\ge\alpha^2>1-\delta.
\]
For the alternative channel, the Kraus representation gives
\begin{align*}
 &\Tr Q(\id_R\otimes\M)(\ket\psi\bra\psi)\\
 &\quad=\sum_i|\bra\phi(I_R\otimes K_i)\ket\psi|^2\\
 &\quad=\sum_i|\Tr X^{\dagger}K_i|^2=q_X<\beta.
\end{align*}
This contradicts $\beta=b_\delta(\Ad_J,\M)$, the minimum
alternative acceptance among tests with target acceptance at least
$1-\delta$. Hence the distance is at most $ru$. Compactness of
$\mathcal E$ ensures that some $A\in\mathcal E$ attains it.

It remains to obtain the three bounds in \eqref{eq:extracted-branch}.
Write $A=uJ+D$, where $\opnorm D\le ru$, and set
$V=A/[u(1+r)]$. The triangle inequality gives
\[
 \opnorm V\le\frac{u\opnorm J+\opnorm D}{u(1+r)}\le1.
\]
Since $J^{\dagger}J=I$ and
$\Rea(J^{\dagger}D)\ge-\opnorm{J^{\dagger}D}I\ge-ruI$,
we also have
\[
 \Rea(J^{\dagger}V)
 =\frac{uI+\Rea(J^{\dagger}D)}{u(1+r)}
 \ge\frac{1-r}{1+r}I=cI.
\]
Finally, the CP domination established for every member of
$\mathcal E$ yields
\[
 \M\cpge\Ad_A
 =u^2(1+r)^2\Ad_V
 =\frac{\beta}{c^2}\Ad_V\cpge\beta\Ad_V.
\]
The last inequality uses $0<c<1$ and complete positivity of
$\Ad_V$, completing the proof.
\end{proof}

\subsubsection{Comparison with auxiliary extensions}
\begin{lemma}
\label{lem:extension}
Let $\N:\mathrm L(A)\to\mathrm L(B)$ be CPTP, let
$\mathcal F\subseteq\CPTP(A\to B)$ be nonempty, compact, and convex,
and fix a Stinespring isometry $V:A\to B\otimes E$.
Define
\[
 \mathcal G=\{\mathcal L\in\CPTP(A\to B\otimes E):
                    \Tr_E\circ\mathcal L\in\mathcal F\}.
\]
Write $\beta_\eps$ for worst-case testing of $\N$ against $\mathcal F$
and $\gamma_\delta$ for testing of $\Ad_V$ against $\mathcal G$.
For $0<\eps<1$, set
\begin{equation*}
 s=\sqrt{1-\eps},\quad r=(1+s)/2,\quad
 \delta=1-r^2,\quad t=(1-s)^2/4.
\end{equation*}
Then $0<\delta<1$, $t>0$, and
\begin{equation}
 t\beta_\eps\le\gamma_\delta,\qquad
 \gamma_\eps\le\beta_\eps.
 \label{eq:extension-comparison}
\end{equation}
The parameters are independent of the dimensions.
\end{lemma}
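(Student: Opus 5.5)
The proof treats the two inequalities in \eqref{eq:extension-comparison} separately. The first step checks that $\mathcal G$ is nonempty, compact and convex, so that \cref{lem:testing-minimax} applies to $\Ad_V$ against $\mathcal G$. This holds because $\mathcal G$ is the preimage of $\mathcal F$ under the affine continuous map $\mathcal L\mapsto\Tr_E\circ\mathcal L$, and $\mathcal M\otimes\R_{e}$ lies in $\mathcal G$ for any $\mathcal M\in\mathcal F$ and any state $e$ on $E$. The parameter claims are elementary: $1/2<r<1$ gives $0<\delta<1$, and $s<1$ gives $t>0$.

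\emph{The bound $\gamma_\eps\le\beta_\eps$.} Lift any tester $(\ket\psi,Q)$ for $\N$ against $\mathcal F$ to the tester $(\ket\psi,Q\otimes I_E)$ for $\Ad_V$.
\begin{itemize}
\item The target acceptance is unchanged, because $\Tr_E\Ad_V=\N$.
\item For $\mathcal L\in\mathcal G$, the alternative acceptance equals $\Tr Q(\id_R\otimes\Tr_E\mathcal L)(\ket\psi\bra\psi)$. This is at most $\sup_{\M\in\mathcal F}\Tr Q\sigma_\M^\psi$.
\end{itemize}
Taking the infimum over feasible testers gives the claim.

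\emph{The bound $t\beta_\eps\le\gamma_\delta$.} Write $\gamma_\delta$ as an infimum over testers of a supremum over $\mathcal G$. Fix a tester $(\ket\psi,Q)$ for $\Ad_V$ with $p:=\bra\phi Q\ket\phi\ge r^2$, where $\ket\phi=(I_R\otimes V)\ket\psi$. It suffices to find $\mathcal L\in\mathcal G$ with $\Tr Q(\id_R\otimes\mathcal L)(\ket\psi\bra\psi)\ge t\beta_\eps$.

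Let $\rho=\Tr_E\ket\phi\bra\phi$ be the output of $\N$ on $R\otimes B$. With $\ket\psi$ fixed, $\beta_\eps$ is at most the value of the state-testing problem $\inf_{Q'}\sup_{\M\in\mathcal F}\Tr Q'\sigma_\M^\psi$, over effects $Q'$ with $\Tr Q'\rho\ge1-\eps$. The set $\{\sigma_\M^\psi\}$ is compact and convex, so Sion's theorem applies as in \cref{lem:testing-minimax}. It gives one $\M\in\mathcal F$ whose output $\sigma=\sigma_\M^\psi$ satisfies $\beta^{\rm st}_\eps(\rho\Vert\sigma)\ge\beta_\eps$.

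Semidefinite duality for state hypothesis testing then supplies $\mu\ge0$ and $Y\ge0$ with
\[
\mu\rho\le\sigma+Y,\qquad \mu(1-\eps)-\Tr Y\ge\beta_\eps .
\]

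Next, every extension of $\sigma$ is realized by a member of $\mathcal G$. Take a Stinespring isometry $W:A\to B\otimes F$ of $\M$. The vector $(I_R\otimes W)\ket\psi$ purifies $\sigma$, so every state on $R\otimes B\otimes E$ with $RB$-marginal $\sigma$ has the form $(\id\otimes\Phi)$ applied to this purification, for some channel $\Phi:F\to E$. The corresponding $\mathcal L=(\id_B\otimes\Phi)\circ\Ad_W$ lies in $\mathcal G$. It therefore suffices to build a state $\Omega$ on $RBE$ with $\Tr_E\Omega=\sigma$ and $\Tr Q\Omega\ge t\beta_\eps$.

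To build $\Omega$, I mimic the splitting in the proof of \cref{lem:auxiliary-map-approximation}.
\begin{itemize}
\item Set $G=\sigma+Y$, and split $\sqrt\mu\ket\phi=\ket{f_0}+\ket{f_1}$ with $\ket{f_0}=(\sigma G^+\otimes I_E)\sqrt\mu\ket\phi$ and $\ket{f_1}=(YG^+\otimes I_E)\sqrt\mu\ket\phi$.
\item The argument given there shows $\Tr_E\ket{f_0}\bra{f_0}\le\sigma$ and $\Tr_E\ket{f_1}\bra{f_1}\le Y$, so that $\langle f_1|f_1\rangle\le\Tr Y$.
\item Put $\Omega=\ket{f_0}\bra{f_0}+\Omega'$, where $\Omega'\ge0$ is any extension of $\sigma-\Tr_E\ket{f_0}\bra{f_0}$, for example its tensor product with a state on $E$.
\end{itemize}
Then $\Tr Q\Omega\ge\|Q^{1/2}f_0\|^2$. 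The triangle inequality and $Q\le I$ give
\[
\|Q^{1/2}f_0\|\ge\sqrt{\mu p}-\sqrt{\Tr Y}\ge\sqrt\mu\,r-\sqrt{\mu s^2-\beta_\eps}\ge\sqrt\mu\,(r-s)=\sqrt\mu\,\tfrac{1-s}2 .
\]
Finally, $\mu s^2\ge\beta_\eps+\Tr Y\ge\beta_\eps$ gives $\sqrt\mu\ge\sqrt{\beta_\eps}$, since $s\le1$. Squaring yields $\Tr Q\Omega\ge t\beta_\eps$.

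The main obstacle is this last paragraph: converting the dual certificate $(\mu,Y)$ into a legitimate extension of $\sigma$ whose overlap with $Q$ is controlled. The points needing care are the Moore--Penrose bookkeeping when $G$ is singular, and checking that the leftover $\sigma-\Tr_E\ket{f_0}\bra{f_0}$ is positive so that $\Omega'$ exists. I would also confirm strong duality for the state-testing SDP, using strict feasibility as in \cref{lem:auxiliary-map-approximation}, and verify that the minimax exchange with $\ket\psi$ held fixed is valid.
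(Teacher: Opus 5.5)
Your proof is correct. The set properties of $\mathcal G$ and the bound $\gamma_\eps\le\beta_\eps$ match the paper. For $t\beta_\eps\le\gamma_\delta$ you take the opposite route to the paper's.

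\textbf{The paper's route.} It fixes one hardest $\M$ from \cref{lem:testing-minimax}. It writes the best lifted acceptance over extensions of $\sigma$ as the SDP $\max_{\Tr_E\tau=\sigma}\Tr Q\tau=\inf_{H\otimes I_E\ge Q}\Tr H\sigma$. Each dual-feasible $H$ is then rounded to the spectral projector $P=\mathbf 1_{[t,\infty)}(H)$. The estimate $\sqrt{\bra\phi Q\ket\phi}\le\sqrt{\Tr P\rho}+\sqrt t$ makes $P$ a feasible test of $\N$ against $\M$, and $\Tr P\sigma\le\Tr H\sigma/t$ finishes. So the paper turns any certificate bounding the extension problem into an original test.

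\textbf{Your route.} You dualize the state-testing problem for $\N$ at the fixed input instead. You then convert its certificate $(\mu,Y)$ into an explicit extension $\Omega$, using the Moore--Penrose splitting from \cref{lem:auxiliary-map-approximation}.

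\textbf{The points you flagged.} All of them go through:
\begin{enumerate}
\item Slater holds for the testing primal, e.g.\ $Q'=(1-\eps/2)I$, so an optimal $(\mu,Y)$ exists.
\item If $\beta_\eps>0$ then $\mu>0$; if $\beta_\eps=0$ the claim is trivial.
\item Since $\ket\phi\in\supp\rho\otimes E$ and $\mu\rho\le G$, the projector $GG^+\otimes I_E$ fixes $\ket\phi$.
\item The chain $\mu\,\sigma G^+\rho G^+\sigma\le\sigma G^+\sigma\le\sigma$ shows the leftover is positive.
\item Your purification argument realizing every extension of $\sigma$ by a member of $\mathcal G$ is the same one the paper uses. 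It needs the same care in extending $\Phi$ off the support of the purifying marginal, by sending the complement to a fixed state.
\end{enumerate}
The per-input Sion step is also valid. The paper's globally hardest $\M$ would serve equally well, since $\beta^{\rm st}_\eps(\rho\Vert\sigma^\psi_\M)\ge b_\eps(\N,\M)$.

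\textbf{What each route buys.} The paper's route needs no dual attainment and no pseudo-inverse bookkeeping: it uses a minimizing sequence of $H$ and a single spectral threshold. Yours exhibits a concrete good extension. It also gives the slightly stronger bound $\Tr Q\Omega\ge t\mu\ge t\beta_\eps/(1-\eps)$.
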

\begin{proof}
The set $\mathcal G$ is nonempty: append a fixed state on $E$ to
any channel in $\mathcal F$. It is compact and convex, being the
inverse image of $\mathcal F$ under the continuous linear partial-trace
map, intersected with the compact convex set of channels.
The second inequality follows by lifting any original effect $Q$
to $Q\otimes I_E$, preserving all relevant acceptance probabilities.

For the first inequality, we must rule out an excessive advantage
from access to $E$. A lifted effect induces, through a dual
optimization and spectral thresholding, a feasible effect on the
original output. The constants in this construction depend only
on the acceptance tolerance, not on $\dim E$.

Use \cref{lem:testing-minimax} to choose
a hardest $\M\in\mathcal F$ with
$b_\eps(\N,\M)=\beta_\eps$.
Fix any feasible lifted tester. Purify its input and extend its
effect by the identity on the additional reference, so that its
input state is represented by a unit vector $\ket\psi$. Set
\[
 \begin{gathered}
 \ket\phi=(I_R\otimes V)\ket\psi,\qquad
 \rho=\Tr_E\ket\phi\bra\phi,\\
 \sigma=(\id\otimes\M)(\ket\psi\bra\psi).
 \end{gathered}
\]
For this fixed input state, the outputs of channels extending $\M$
are exactly the states $\tau_{RBE}$ with $\Tr_E\tau=\sigma$.
To verify the nontrivial direction, apply a Stinespring dilation
of $\M$ to $\ket\psi$, obtaining a purification of $\sigma$
with auxiliary system $Z$. Purify the desired $\tau$ further.
The Schmidt decompositions provide an isometry from the support
of the first purification's $Z$ marginal into the second purifying
space. Tracing out the extra register gives a channel on that
support from $Z$ to $E$; extend it to the whole $Z$ by sending the
orthogonal complement to a fixed state. Acting with this channel
on the dilation of $\M$ gives an extension of $\M$ on \emph{all}
inputs and produces the required $\tau$ on $\ket\psi$.

Let $Q$ be the lifted effect. Its largest acceptance over these
extensions is the following semidefinite program (SDP), with dual
variable $H=H^{\dagger}$ acting on $R\otimes B$:
\begin{align}
 c_Q&=\max_{\tau\ge0,\ \Tr_E\tau=\sigma}\Tr Q\tau\notag\\
 &=\inf_{\substack{H=H^{\dagger}\\H\otimes I_E\ge Q}}\Tr H\sigma.
 \label{eq:extension-sdp}
\end{align}
The primal is nonempty and compact, and $H=2I$ is strictly dual
feasible; strong duality applies. We use a minimizing sequence
of dual feasible operators, which also covers singular $\sigma$.
Every feasible $H$ is positive, since $H\otimes I_E\ge Q\ge0$.

Fix such $H$ and let $P=\mathbf1_{[t,\infty)}(H)$ be its spectral
projection onto eigenvalues at least $t$. Put $p=\Tr P\rho$.
Split $\ket\phi$ into its $P$ and $I-P$ components.
Using $Q\le I$ and $Q\le H\otimes I_E$ gives
\begin{align*}
 \sqrt{\langle\phi|Q|\phi\rangle}
 &\le\|Q^{1/2}(P\otimes I_E)\ket\phi\|\\
 &\quad+\|Q^{1/2}((I-P)\otimes I_E)\ket\phi\|\\
 &\le\sqrt p+\sqrt t.
\end{align*}
The first component has norm at most $\sqrt p$ because $Q\le I$.
On the second component, $H$ has eigenvalues below $t$, so
$Q\le H\otimes I_E$ bounds its squared norm by $t(1-p)\le t$.
The target acceptance is at least $1-\delta=r^2$.
Since $r-\sqrt t=s$, we obtain $p\ge s^2=1-\eps$.
Thus $P$, with the same original input state, is a feasible test of
$\N$ against $\M$.
Its alternative acceptance is bounded by
$\Tr P\sigma\le\Tr H\sigma/t$, whence
\[
 \beta_\eps=b_\eps(\N,\M)\le\Tr H\sigma/t.
\]
Taking the infimum in \eqref{eq:extension-sdp} yields
$c_Q\ge t\beta_\eps$. The lifted tester's worst-case alternative
acceptance is at least $c_Q$. Taking the infimum over lifted testers
proves the first inequality in \eqref{eq:extension-comparison}.
Here the auxiliary projector estimates the single-alternative
testing value at the fixed hardest $\M$. The original lifted
tester remains common to all extensions.
\end{proof}

\subsection{Local approximation and free domination}\label{sec:local}\label{app:completion-tools}

Three lemmas control the local corrections used in completion.
After discarding a sublinear number of sites, the first lemma
approximates a permutation-invariant operator by terms supported
on few sites. The second converts an expansion of this kind into
a free CP dominator. The third constructs a compatible approximation
to the projector onto an isometry's product image.

An operator on a tensor product is \emph{supported on} a set of sites $S$
if it equals an operator on $S$ tensored with the identity elsewhere.
A finite local-support expansion is an expression
$O=\sum_i c_iF_i$ with $\opnorm{F_i}\le1$.
Its coefficient cost is $\sum_i|c_i|$.
We track both the maximum support size and this cost.
In this section $\supp F_i$ denotes a chosen set of sites supporting
$F_i$, and $|\supp F_i|$ is its cardinality.
The expansions allow arbitrary complex coefficients and
supported contractions. In particular, the $F_i$ need not be
Pauli operators or elements of any fixed operator basis. Their site
support is distinct from the range support of a positive operator.
For products, support sizes add as an upper bound and coefficient
costs multiply: if $O_1=\sum_i c_iF_i$ and $O_2=\sum_jd_jG_j$,
then $F_iG_j$ is a contraction on the union of the two supports
and the resulting cost is at most $(\sum_i|c_i|)(\sum_j|d_j|)$.

\subsubsection{Weighted discarding}
Permutation symmetry controls the algebra in which an operator
lives, but does not itself make that operator local. Weighted
discarding provides the additional regularization. The following
lemma quantifies both the approximation error and the cost of a
local expansion; both estimates are needed for free domination.

\begin{lemma}
\label{lem:local-approximation}
Fix a faithful $\tau\in\Dens(\mathbb C^d)$ and put
$t=\lambda_{\min}(\tau)>0$.
There are constants $n_0$ and $C_{d,t}$, depending only on $d$ and $t$,
with the following property.
For $n\ge n_0$ and $0<\xi\le1/16$, set
\begin{equation*}
 \ell_n=\ln\frac{n+1}{\xi},\qquad
 m=\lceil n^{2/3}\ell_n\rceil,\qquad k=n-m,
\end{equation*}
and assume $m\le n/2$.
For every permutation-invariant contraction $W$ on
$(\mathbb C^d)^{\otimes n}$, define
\[
 X=\mathbb E_{\tau,m}(W)
 :=\Tr_m[(I\otimes(\tau^{1/2})^{\otimes m})
                W(I\otimes(\tau^{1/2})^{\otimes m})].
\]
Here $\Tr_m$ traces out the last $m$ sites, leaving the first $k$.
There is a finite expansion $P=\sum_i c_iF_i$ on $k$ sites such that
\begin{align}
 \opnorm{P-X}&\le\xi,\label{eq:local-error}\\
 |\supp F_i|&\le C_{d,t}n^{2/3},\notag\\
 \sum_i|c_i|&\le\exp(C_{d,t}n^{2/3}),
 \qquad \opnorm{F_i}\le1.
 \label{eq:local-coefficients}
\end{align}
The constants are uniform in $\xi$ and $W$.
The map $\mathbb E_{\tau,m}$ is unital and completely positive.
\end{lemma}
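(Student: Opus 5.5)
The easy claim comes first. By cyclicity of the partial trace on the last $m$ factors, $X=\Tr_m[W(I\otimes\tau^{\otimes m})]$. The map $\mathbb E_{\tau,m}$ is a congruence by $I\otimes(\tau^{1/2})^{\otimes m}$ followed by a partial trace. Both are CP, and $\mathbb E_{\tau,m}(I)=(\Tr\tau)^mI=I$, so the map is unital and CP, hence contractive in operator norm. In particular $\opnorm X\le1$.

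For the expansion I will use Schur--Weyl duality. A permutation-invariant operator on $(\mathbb C^d)^{\otimes n}$ lies in the algebra generated by the collective operators $S_{ab}=\sum_{i=1}^n\ket a\bra b_i$, that is, the image of the universal enveloping algebra of $\mathfrak{gl}_d$. I plan to decompose $W$ into components $W^{(\ell)}$ of polynomial degree $\ell$ in the normalized averages $S_{ab}/n$. This uses a symmetrized (Harish-Chandra/PBW-type) basis, so that each degree-$\ell$ component is an average over $\ell$-site subsets of operators supported on $\ell$ sites.

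The key step is to show that weighted discarding damps high degrees. Split $\tau=t d\,(I/d)+(1-td)\tau'$ and expand $\tau^{\otimes m}$ over subsets of discarded sites. A degree-$\ell$ symmetrized term touches a uniformly random $\ell$-subset. With probability about $1-(1-m/n)^\ell$ that subset meets a discarded site, where it is contracted against $\tau$. The faithful weight $t>0$ should make this contraction strictly norm-decreasing on the traceless part. I aim for a bound
\[
\opnorm{\mathbb E_{\tau,m}(W^{(\ge L)})}\le e^{-c_{d,t}Lm/n}
\]
uniformly over contractions $W$. With $m\approx n^{2/3}\ell_n$, choosing $L=\lceil C n^{1/3}\rceil\le C_{d,t}n^{2/3}$ makes this at most $\xi$, which gives \eqref{eq:local-error}.

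For the low-degree part, each degree-$\ell$ monomial in the averages $S_{ab}/n$ is an average of products of single-site matrix units on at most $\ell$ sites. Its coefficient cost is therefore at most $(d^2)^\ell$ times the corresponding coefficient of $W$. The remaining task is a uniform bound of the form $\exp(O(L))$ on the coefficients of the components of degree at most $L$ of a contraction. Together with $L\le C_{d,t}n^{2/3}$, this gives \eqref{eq:local-coefficients}; the requirements $n\ge n_0$ and $m\le n/2$ absorb lower-order terms.

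I expect this coefficient bound to be the main obstacle. A contraction can have individually huge degree components that cancel. My first attempt will be to avoid decomposing $W$ itself and instead apply the decomposition after a fixed number of discarding steps. Concretely, I would write $\mathbb E_{\tau,m}$ as an $m$-fold composition of one-site discardings and use the permutation average at each step. Each step applies a Markov-type operator with spectral gap $\ge c_{d,t}/n$ on the degree filtration, so its degree-$\ell$ eigencomponents have norms controlled by the operator norm of the input. If that fails, the fallback is a polynomial (Chebyshev-type) approximation in the collective Casimir operators, which bounds the coefficients by $\exp(O(L))$ directly.
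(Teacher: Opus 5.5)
Your argument that $\mathbb E_{\tau,m}$ is unital and completely positive is fine. The core of the plan, however, has a genuine gap: the damping estimate $\opnorm{\mathbb E_{\tau,m}(W^{(\ge L)})}\le e^{-c_{d,t}Lm/n}$, uniform over contractions, is false, not merely unproved.

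\textbf{Why the damping estimate fails.} Since $\mathbb E_{\tau,m}(W^{(<L)})$ is a sum of terms supported on fewer than $L$ sites, that estimate would give every $X$ an approximation with support $O(n^{1/3})$, as in your choice $L=\lceil Cn^{1/3}\rceil$. This is impossible. Take $d=2$, $\tau=I/2$, and $W=g^{\otimes n}$ with $g=\operatorname{diag}(e^{i\phi},e^{-i\phi})$ and $\phi=m^{-1/2}$. Then $X=(\cos\phi)^m g^{\otimes k}$ with $(\cos\phi)^m\ge e^{-1}$.
\begin{enumerate}
\item The diagonal of any $P=\sum_i c_iF_i$ with $|\supp F_i|\le R$ is a multilinear polynomial of degree at most $R$ on $\{0,1\}^k$.
\item Symmetrizing it gives a polynomial of degree at most $R$ in the Hamming weight $w$. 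Its real part must lie within $\xi\le1/16$ of $(\cos\phi)^m\cos(\phi(k-2w))$ at every integer $0\le w\le k$.
\item This forces about $2\phi k/\pi$ sign changes, so $R\gtrsim k/\sqrt m\gtrsim n^{2/3}/\sqrt{\ell_n}$. This exceeds $Cn^{1/3}$ for large $n$, because $m\le n/2$ forces $\ell_n\le n^{1/3}/2$.
\end{enumerate}
Your per-term heuristic is correct: here a normalized degree-$\ell$ average is multiplied by exactly $\binom{k}{\ell}/\binom{n}{\ell}$. But the degree components of a contraction are not norm-bounded. Write $g^{\otimes n}=\sum_S(\cos\phi)^{n-|S|}H_S$ with $H=g-(\cos\phi)I$. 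The part with $|S|=\ell\approx n\phi$ has norm exponentially large in $n/\sqrt m$, while its damping factor is only about $e^{-\sqrt m}$. Cancellation across degrees is essential, so the obstacle you anticipate for the low-degree coefficients also destroys the high-degree bound.

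Neither fallback repairs this. A spectral gap for one-site discarding still acts on these exponentially large components. Polynomials in the Casimir operators act as scalars on each Schur--Weyl block, so they cannot approximate a general permutation-invariant $W$.

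\textbf{What the paper does instead.} The missing idea is to carry the cancellations in a bounded density over $U(d)$ rather than in degree components.
\begin{enumerate}
\item Schur--Weyl duality and Schur orthogonality give $W=\int f_W(g)\,g^{\otimes n}\,dg$ with $|f_W(g)|\le(n+d^2+1)^{d^2-1}$, uniformly over permutation-invariant contractions.
\item Weighted discarding then acts pointwise, $\mathbb E_{\tau,m}(g^{\otimes n})=[\Tr(\tau g)]^m g^{\otimes k}$. Faithfulness shows that $|\Tr(\tau g)|\ge1-tu^2/2$ forces $\opnorm{g-zI}\le u$ for a phase $z$. So the damping is quadratic in the distance from the scalar unitaries, not linear in a degree.
\item With $u=\sqrt{2h/(tm)}$ and $h=O(\ell_n)$, the far-from-scalar $g$ are dropped entirely, at cost a polynomial times $e^{-h}$.
\item Only on the good set is $g^{\otimes k}=z^k(I+H)^{\otimes k}$ expanded and truncated at $|S|\le R=\lceil e^2ku+2h\rceil=O(n^{2/3})$. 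The coefficient mass is at most a polynomial times $e^{ku}=\exp(O(n^{2/3}))$, and Carath\'eodory's theorem makes the expansion finite.
\end{enumerate}
The order of operations matters even if you raise $L$ to order $n^{2/3}$. Take $g$ at distance a few times $L/k$ from the scalars, and truncate by the number of traceless factors. Then $\mathbb E_{\tau,m}$ applied to the low-degree part of $g^{\otimes n}$ is exponentially large, while $[\Tr(\tau g)]^m g^{\otimes k}$ is a contraction. Such $g$ must therefore be removed before any truncation. The scale $ku\approx n/\sqrt m$ matches the lower bound above, which is why the support is of order $n^{2/3}$ up to logarithms rather than $n^{1/3}$.
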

\begin{proof}
\emph{Step 1: A controlled integral representation.}
The last assertion follows directly from the displayed sandwich
formula and $\Tr\tau^{\otimes m}=1$.
For $d=1$ take $P=X=W$, supported on the empty set.
For $d>1$, let $U(d)$ denote the unitary group on $\mathbb C^d$.
Index partitions of $n$ by $\lambda\vdash n$, and write
$\ell(\lambda)$ for the number of their nonzero parts.
Schur--Weyl duality \cite{FultonHarris} gives
\[
 (\mathbb C^d)^{\otimes n}
 =\bigoplus_{\lambda\vdash n,\ \ell(\lambda)\le d}
 Q_\lambda\otimes P_\lambda,\qquad
 W=\bigoplus_\lambda W_\lambda\otimes I_{P_\lambda}.
\]
Here $Q_\lambda$ carries an irreducible representation of $U(d)$, and
$P_\lambda$ carries one of the permutation group. The two commuting
actions, denoted by $D_\lambda(g)$ on $Q_\lambda$ and
$\Pi_\lambda(\pi)$ on $P_\lambda$, have the forms
\[
 g^{\otimes n}=\bigoplus_\lambda D_\lambda(g)\otimes I_{P_\lambda},
 \qquad U_\pi=\bigoplus_\lambda I_{Q_\lambda}\otimes\Pi_\lambda(\pi).
\]
Commutation of $W$ with every $U_\pi$ gives the displayed block
form of $W$. This is the precise commutant consequence of
Schur--Weyl duality used here.
Put $r_\lambda=\dim Q_\lambda$.
Schur orthogonality, with normalized Haar measure, yields
\begin{align}
 W&=\int_{U(d)}f_W(g)g^{\otimes n}\,dg,\label{eq:fourier-W}\\
 f_W(g)&=\sum_\lambda r_\lambda
              \Tr[W_\lambda D_\lambda(g)^{\dagger}].\notag
\end{align}
Indeed, the integral of
$\Tr[Z D_\lambda(g)^{\dagger}]D_\mu(g)$ is zero for $\lambda\ne\mu$
and equals $Z/r_\lambda$ otherwise, by entrywise orthogonality.
Since $\opnorm{W_\lambda}\le1$,
\begin{equation*}
 |f_W(g)|\le v_n:=\sum_\lambda r_\lambda^2
 \le\binom{n+d^2-1}{d^2-1}
 \le(n+d^2+1)^{d^2-1}.
\end{equation*}
The sum is the dimension of the permutation commutant.
To bound it, expand operators in tensor products of the $d^2$
matrix units. Invariance under site permutations makes each
coefficient constant on an orbit. Each orbit is determined by
$d^2$ multiplicities summing to $n$, so there are
$\binom{n+d^2-1}{d^2-1}$ such orbits.

\emph{Step 2: Damping under weighted discarding.}
The function $f_W$ may be complex valued; this representation
is a linear integral, not a probability mixture. Its useful feature
is the polynomial bound on its absolute coefficient mass. On a
unitary tensor power, weighted discarding acts as
\[
 \mathbb E_{\tau,m}(g^{\otimes n})
       =[\Tr(\tau g)]^m g^{\otimes k}.
\]
Each discarded site contributes one scalar factor. Applying
$\mathbb E_{\tau,m}$ to \eqref{eq:fourier-W} therefore gives
\begin{equation}
 X=\int f_W(g)\,[\Tr(\tau g)]^m g^{\otimes k}\,dg .
 \label{eq:discard-Fourier}
\end{equation}
This formula explains the regularization: if
$|\Tr(\tau g)|$ is appreciably below one, its $m$th power damps
the contribution. If it is close to one, faithfulness of $\tau$
forces $g$ close to a scalar unitary, as quantified below.
Choose
\[
 \begin{gathered}
 h=(d^2+2)\ln(n+d^2+1)+\ln(4/\xi),\\
 u=\sqrt{\frac{2h}{tm}},\qquad
 R=\lceil e^2ku+2h\rceil.
 \end{gathered}
\]
Uniformly in the permitted $\xi$,
$h=O_d(\ell_n)$ and $u=O_{d,t}(n^{-1/3})$.
The active-regime assumption $m\le n/2$ also gives
$\ell_n\le n^{1/3}/2$, hence
$R=O_{d,t}(n^{2/3})$.
Increasing $n_0$ ensures $u<1$ and $R<k$.

\emph{Step 3: Concentration near scalar unitaries.}
Divide the integral into two regions, according to this damping.
Call $g$ good when $|\Tr(\tau g)|\ge1-tu^2/2$, and put
$z=\Tr(\tau g)/|\Tr(\tau g)|$ on that set.
Positivity of $\tau-tI$ implies
\begin{align*}
 t\opnorm{g-zI}^2
 &\le t\Tr[(g-zI)^{\dagger}(g-zI)]\\
 &\le\Tr[\tau(g-zI)^{\dagger}(g-zI)]\\
 &=2(1-|\Tr(\tau g)|)\le tu^2.
\end{align*}
For bad $g$, $|\Tr(\tau g)|^m\le e^{-h}$.
Its contribution to \eqref{eq:discard-Fourier} therefore has norm
at most $v_ne^{-h}$.
The estimate applies to arbitrary faithful $\tau$ and unitary $g$,
including the noncommuting case.

\emph{Step 4: Truncation and a finite local expansion.}
On the good set write $g=z(I+H)$, where $\opnorm H\le u$,
and expand $g^{\otimes k}=z^k\sum_{S\subseteq[k]}H_S$.
Here $H_S$ acts as $H$ on $S$ and as the identity elsewhere.
Truncate to $|S|\le R$ and integrate to define $P$.
The omitted tail obeys
\begin{equation*}
 \sum_{r>R}\binom kr u^r
 \le\sum_{r>R}\left(\frac{eku}{r}\right)^r
 \le 2e^{-R},
\end{equation*}
because $R\ge e^2ku$.
As $R\ge2h$ and $v_ne^{-h}\le\xi/4$, it follows that
\[
 \opnorm{P-X}\le v_ne^{-h}+2v_ne^{-R}
 \le \xi/4+2v_ne^{-2h}\le\xi .
\]
Each $H_S/u^{|S|}$ is a supported contraction, and the integral
expansion has absolute coefficient mass at most
\[
 v_n\sum_{r\le R}\binom kr u^r
 \le v_ne^{ku}\le\exp[O_{d,t}(n^{2/3})].
\]
Absorb the complex phases into the contractions, normalize by this
upper bound, and add a zero contraction if necessary.
The resulting barycenter lies in the convex hull of the compact
union of contraction balls supported on at most $R$ sites.
Carath\'eodory's theorem in a finite-dimensional real vector space
gives a finite convex representation without increasing its cost.
This proves \eqref{eq:local-error}--\eqref{eq:local-coefficients}.
\end{proof}

\subsubsection{Free multipliers}
The next lemma turns a local-support expansion into CP domination
by a member of the free family. The correction operator itself
need not represent a free operation. Instead, replacing inputs
and outputs on its support gives a free channel that dominates
the corrected branch at a cost exponential only in that support.
Convexity then combines the separate corrections.

\begin{lemma}\label{lem:multiplier}
Let an admissible family also satisfy \ref{ax:permutation} and
\ref{ax:marginal}, with local input $A$ and output $D$,
$a=\dim A$, $d=\dim D$, faithful input state $\tau$, and
faithful replacer state $\omega$.
Put $t=\lambda_{\min}(\tau)$ and $w=\lambda_{\min}(\omega)$.
Suppose $\M\in\F_n$ and $\M\cpge p\Ad_K$, where $p>0$
and $K:A^n\to D^n$.
Let $O=\sum_i c_iF_i$ be a finite expansion into output
contractions of support sizes $r_i$. Define
\begin{equation}
 h_r=(ad)^r[(at)(dw)]^{-r/2},\qquad
 H=\sum_i|c_i|h_{r_i}.
 \label{eq:multiplier-cost}
\end{equation}
If $H>0$, there is $\M'\in\F_n$ such that
\begin{equation}
 \M'\cpge \frac p{H^2}\Ad_{OK}.
 \label{eq:multiplier-order}
\end{equation}
\end{lemma}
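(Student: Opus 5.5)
The plan is to reduce to a single local term by a weighted Cauchy--Schwarz inequality in Choi space, and then to treat each term by replacing its support with free components. Write $h_i=h_{r_i}$ and discard the terms with $c_i=0$.

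\emph{Step 1: splitting the expansion.} For any vectors $x_i$ and positive weights $a_i$, Cauchy--Schwarz gives $|\sum_i x_i\rangle\langle\sum_i x_i|\le(\sum_ia_i)\sum_i a_i^{-1}|x_i\rangle\langle x_i|$. Apply this to the Choi vectors $x_i=c_i\ket{F_iK}$ with $a_i=|c_i|h_i$. By the Choi criterion this yields
\[
 \Ad_{OK}\cple H\sum_i\frac{|c_i|}{h_i}\Ad_{F_iK}.
\]
It therefore suffices to find, for each $i$, a free $\M_i\in\F_n$ with $\M_i\cpge p\,h_i^{-2}\Ad_{F_iK}$. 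We then set $\M'=\sum_i(|c_i|h_i/H)\M_i$, which lies in $\F_n$ by convexity \ref{ax:convex}. Summing gives
\[
 \M'\cpge\frac pH\sum_i\frac{|c_i|}{h_i}\Ad_{F_iK}\cpge\frac p{H^2}\Ad_{OK}.
\]

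\emph{Step 2: the free channel attached to a support.} Fix a term $F=F_i$ with support $S$ of size $r$. Using \ref{ax:permutation}, we move $S$ to the last $r$ sites. Iterated marginal closure \eqref{eq:iterated-marginal} then gives a free channel in $\F_{n-r}$, namely $X\mapsto\Tr_{D_S}\M(X\otimes\tau^{\otimes r})$. We tensor it with $\R_\omega^{\otimes r}$, which is allowed by \ref{ax:tensor} and \ref{ax:replacer}, and permute back. In composed form the result is
\[
 \M_S=(\R_\omega^{\otimes r}\circ\Tr)_{D_S}\circ\M\circ(\R_\tau^{\otimes r}\circ\Tr)_{A_S}\in\F_n,
\]
where the subscript indicates that the map acts on the sites $S$ only.

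\emph{Step 3: two replacer dominations.} This is the main step. On the output side, apply \eqref{eq:dimension-order} to $(\id\otimes\Ad_F)(\rho)$ with output system $D_S$ of dimension $d^r$. Since $F^\dagger F\le I$, the reference marginal satisfies $\Tr_{D_S}[(\id\otimes\Ad_F)(\rho)]\le\Tr_{D_S}\rho$. Together with $\omega^{\otimes r}\ge w^rI$ this gives
\[
 \Ad_F\cple(d/w)^r(\R_\omega^{\otimes r}\circ\Tr)_{D_S}.
\]
On the input side, the same dimension bound with a maximally entangled input on $A_S$ gives $\id_{A_S}\cple(a/t)^r(\R_\tau^{\otimes r}\circ\Tr)_{A_S}$.

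\emph{Step 4: chaining.} Start from $p\Ad_K\cple\M$. Postcompose with $\Ad_F$ and apply the output bound; then precompose with the identity on $A_S$ and apply the input bound. Both operations preserve the CP order, so
\[
 p\Ad_{FK}\cple(d/w)^r(a/t)^r\M_S=h_r^2\M_S.
\]
The last equality holds because $h_r^2=(ad)^{2r}/(atdw)^r=(a/t)^r(d/w)^r$.

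The delicate point is Step 3 on the output side. Here $\Ad_F$ is not CP-dominated by the identity map, so one must argue at the level of states through \eqref{eq:dimension-order}, using the contraction property of $F$ to control the reference marginal. One must also check that the site permutations used to apply \eqref{eq:iterated-marginal} to an arbitrary support $S$ are undone consistently.
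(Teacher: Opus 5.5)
Your proof is correct and follows essentially the paper's route. It uses the same free channel $\M_S=(\R_\omega^{\otimes r}\circ\Tr)_{D_S}\circ\M\circ(\R_\tau^{\otimes r}\circ\Tr)_{A_S}$ and the same per-term bound $\M_S\cpge p\,h_r^{-2}\Ad_{FK}$. It then combines the terms by the same weighted Cauchy--Schwarz mixture with weights $|c_i|h_{r_i}/H$.

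The only real difference is how you obtain the two local dominations $\Ad_F\cple(d/w)^r(\R_\omega^{\otimes r}\circ\Tr)_{D_S}$ and $\id_{A_S}\cple(a/t)^r(\R_\tau^{\otimes r}\circ\Tr)_{A_S}$. You get them directly from the dimension (Choi) bound. The paper instead expands the depolarizing channels in a Weyl basis and applies Cauchy--Schwarz to the Weyl coefficients of $F$; this gives identical constants.

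One small gap: \eqref{eq:iterated-marginal} requires $n-r\ge1$, so for $r=n$ you should justify freeness of $\M_S$ directly. In that case $\M_S=\R_\omega^{\otimes n}$, which is free. For $r=0$, $\M_S=\M$ itself.
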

The qualification $H>0$ excludes the zero expansion. Since every
$h_r$ is strictly positive, $H=0$ means that all coefficients
vanish and $O=0$; in that case domination is vacuous and no
division by $H$ is needed.
\begin{proof}
For a subset $S$ of $r$ sites, first replace its input by $\tau^{\otimes r}$,
apply $\M$, discard its output, and insert $\omega^{\otimes r}$.
The resulting channel $\M_S$ is free by
\ref{ax:marginal}, \ref{ax:tensor}, and \ref{ax:permutation}.
For the empty subset use $\M$; for the full subset use the free
replacer. As a map on $n$ sites, this construction is
\[
 \M_S=\R_{\omega,S}\circ\M\circ\R_{\tau,S},
\]
with identities on the remaining sites understood.

On one site,
$\R_\tau\cpge(at)\R_{I/a}$ and
$\R_\omega\cpge(dw)\R_{I/d}$.
Using these CP inequalities and the unitary Weyl expansions of the completely depolarizing
input and output channels on $S$ gives the inequality below.
Here an $s$-dimensional unitary Weyl basis consists of $s^2$
unitaries $W_j$, including the identity, with
$\Tr W_i^{\dagger}W_j=s\delta_{ij}$ and
$s^{-2}\sum_jW_jYW_j^{\dagger}=\Tr(Y)I/s$; $\delta_{ij}$ is one if
$i=j$ and zero otherwise. We use their tensor products on $S$.
Each replaced input site contributes a factor $at/a^2$, and
each output site contributes $dw/d^2$. Thus
\begin{equation}
 \M_S\cpge
 p[(at)(dw)]^r d^{-2r}a^{-2r}
 \sum_{P,Q}\Ad_{PKQ}.
 \label{eq:twirl-order}
\end{equation}
Here $P$ and $Q$ range over output and input Weyl bases on $S$,
respectively, and are extended by the identity.
For an output contraction $F=F_S\otimes I$ supported on $S$, write
$F=\sum_P f_PP$. Weyl orthogonality implies
$\sum_P|f_P|^2=d^{-r}\Tr(F_S^{\dagger}F_S)\le1$.
Keeping $Q=I$ in \eqref{eq:twirl-order} is allowed because all
discarded Choi summands are positive. For any Choi-space vector $\ket v$,
\[
 \begin{aligned}
 |\langle v\mid FK\rangle|^2
 &=\left|\sum_P f_P\langle v\mid PK\rangle\right|^2\\
 &\le\Bigl(\sum_P|f_P|^2\Bigr)
       \sum_P|\langle v\mid PK\rangle|^2.
 \end{aligned}
\]
Thus Cauchy--Schwarz on vectorized operators yields
\[
 \sum_P\ket{PK}\bra{PK}\ge\ket{FK}\bra{FK}.
\]
The Choi criterion for CP order therefore gives
$\M_S\cpge p h_r^{-2}\Ad_{FK}$.
The Weyl expansion is an algebraic decomposition of the
depolarizing map. Freeness of $\M_S$ follows from marginal closure,
tensor closure, and permutation invariance.

Omit zero $c_i$ and let
$u_i=|c_i|h_{r_i}/H$. Choose the corresponding free $\M_i$ above.
For arbitrary vectors $\ket{v_i}$, weighted Cauchy--Schwarz gives
\begin{equation*}
 \left|\sum_i c_iv_i\right\rangle
 \left\langle\sum_i c_iv_i\right|
 \le H^2\sum_i u_i h_{r_i}^{-2}\ket{v_i}\bra{v_i},
\end{equation*}
Indeed, with $a_i=u_ih_{r_i}^{-2}>0$ and any test vector $\ket v$,
\[
 \begin{gathered}
 \left|\sum_i c_i\langle v\mid v_i\rangle\right|^2
 \le\left(\sum_i\frac{|c_i|^2}{a_i}\right)
          \sum_i a_i|\langle v\mid v_i\rangle|^2,\\
 \sum_i\frac{|c_i|^2}{a_i}=H\sum_i|c_i|h_{r_i}=H^2.
 \end{gathered}
\]
This proves the operator inequality, including arbitrary complex
phases in the coefficients.
Apply this to the vectorizations of $F_iK$.
The convex mixture $\M'=\sum_i u_i\M_i$ proves
\eqref{eq:multiplier-order}.
\end{proof}

\subsubsection{Approximation of a product-image projector}
The image projector of a tensor-power isometry is itself a product,
but as a single supported operator it generally acts on every site.
Applying the multiplier estimate directly would therefore incur a
linear logarithmic cost. We instead approximate it by a polynomial
of low degree in the number of sites outside the image. A degree
$\ell$ polynomial has a supported-contraction expansion on at most
$\ell$ sites, even though the polynomial acts on the whole block.

\begin{lemma}\label{lem:image-projector}
Let the local space be $A\otimes C_0$, fix a unit vector
$\ket0\in C_0$, and let
$P_k=I_{A^k}\otimes(\ket0\bra0)^{\otimes k}$.
For $k\ge2$ and $0<\xi\le1/16$, there is an operator $H_{k,\xi}$ such that
\begin{equation}
 \opnorm{H_{k,\xi}}\le1,\qquad
 \opnorm{H_{k,\xi}-P_k}\le\xi.
 \label{eq:image-error}
\end{equation}
It has a contraction expansion of support size at most
$\min\{k,\ell\}$ and coefficient cost at most $15^\ell$, where
$\ell=\lceil\sqrt{k}\ln(2/\xi)\rceil$.
\end{lemma}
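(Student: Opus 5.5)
The plan is to realize $H_{k,\xi}$ as a Chebyshev polynomial in the commuting \emph{defect number operator}
\[
 N=\sum_{i=1}^k Q_i,\qquad Q_i=I-(I_A\otimes\ket0\bra0)_i ,
\]
where $Q_i$ acts on site $i$ only. The $Q_i$ are commuting projections, so $N$ has integer spectrum $\{0,1,\ldots,k\}$ and $P_k$ is exactly the spectral projection of $N$ onto the eigenvalue $0$. Therefore any real polynomial $q$ with
\[
 q(0)=1,\qquad |q(j)|\le\xi\quad(1\le j\le k)
\]
gives $\opnorm{q(N)-P_k}\le\xi$ by the spectral theorem. The same theorem gives $\opnorm{q(N)}\le1$, since $\xi\le1$. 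The degenerate case $\ell\ge k$ is handled separately: take $H_{k,\xi}=P_k$ itself, a single contraction supported on $k=\min\{k,\ell\}$ sites with cost $1\le15^\ell$. Henceforth assume $\ell<k$.

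For the polynomial, map $[1,k]$ affinely onto $[-1,1]$ by $z(x)=(k+1-2x)/(k-1)$, so that $z(0)=1+\gamma$ with $\gamma=2/(k-1)$. Then set
\[
 q(x)=\frac{T_\ell(z(x))}{T_\ell(1+\gamma)} .
\]
Because $|T_\ell|\le1$ on $[-1,1]$, we have $|q(j)|\le1/T_\ell(1+\gamma)$ for $1\le j\le k$. The standard estimate is
\[
 T_\ell(1+\gamma)\ge\tfrac12\bigl(1+\gamma+\sqrt{2\gamma+\gamma^2}\bigr)^\ell\ge\tfrac12e^{\ell\sqrt{2\gamma}/1.01}.
\]
I expect to use a clean variant giving $T_\ell(1+\gamma)\ge\frac12 e^{2\ell/\sqrt k}$, since $\sqrt{2\gamma}=2/\sqrt{k-1}$; some care with the logarithm $\ln(1+y)\ge y/(1+y)$ is needed here. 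With $\ell\ge\sqrt k\ln(2/\xi)$ this yields an error of at most $2(\xi/2)^2\le\xi$, where the final inequality uses $\xi\le1/16$.

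For the expansion, write $z(N)=c_0I-c_1N$ with $c_0=(k+1)/(k-1)\le3$ and $c_1=2/(k-1)$. Each $Q_i$ is a one-site contraction, so $z(N)$ has a contraction expansion with supports of size at most $1$ and cost at most $c_0+kc_1\le3+4=7$ for $k\ge2$. By the product rule for expansions stated before \cref{lem:local-approximation}, each power $z(N)^j$ has supports of size at most $j$ and cost at most $7^j$. Write $T_\ell(z)=\sum_{j\le\ell}t_jz^j$. Then $q(N)$ has support size at most $\ell$ and cost at most
\[
 \frac{\sum_j|t_j|7^j}{T_\ell(1+\gamma)}\le\sum_j|t_j|7^j,
\]
since $T_\ell(1+\gamma)\ge1$. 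The sum $\sum_j|t_j|c^j$ equals $|T_\ell(ic)|$ because the monomials of $T_\ell$ alternate in parity with sign pattern matched by $i^j$. It is therefore at most $(c+\sqrt{c^2+1})^\ell$. For $c=7$ this is $(7+\sqrt{50})^\ell<15^\ell$.

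The main obstacle I anticipate is the bookkeeping of constants. One issue is ensuring the Chebyshev growth bound gives the exponent $2\ell/\sqrt k$ rather than $2\ell/\sqrt{k-1}$ with a lossy constant. This should be harmless, since $\sqrt{k-1}\le\sqrt k$ only helps. The other is verifying the coefficient identity $\sum_j|t_j|c^j=|T_\ell(ic)|$, or simply bounding it crudely by the recurrence $T_{j+1}=2zT_j-T_{j-1}$. That recurrence gives cost growth $\le(c+\sqrt{c^2+1})^j$ by induction, which avoids the parity argument entirely.
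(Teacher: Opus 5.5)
Your proposal is correct and is essentially the paper's proof. It uses the same normalized Chebyshev polynomial $T_\ell\bigl((k+1-2Z)/(k-1)\bigr)/T_\ell\bigl((k+1)/(k-1)\bigr)$ of the defect-count operator, the same one-site expansion of cost $(3k+1)/(k-1)\le7$, and a coefficient cost below $15^\ell$; your $(7+\sqrt{50})^\ell$, obtained via $|T_\ell(7i)|$ or the recurrence, is marginally sharper than the paper's induction. Your separate case $\ell\ge k$ is harmless but unnecessary, since all supports lie in $[k]$ anyway.

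The only slip is the intermediate bound $\operatorname{arcosh}(1+\gamma)\ge\sqrt{2\gamma}/1.01$, which fails for small $k$: at $k=2$ the left side is $\ln(3+2\sqrt2)\approx1.76<2/1.01$. It is not load-bearing, because the variant you actually use is true: $\operatorname{arcosh}\bigl((k+1)/(k-1)\bigr)=\ln\frac{\sqrt k+1}{\sqrt k-1}\ge 2/\sqrt k$. Even the paper's weaker bound $1/\sqrt k$ already suffices.
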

\begin{proof}
  If $\dim C_0=1$, then $P_k=I$; take $H_{k,\xi}=I$,
  which satisfies all the claimed bounds.
  Otherwise let $q_i$ be the one-site projection onto $\ket0^\perp$ in $C_0$,
  including the identity on $A$, and put $Z=\sum_{i=1}^kq_i$.
  Its spectrum is contained in $\{0,1,\ldots,k\}$, and $P_k$ is the projection
  onto its zero eigenspace.
  Let $T_j$ be the Chebyshev polynomial, with $T_0=1$, $T_1=x$, and
  $T_{j+1}=2xT_j-T_{j-1}$.
  Set
  \begin{equation}
    H_{k,\xi}= \frac{T_\ell((k+1-2Z)/(k-1))}
    {T_\ell((k+1)/(k-1))}.\label{eq:chebyshev}
  \end{equation}
  At $Z=0$ this equals one.
  For $1\le Z\le k$ the argument in the numerator is in $[-1,1]$, so its
  absolute value is at most one.
  Moreover $\operatorname{arcosh}((k+1)/(k-1))\ge1/\sqrt{k}$.
  For example, $\cosh(1/\sqrt{k})\le1+1/k\le1+2/(k-1)$ for $k\ge2$, using the
  power series on $[0,1]$.
  Thus the denominator in \eqref{eq:chebyshev} is at least
  $\tfrac12e^{\ell/\sqrt{k}}$, proving \eqref{eq:image-error}.

  The operator $X=(k+1-2Z)/(k-1)$ has a one-site contraction expansion, also
  allowing the identity, of coefficient cost $(3k+1)/(k-1)\le7$.
  Products of contractions are contractions supported on the union of their
  supports.
  The recurrence gives a support-$j$ expansion of $T_j(X)$ of cost at most
  $15^j$: the induction step costs at most $14\cdot15^j+15^{j-1}<15^{j+1}$.
  Dividing by the denominator, which is at least one, does not increase the
  cost.
\end{proof}

\subsection{Channel completion}\label{sec:proof}
We first correct a positive-overlap branch for an isometric target,
then use auxiliary extensions to treat an arbitrary target.
This proves \cref{thm:completion}, whose quantitative consequences
are derived in \cref{sec:quantitative}.
\Cref{app:alternative-limits} also uses the resulting domination
bound to give an alternative proof of the Stein identity.

\subsubsection{Completion for an isometric target}\label{sec:completion}
The core construction corrects a branch with positive overlap with
an isometric target. Its local corrections are controlled by three
estimates: weighted local-support approximation, free multiplier
completion, and a product-image projector
(\cref{lem:local-approximation,lem:multiplier,lem:image-projector}).
Their quantitative forms were established in \cref{sec:local}.

The purpose of the next proposition is to repair a branch that
has nonzero overlap with the target in every input direction. Its
conclusion preserves the original coefficient $p$, apart from a
controlled loss that is subexponential at fixed or inverse-polynomial
accuracy, while replacing that branch by a channel close to the
target. The first three steps of the proof
obtain symmetry, discard sites, and correct the reduced operator.
Step 4 establishes free domination and restores trace preservation;
Step 5 restores the discarded sites. Step 6 handles the regime in
which the local approximation is not used.

\begin{proposition}[Isometric target]\label{prop:isometric-completion}
Let $J:A\to D$ be an isometry, assume $\dim D$ is a multiple of
$a=\dim A$, and let $\F$ be admissible and also satisfy
\ref{ax:permutation} and \ref{ax:marginal}, with input state $\tau$
and output replacer state $\omega$.
Fix $c>0$. There is a constant $K_c$ with the following property.
If
\begin{equation*}
 \begin{gathered}
 \M\in\F_n,\qquad \M\cpge p\Ad_V,\qquad 0<p\le1,\\
 \opnorm V\le1,\qquad \Rea(J_n^{\dagger}V)\ge cI,
 \qquad J_n=J^{\otimes n},
 \end{gathered}
\end{equation*}
then, for every $0<\delta\le1/16$, there are
$\mathcal S_n\in\F_n$ and a CPTP map $\mathcal L_n$ such that
\begin{align*}
 p\,2^{-K_c n^{2/3}\log_2((n+1)/\delta)}\mathcal L_n
 &\cple\mathcal S_n,
 \\
 \dnorm{\mathcal L_n-\Ad_{J_n}}&\le\delta.
\end{align*}
The constant depends only on the local dimensions, $c$,
$\lambda_{\min}(\tau)$, and $\lambda_{\min}(\omega)$.
It is independent of $n$, $\delta$, $p$, $V$, $J$, and the free family.
\end{proposition}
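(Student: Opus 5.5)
The plan follows the outline stated before the proposition: symmetrize the branch, discard sites, correct the reduced operator, dominate the correction by a free channel, and then complete and pad.

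\emph{Step 1: Symmetrize.} Average $\M$ over simultaneous permutations, $\Ad_{U_\pi^D}\circ\M\circ\Ad_{(U_\pi^A)^\dagger}$. This is allowed by \ref{ax:permutation} and convexity. Since $J_n$ is permutation covariant, the averaged channel $\bar\M\in\F_n$ still satisfies $\bar\M\cpge p\,\mathbb E_\pi\Ad_{V_\pi}$. The map $\mathbb E_\pi\Ad_{V_\pi}$ is not a single Kraus branch. To recover one, I would use the operator $W=\mathbb E_\pi U_\pi^{D\dagger}J_n^\dagger V_\pi\ldots$, or more simply work with $Y=J_n^\dagger V$ and its permutation average $\bar Y$. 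Averaging preserves $\Rea\bar Y\ge cI$ and $\opnorm{\bar Y}\le1$. The Cauchy--Schwarz argument of \cref{lem:branch} then gives $\bar\M\cpge p\Ad_{\bar V}$ with $\bar V=\mathbb E_\pi V_\pi$, since the vectorization of an average is dominated by the average of rank-one terms.

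\emph{Step 2: Discard and invert locally.} Apply \ref{ax:marginal} on the last $m=\lceil n^{2/3}\ell_n\rceil$ sites with input $\tau$. The resulting reduced channel on $k=n-m$ sites dominates $p$ times a CP map whose ``$J_k$-overlap'' operator is $X=\mathbb E_{\tau,m}(\bar Y)$. Because $\mathbb E_{\tau,m}$ is unital and CP, $\Rea X\ge cI$ and $\opnorm X\le1$. \Cref{lem:local-approximation} gives a local expansion $P$ with $\opnorm{P-X}\le\xi$. I would then approximate $X^{-1}$ by a polynomial in $P$ and $P^\dagger$. Since $\Rea X\ge c$, the Neumann-type series $\kappa\sum_{j<L}(I-\kappa X)^j$ with $\kappa=c$ (so that $\opnorm{I-cX}\le\sqrt{1-c^2}$) converges geometrically. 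Degree $L=O_c(\log(1/\delta))$ suffices, which keeps the support within $O(n^{2/3}\log)$ and the cost within $\exp(O(n^{2/3}\log))$. The corrected Kraus operator is then $O=J_k\,\mathrm{poly}(P)\,J_k^\dagger\cdot H_{k,\xi}$ composed with the branch. Here $H_{k,\xi}$ from \cref{lem:image-projector} replaces the image projector $J_kJ_k^\dagger$; the divisibility assumption lets me write $D=A\otimes C_0$ so that this projector is $P_k$. The corrected operator satisfies $\opnorm{O\widetilde V-J_k}\lesssim\delta$, where $\widetilde V$ is the reduced branch.

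\emph{Step 3: Free domination, completion, padding.} Write $O$ as a local-support expansion on outputs. \Cref{lem:multiplier} yields $\M'\in\F_k$ with $\M'\cpge pH^{-2}\Ad_{O\widetilde V}$, where $\log H=O(n^{2/3}\log((n+1)/\delta))$. Rescale the branch to a contraction and apply \cref{lem:defect-completion} with the replacer to obtain a CPTP $\delta$-approximation of $\Ad_{J_k}$. Tensor with the exact replacer domination $\Ad_{J^{\otimes m}}\cple(a/w)^m\R_\omega^{\otimes m}$ on the discarded sites. This costs $2^{O(m)}=2^{O(n^{2/3}\ell_n)}$ and uses \ref{ax:tensor} and \ref{ax:permutation} to restore the site order. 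When $m>n/2$ or $n<n_0$, I would skip the local approximation entirely and use replacer domination, since the cost $nC$ is then $O(n^{2/3}\ell_n)$ and can be absorbed into $K_c$.

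\emph{Main obstacle.} The hard step is Step 2: the reduced branch after discarding is a CP map, not a single Kraus operator. Its overlap with $J_k$ must be expressed precisely as $\mathbb E_{\tau,m}(J_n^\dagger\bar V)$, and for this I would use purification of $\tau$ on the discarded inputs and choose the output Kraus component along $J^{\otimes m}$. Controlling that the local polynomial correction keeps the error below $\delta$ in operator norm, rather than merely in overlap, is the other delicate point.
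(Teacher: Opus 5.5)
Your plan follows the paper's proof step for step. It uses permutation averaging with $\overline{\M}\cpge p\Ad_{\overline V}$, then weighted discarding. There, taking the Kraus component along $(I\otimes J_m)\ket{\chi}$, for a purification $\ket\chi$ of $\tau^{\otimes m}$, gives exactly the paper's coherent combination $K=\sum_ju_j(I\otimes\bra jJ_m^{\dagger})\overline V(I\otimes\ket j)$, with $J_k^{\dagger}K=\mathbb E_{\tau,m}(J_n^{\dagger}\overline V)$ and $\opnorm K\le1$. The remaining steps are a Neumann-series inverse in $P$, the Chebyshev projector, \cref{lem:multiplier}, replacer completion, and replacer padding. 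Your use of \cref{lem:defect-completion} produces the same completed channel as the paper's $\mathcal L_T$.

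One formula needs correcting. The correction operator must be $O=\widehat G\,H_{k,\xi}$, with $\widehat G=G\otimes I_{C_0^k}$ and $G$ the polynomial in $P$. It cannot be $J_kGJ_k^{\dagger}H_{k,\xi}$, because $J_kGJ_k^{\dagger}=\widehat GP_k$ reinstates the exact image projector $P_k$. That projector is supported on all $k$ sites when $\dim C_0>1$. \Cref{lem:multiplier} would then charge $h_k$, which is exponential in $k$, giving a linear-in-$n$ exponent and defeating the purpose of \cref{lem:image-projector}. Your own remark that $H_{k,\xi}$ replaces $J_kJ_k^{\dagger}$ indicates the right intent, so the lift, not the conjugation, is what you want.

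With $\widehat G$, the operator-norm control you flag as delicate follows from
$OK-J_k=\widehat G(H_{k,\xi}-P_k)K+J_k(GX-I)$,
using $P_kK=J_kX$ and $\widehat GJ_k=J_kG$. Only the norm bound $\opnorm G\le c/(1-q)$ enters the error, not the expansion cost. Here $q=\sqrt{1-c^2}+c\xi$, so you need $\xi$ to be a sufficiently small $c$-dependent multiple of $\delta$ to guarantee $q<1$; this is the paper's choice $\xi=\kappa_c\delta$.
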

\begin{proof}
\emph{Step 1: Symmetrization and precision parameters.}
For any unit vector $\ket x\in A^n$, the assumptions give
\[
 c\le\Rea\bra x J_n^{\dagger}V\ket x
 \le\|J_n\ket x\|\,\|V\ket x\|\le1.
\]
Thus $0<c\le1$.

Since $\dim D$ is a multiple of $\dim A$, choose $C_0$ with
$\dim C_0=\dim D/\dim A$ and an output unitary
$U:D\to A\otimes C_0$ such that $UJ\ket x=\ket x\otimes\ket0$.
Such a unitary exists by extending orthonormal bases of the two
isometric images. Applying $U$ at every output site transforms
$\F_r$ into $\{\Ad_{U^{\otimes r}}\circ\M:\M\in\F_r\}$ and
$\omega$ into $U\omega U^{\dagger}$. This change preserves all
family axioms, $\tau$, the eigenvalues of $\omega$, and the norm,
overlap, CP-order, and diamond-distance bounds used here.

Keeping the same notation, identify $D^r\simeq A^r\otimes C_0^r$
and write $J_r=J^{\otimes r}$, so that
$J_r\ket x=\ket x\otimes\ket0^{\otimes r}$.
For $\pi\in S_n$, define
\[
 \begin{gathered}
 V_\pi=U_\pi^D V(U_\pi^A)^{\dagger},\qquad
 \overline{V}=\frac1{n!}\sum_\pi V_\pi,\\
 \M_\pi=\Ad_{U_\pi^D}\circ\M\circ\Ad_{(U_\pi^A)^{\dagger}}.
 \end{gathered}
\]
Let $\overline{\M}=n!^{-1}\sum_\pi\M_\pi$. This channel is free by
permutation invariance and convexity. Since $J_n$ intertwines the
input and output permutations, each $V_\pi$ has norm at most one
and $\Rea(J_n^{\dagger}V_\pi)\ge cI$. Their average satisfies both properties.
The required CP order follows from the covariance identity
\[
 \frac1{n!}\sum_\pi\ket{V_\pi}\bra{V_\pi}
       -\ket{\overline{V}}\bra{\overline{V}}
 =\frac1{n!}\sum_\pi
       \ket{V_\pi-\overline{V}}\bra{V_\pi-\overline{V}}\ge0.
\]
Consequently $\overline{\M}\cpge p\Ad_{\overline{V}}$.
This compares Choi operators, not the generally non-Hermitian
operators $V_\pi$ themselves.
Thus $W=J_n^{\dagger}\overline{V}$ is a permutation-invariant contraction with
$\Rea W\ge cI$.

We choose the approximation precision explicitly.
Put
\[
 \begin{aligned}
 \lambda&=c/2,& q_0&=\sqrt{1-3c^2/4},\\
 q_1&=(1+q_0)/2,& M_c&=\frac{\lambda}{1-q_1}.
 \end{aligned}
\]
and fix
\begin{equation}
 \kappa_c=\min\left\{\frac1{16},
             \frac{q_1-q_0}{2\lambda},
             \frac1{16(2M_c+1)}\right\},\qquad
 \xi=\kappa_c\delta .
 \label{eq:internal-precision}
\end{equation}
Let $\ell_n=\ln((n+1)/\xi)$, $m=\lceil n^{2/3}\ell_n\rceil$,
and $k=n-m$.
First treat the active regime where $n$ exceeds the uniform
threshold in \cref{lem:local-approximation} and $m\le n/2$;
increase the threshold if necessary to ensure $k\ge2$.
In the remaining regime the faithful replacer gives an exact
approximation with a cost covered by the same estimate.

\smallskip\noindent\emph{Step 2: Weighted discarding.}
Keep the first $k$ input-output pairs and discard the last $m$.
The reduced channel is
\begin{equation*}
 \M_k(Y)=\Tr_{D_{k+1}\cdots D_n}
                   \overline{\M}(Y\otimes\tau^{\otimes m}),
 \qquad Y\in\mathrm L(A^k).
\end{equation*}
It belongs to $\F_k$ by \eqref{eq:iterated-marginal}.
Applying this preparation and partial trace to the CP inequality
for $\overline{\M}$ gives a reduced CP branch. We next select a single
coherent Kraus combination from that branch.
Diagonalize $\tau^{\otimes m}=\sum_j u_j\ket j\bra j$.
It follows that $\M_k\cpge p\Ad_K$, where
\begin{equation}
 K=\sum_j u_j
 (I_{D^k}\otimes\bra jJ_m^{\dagger})\overline{V}
 (I_{A^k}\otimes\ket j).
 \label{eq:coherent-marginal}
\end{equation}
To verify this claim, choose an orthonormal basis $\{\ket{e_l}\}$ of
$D^m$ containing the orthonormal vectors $J_m\ket j$.
The marginal of $\Ad_{\overline{V}}$ has Kraus operators
\[
 A_{l,j}=\sqrt{u_j}(I_{D^k}\otimes\bra{e_l})\overline{V}
                         (I_{A^k}\otimes\ket j).
\]
For the basis index $l(j)$ with $\ket{e_{l(j)}}=J_m\ket j$, set
$A_j=A_{l(j),j}$. Equation~\eqref{eq:coherent-marginal} reads
$K=\sum_j\sqrt{u_j}A_j$. There are two square-root weights:
one comes from preparing the mixed input and one from the coherent
Kraus combination. Since $\sum_j u_j=1$, for every vector $\ket v$
in the reduced Choi space,
\[
 |\langle v\mid K\rangle|^2
 \le\sum_j|\langle v\mid A_j\rangle|^2
 \le\sum_{l,j}|\langle v\mid A_{l,j}\rangle|^2.
\]
The Choi criterion proves $\M_k\cpge p\Ad_K$.
No measurement outcome $j$ is recorded in the final branch; the
combination is justified by CP order.
Equation~\eqref{eq:coherent-marginal} is a convex average of
contractions, so $\opnorm K\le1$.
Its compression satisfies
\begin{equation}
 X=J_k^{\dagger}K=\mathbb E_{\tau,m}(W),\qquad
 \opnorm X\le1,\qquad \Rea X\ge cI.
 \label{eq:compression}
\end{equation}
For clarity, the compression is
\[
 J_k^{\dagger}K=\sum_j u_j(I_{A^k}\otimes\bra j)
                         W(I_{A^k}\otimes\ket j),
\]
which is exactly the weighted expectation defined in
\cref{lem:local-approximation}. That expectation is unital and
completely positive, hence contractive in operator norm and
order preserving on Hermitian parts. This proves both bounds
in \eqref{eq:compression}.

\smallskip\noindent\emph{Step 3: Local correction of the reduced branch.}
There are two errors to correct. The component
$(I-J_kJ_k^{\dagger})K$ lies outside the target image, while the component
inside that image is $J_kX$, which is distorted by $X$.
The inequality $\Rea X\ge cI$ implies
\[
 c\langle v|v\rangle\le\Rea\bra v X\ket v
        \le\|\ket v\|\,\|X\ket v\|,
\]
so $X$ is invertible and $\|X^{-1}\|_\infty\le c^{-1}$.
An exact correction would be
\[
 O_{\rm exact}=(X^{-1}\otimes I_{C_0^k})J_kJ_k^{\dagger},
 \qquad O_{\rm exact}K=J_k.
\]
The issue is to control its free-domination cost. We approximate
the inverse and the image projector separately by operators with
controlled local expansions.

Apply \cref{lem:local-approximation} with precision $\xi$.
It provides $P$ with $\opnorm{P-X}\le\xi$, support size
$R=O_{a,t}(n^{2/3})$, and coefficient cost
$M_P\le\exp[O_{a,t}(n^{2/3})]$, where $t=\lambda_{\min}(\tau)$.
For $a=1$, use the scalar $P=X$ directly.
The positive real part of $X$ implies
\begin{equation*}
 \begin{aligned}
 (I-\lambda X)^{\dagger}(I-\lambda X)
 &=I-2\lambda\Rea X+\lambda^2X^{\dagger}X\\
 &\le(1-2\lambda c+\lambda^2)I=(1-3c^2/4)I.
 \end{aligned}
\end{equation*}
By \eqref{eq:internal-precision},
$\opnorm{I-\lambda P}\le q_0+\lambda\xi\le q_1<1$.
Choose an integer $L\ge0$ with $q_1^{L+1}\le\xi$ and
$L=O_c(\ln(1/\xi))$. Define
\begin{equation}
 G=\lambda\sum_{j=0}^{L}(I-\lambda P)^j .
 \label{eq:inverse-polynomial}
\end{equation}
Then
\begin{equation}
 \opnorm G\le M_c,\qquad
 \opnorm{GX-I}\le M_c\xi+q_1^{L+1}\le(M_c+1)\xi .
 \label{eq:inverse-error}
\end{equation}
Here $P$ approximates the compressed operator $X$ and is not a
projection. The finite geometric-series identity gives
$GP=I-(I-\lambda P)^{L+1}$, so
\[
 GX-I=G(X-P)-(I-\lambda P)^{L+1}.
\]
Bounding these two summands proves \eqref{eq:inverse-error}.
Expanding \eqref{eq:inverse-polynomial} gives a contraction
expansion of support at most $LR$ and cost at most
$\lambda(L+1)(1+\lambda M_P)^L$.
Both the support bound and the logarithm of this cost are
$O_{a,t,c}(n^{2/3}\ell_n)$, uniformly in the active regime.

Lift $G$ to $\widehat G=G\otimes I_{C_0^k}$ on $D^k$.
Let $P_k=J_kJ_k^{\dagger}$ and choose $H=H_{k,\xi}$ from
\cref{lem:image-projector}.
Its support and logarithmic cost are
$O(\sqrt{k}\ln(2/\xi))=O(n^{2/3}\ell_n)$.
Thus $O=\widehat GH$ also has a contraction expansion with
support and logarithmic cost $O_{a,d,t,c}(n^{2/3}\ell_n)$,
where $d=\dim D$.
The operator $P_k$ is the exact image projection, whereas $H$
is its local polynomial approximation. The identities
$P_kK=J_kX$ and $\widehat GJ_k=J_kG$ give
\[
 OK-J_k=\widehat G(H-P_k)K+J_k(GX-I).
\]
The leakage and distortion errors are therefore bounded separately:
\begin{equation*}
 \begin{aligned}
 \opnorm{OK-J_k}
 &\le \opnorm G\,\opnorm{H-P_k}\opnorm K\\
 &\quad+\opnorm{GX-I}\\
 &\le (2M_c+1)\xi=:\zeta.
 \end{aligned}
\end{equation*}
The error is controlled by the bounded operator norm of $G$;
its expansion cost enters the CP-domination estimate below.

\smallskip\noindent\emph{Step 4: Free domination and trace-preserving completion.}
Put $w=\lambda_{\min}(\omega)$.
Write the expansion of $O$ just obtained as $O=\sum_i c_iF_i$,
let $r_i$ be its support sizes, and set
$H_*=\sum_i|c_i|h_{r_i}$ using \eqref{eq:multiplier-cost}.
Since $\log_2h_r=O_{a,d,t,w}(r)$, the support and coefficient
bounds from Step 3 control this weighted cost.
Since $\zeta\le\delta/16<1$, the corrected branch is nonzero
and $H_*>0$.
By \cref{lem:multiplier} there is a free $\mathcal K_k$ with
$\mathcal K_k\cpge pH_*^{-2}\Ad_{OK}$, where
\[
 \log_2\max\{1,H_*\}
 =O_{a,d,t,c,w}(n^{2/3}\ell_n).
\]
Set $T=OK/(1+\zeta)$, $h_*=\max\{1,H_*\}$, and $p'=p/h_*^2$.
Then $\opnorm T\le1$, $\opnorm{T-J_k}\le2\zeta$,
$0<p'\le1$, and $\mathcal K_k\cpge p'\Ad_T$.
Complete this branch to the channel
\begin{equation*}
 \mathcal L_T(Y)=TYT^{\dagger}
              +\Tr[(I-T^{\dagger}T)Y]\omega^{\otimes k}.
\end{equation*}
The defect satisfies $0\le D_T=I-T^{\dagger}T\le I$, since $T$ is a
contraction. The second term, denoted by $\mathcal C_T$, has Choi
operator $D_T^T\otimes\omega^{\otimes k}\ge0$, so it is CP,
and for every $Y$,
$\Tr\mathcal L_T(Y)=\Tr T^{\dagger}TY+\Tr D_TY=\Tr Y$.
Thus $\mathcal L_T$ is exactly trace preserving, not merely close
to a trace-preserving map. The bound $D_T\le I$ also gives
$D_T^T\otimes\omega^{\otimes k}\le I\otimes\omega^{\otimes k}$,
and hence
$0\cple\mathcal C_T\cple\R_\omega^{\otimes k}$.
Hence the free mixture
$\mathcal S_k=(\mathcal K_k+\R_\omega^{\otimes k})/2$
dominates $(p'/2)\mathcal L_T$. Indeed,
\[
 \mathcal S_k-\frac{p'}2\mathcal L_T
 =\frac12(\mathcal K_k-p'\Ad_T)
       +\frac12(\R_\omega^{\otimes k}-p'\mathcal C_T)
\]
is CP because $p'\le1$.
Moreover,
\begin{align*}
 \dnorm{\mathcal L_T-\Ad_{J_k}}
 &\le 2\opnorm{T-J_k}+\opnorm{I-T^{\dagger}T}\\
 &\le4\opnorm{T-J_k}\le8\zeta\le\delta .
\end{align*}
We used \eqref{eq:single-kraus-continuity},
$\dnorm{\mathcal C_T}=\opnorm{I-T^{\dagger}T}$, and
$I-T^{\dagger}T=J_k^{\dagger}(J_k-T)+(J_k^{\dagger}-T^{\dagger})T$.

\smallskip\noindent\emph{Step 5: Restoring the discarded sites.}
The local replacer satisfies $\R_\omega\cpge(w/a)\Ad_J$:
its Choi operator is $I_A\otimes\omega\ge wI$ and the Choi
vector of $J$ has squared norm $a$.
Pad the discarded sites by setting
$\mathcal S_n=\mathcal S_k\otimes\R_\omega^{\otimes m}$ and
$\mathcal L_n=\mathcal L_T\otimes\Ad_{J_m}$.
Then
\[
 \mathcal S_n\cpge
 p\,2^{-[1+2\log_2h_*+m\log_2(a/w)]}\mathcal L_n .
\]
Tensoring with the channel $\Ad_{J_m}$ preserves the diamond
error, so it stays at most $\delta$. Each term in the exponent is
bounded by a constant times $n^{2/3}\ell_n$. The constant depends
only on $a,d,t,c,w$. Finally, $\xi=\kappa_c\delta$ implies
$\ell_n=O_c(\log_2((n+1)/\delta))$, proving the claimed bound.

\smallskip\noindent\emph{Step 6: The remaining blocklengths and accuracies.}
For the complementary regime take the exact approximation
$\mathcal L_n=\Ad_{J_n}$ and the free
$\mathcal S_n=\R_\omega^{\otimes n}$.
It suffices to pay exponent $n\log_2(a/w)$.
If $n\ge4$ and $m>n/2$, then
$n^{2/3}\ell_n>n/2-1\ge n/4$.
This absorbs the exact exponent into the claimed bound.
The finitely many smaller $n$, including the fixed threshold,
are absorbed by a further uniform increase of $K_c$.
The proof therefore covers every $n$ and every
$0<\delta\le1/16$ with the same constant.
\end{proof}

\begin{remark}[The scale of the completion cost]
The active regime is a case of the proof, not a restriction on
the statement. Its choice of $m$ balances the cost of restoring
discarded sites against the cost of correcting the reduced branch.
Before substituting $m$, the local approximation has support scale
$O(n\sqrt{h/m}+h)$, with $h=O(\log((n+1)/\xi))$.
Ignoring logarithmic factors suggests balancing $m$ against
$n/\sqrt m$, giving $m\asymp n^{2/3}$. This is a heuristic
explanation of the power. The uniform bound proved above uses
$m=\lceil n^{2/3}\ell_n\rceil$, the inverse degree
$O_c(\log(1/\xi))$, and the explicit complementary-regime estimate.
No optimality of the power $2/3$ is asserted.
\end{remark}

\subsubsection{Completion for a general target}

To pass from this construction to an arbitrary target, we use the
lemmas on branch extraction and comparison with auxiliary extensions
(\cref{lem:branch,lem:extension}), established in
\cref{app:testing-lifts}. Branch extraction supplies the positive
overlap from a testing lower bound. The extension comparison makes
the resulting bound applicable to a Stinespring dilation with
constants independent of the blocklength.

\begin{proof}[Proof of \cref{thm:completion}]
Fix a desired accuracy $0<\nu\le1/16$.
Choose an auxiliary space $E$ of dimension $ab$ and a Stinespring
isometry $V:A\to B\otimes E$ for $\N$; the Kraus-rank bound
allows this choice \cite[Thm.~2.22]{Watrous}.
The local lifted output dimension is $ab^2$, a multiple of $a$.
For each $n$ define
\begin{equation}
 \mathcal G_n=
 \{\mathcal T\in\CPTP(A^n\to(B\otimes E)^n):
                \Tr_{E^n}\circ\mathcal T\in\F_n\}.
 \label{eq:lifted-family}
\end{equation}
We check every axiom.
The family is nonempty because any $\M\in\F_n$ can be extended
by a fixed auxiliary state.
Taking the $B$ marginal is a continuous linear map on Choi
operators. Its inverse image of the closed convex set $\F_n$,
intersected with the compact set of channels, is therefore compact
and convex.
Its permutation and tensor-product closures follow by taking
$B$ marginals.
It contains the tensor powers of the replacer channel with faithful
output state $\omega\otimes I_E/(ab)$.
Finally, insert the same fixed $\tau$ at the last input and
discard the corresponding $BE$ output.
The $B$ marginal of the remaining channel equals the
$\tau$-insertion marginal of an element of $\F_n$, so it lies in
$\F_{n-1}$ by \ref{ax:marginal}.
Thus the lifted family satisfies the one-site marginal axiom with
exactly the original input state $\tau$.

Let $\gamma_{\delta,n}$ be the worst-case testing value for target
$\Ad_{V^{\otimes n}}$ against $\mathcal G_n$, at type-I error
at most $\delta$. This $\delta$ is a testing tolerance; $\nu$
remains the desired diamond accuracy.
Apply \cref{lem:extension} to each block of $n$ uses, choosing
$\delta,t$ from the fixed $\eps$. Then
\begin{equation}
 \gamma_{\delta,n}\ge t\beta_{\eps,n}.
 \label{eq:lifted-beta}
\end{equation}
These constants do not depend on $n$.
By \cref{lem:testing-minimax}, a hardest lifted alternative
$\widetilde\M_n\in\mathcal G_n$ exists with
$b_\delta(\Ad_{V^{\otimes n}},\widetilde\M_n)
=\gamma_{\delta,n}$.
The faithful replacer and \cref{lem:bounds} imply
$\gamma_{\delta,n}>0$.
By \cref{lem:branch}, this alternative CP-dominates
$\gamma_{\delta,n}\Ad_{W_n}$ for an operator satisfying
$\opnorm{W_n}\le1$ and
$\Rea[(V^{\otimes n})^{\dagger}W_n]\ge c_\delta I$, where
$c_\delta=(1-\sqrt{1-\delta})/(1+\sqrt{1-\delta})>0$
is independent of $n$.

Apply \cref{prop:isometric-completion} in the lifted family at
accuracy $\nu$.
It supplies a free $\widetilde{\mathcal S}_n$ and a channel
$\widetilde{\mathcal L}_n$ with
\[
 \widetilde{\mathcal S}_n\cpge
 \gamma_{\delta,n}\,2^{-h_\delta(n,\nu)}
 \widetilde{\mathcal L}_n,\qquad
 \dnorm{\widetilde{\mathcal L}_n-\Ad_{V^{\otimes n}}}
 \le\nu,
\]
where $h_\delta(n,\nu)=O_\delta(n^{2/3}\log_2((n+1)/\nu))$,
with a constant uniform in $n$ and $\nu$.
Trace out $E^n$.
The resulting $\mathcal S_n$ belongs to $\F_n$ by
\eqref{eq:lifted-family}, the resulting $\mathcal L_n$ is CPTP,
and postcomposition with the CPTP partial-trace map preserves
CP order and cannot increase the diamond distance.
Use \eqref{eq:lifted-beta} and absorb $\log_2(1/t)$ into
$K_\eps n^{2/3}\log_2((n+1)/\nu)$ for all $n\ge1$.
This gives the required CP domination and diamond-norm
approximation on the original output space.
\end{proof}

\section{Symmetric entropy estimates and an alternative limit proof}
\label{app:alternative-proof}

The entropy minimax identity below holds under the basic assumptions.
Permutation invariance gives a further finite-block entropy estimate.
With marginal closure as well, quantitative completion and near
subadditivity yield an alternative proof of \cref{thm:main}.
This proof supplies a second route to the ordinary limits through
the explicit completion bound of \cref{thm:completion}.

\subsection{Entropy minimax and a large-error bound}\label{app:symmetric-entropy}
We next use concavity in the input marginal to exchange
the input and alternative optimizations. For $\rho\in\Dens(A^n)$,
let $\ket{\chi_\rho}=(I\otimes\sqrt\rho)\ket{\Gamma_{A^n}}$.
For a channel $\M:\mathrm L(A^n)\to\mathrm L(B^n)$, define
\begin{equation*}
 f(\rho,\M)=D\bigl((\id\otimes\N_n)(\ket{\chi_\rho}\bra{\chi_\rho})
               \Vert(\id\otimes\M)(\ket{\chi_\rho}\bra{\chi_\rho})\bigr).
\end{equation*}
The state $\ket{\chi_\rho}$ is the canonical purification of the physical input marginal
$\rho$. In the vectorization convention of \eqref{eq:choi},
$\ket{\chi_\rho}=\ket{\sqrt\rho}$ and its reference marginal
is $\rho^T$, consistently with the tester convention in
\eqref{eq:tester}.

\begin{lemma}\label{lem:entropy-minimax}
The function $f$ is concave in $\rho$ and convex in $\M$, allowing
extended values. Moreover,
\begin{equation}
 E_n=\max_{\rho\in\Dens(A^n)}
       \inf_{\M\in\F_n}f(\rho,\M).
 \label{eq:entropy-minimax}
\end{equation}
Here permutation covariance means
$\M\circ\Ad_{U_\pi^A}=\Ad_{U_\pi^B}\circ\M$ for every $\pi$.
If $\M$ is permutation-covariant and has a positive replacer
component, a permutation-invariant $\rho$ attains
$D_{\ch}(\N_n\Vert\M)$.
Having a positive replacer component means that
$\M\cpge\eta\R_\omega^{\otimes n}$ for some $\eta>0$.
\end{lemma}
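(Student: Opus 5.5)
The plan has four parts: establish the convexity and concavity claims, reduce $E_n$ to the quantity $\sup_\rho\inf_\M f$ via Sion's theorem, show the outer supremum is attained, and finally treat the permutation-covariant case.

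\emph{Convexity in $\M$.} For a fixed input $\ket{\chi_\rho}$, both output states depend affinely on $\M$ (only the second one actually varies). Joint convexity of relative entropy then gives convexity in $\M$, with extended values allowed.

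\emph{Concavity in $\rho$.} Write the outputs through Choi operators, as in the tester representation \eqref{eq:tester}:
\[
 f(\rho,\M)=D\bigl(\Ad_{\sqrt{\rho}^T}\otimes\id\,(C_{\N_n})\Vert\Ad_{\sqrt{\rho}^T}\otimes\id\,(C_\M)\bigr).
\]
Concavity should follow from the standard argument for channel mutual information. Let $\rho=\sum_x p_x\rho_x$. Purify the mixture by adjoining a classical flag, giving an input $\sum_x p_x\ket x\bra x\otimes\chi_{\rho_x}$ whose reference marginal, after a further purification, has the same physical marginal $\rho$. Every purification of $\rho$ is related to $\ket{\chi_\rho}$ by an isometry on the reference. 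Relative entropy of the outputs is invariant under that reference isometry and decreases under the partial trace that discards the flag's purifying part. The two directions must be checked carefully. The flagged state is obtained from a purification of $\rho$ by a reference channel: measure the purifying register. Data processing therefore gives
\[
 f(\rho,\M)\ge D\Bigl(\sum_x p_x\ket x\bra x\otimes\rho^{\N}_x\Big\Vert\sum_x p_x\ket x\bra x\otimes\rho^{\M}_x\Bigr)=\sum_x p_x f(\rho_x,\M),
\]
which is concavity. I also need that $f(\rho,\M)$ depends only on $\rho$ among all pure inputs with marginal $\rho$. This follows from isometric invariance, so $D_{\ch}(\N_n\Vert\M)=\sup_\rho f(\rho,\M)$, using the reduction to pure inputs with $R\simeq A^n$ from the testing discussion.

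\emph{Minimax and attainment.} By definition $E_n=\inf_\M\sup_\rho f$, and I will apply Sion's theorem. $\Dens(A^n)$ and $\F_n$ are compact and convex. The main obstacle is semicontinuity under extended values. I would first replace $\F_n$ by the mixtures $\M_\eta=(1-\eta)\M+\eta\R_\omega^{\otimes n}$. These belong to $\F_n$ and make the output support of the alternative contain the target output's support. On $\supp\mu\otimes B^n$ they give the spectral-width bound used in \cref{lem:entropy-comparison}, so $f$ is finite and continuous in $(\rho,\M)$ on this restricted family. Continuity in $\rho$ needs care near singular $\rho$. I would argue it from $\log_2\sigma-\log_2(\mu\otimes I)$ being bounded, together with continuity of entropies. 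Sion then gives $\inf\sup=\sup\inf$ for each $\eta$. Joint convexity bounds $f(\rho,\M_\eta)\le(1-\eta)f(\rho,\M)+\eta nC$, which lets $\eta\downarrow0$ recover both sides of the unrestricted problem. The outer supremum is attained because $\rho\mapsto\inf_\M f(\rho,\M)$ is an infimum of functions that are upper semicontinuous (in fact continuous for the $\M_\eta$) on a compact set.

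\emph{Permutation-covariant case.} If $\M$ is covariant and $\M\cpge\eta\R_\omega^{\otimes n}$, then $f(\cdot,\M)$ is finite and concave. It is also permutation invariant, since $\chi_{U_\pi\rho U_\pi^\dagger}$ differs from a local unitary image of $\chi_\rho$ only by a reference unitary, and $\N_n$ and $\M$ are covariant. Averaging a maximizer over $S_n$ therefore does not decrease $f$, and the maximizer exists by the continuity just established. Hence a permutation-invariant $\rho$ attains $D_{\ch}(\N_n\Vert\M)$.
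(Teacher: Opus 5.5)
Your architecture is the paper's: concavity via a coherently flagged purification whose flag is then dephased, convexity from joint convexity, regularization by $\M_\eta=(1-\eta)\M+\eta\R_\omega^{\otimes n}$, Sion's theorem for the regularized problem, removal of $\eta$, and permutation averaging. Your removal bound $f(\rho,\M_\eta)\le(1-\eta)f(\rho,\M)+\eta nC$ works as well as the paper's $f(\rho,\M_\eta)\le f(\rho,\M)-\log_2(1-\eta)$.

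The step that does not go through as proposed is continuity of $\rho\mapsto f(\rho,\M_\eta)$ at singular $\rho$. This step is load-bearing. It supplies the upper-semicontinuity hypothesis in Sion's theorem, the attainment of the outer maximum, and the existence of a maximizer in the covariant case. Boundedness of $H=\log_2\sigma-\log_2(\mu\otimes I)$ gives finiteness, not continuity. Write $f(\rho,\M_\eta)=S(\mu)-S(\varrho)-\Tr\varrho H$. Continuity of entropy handles the first two terms, but the third is just the cross-entropy in disguise. As eigenvalues of $\mu$ tend to zero, both logarithms in $H$ diverge. A uniform bound on their difference gives no control of $\Tr\varrho_kH_k$ along a sequence $\rho_k\to\rho$.

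The missing idea is to use the spectral bound as a domination. The bounds $\eta w^n\mu\otimes I\le\sigma$ and $\varrho\le b^n\mu\otimes I$ give $\varrho\le K\sigma$ with $K=2^{nC}/\eta$. Now replace each eigenvalue $\lambda$ of $\sigma$ by $\max\{\lambda,t\}$ before taking the logarithm. This changes $\Tr\varrho\log_2\sigma$ by at most $KD_0t/(e\ln2)$, uniformly in $(\rho,\M)$, where $D_0=(ab)^n$. The reason is that a $\lambda$-eigenvector of $\sigma$ carries $\varrho$-weight at most $K\lambda$, and $\lambda\log_2(t/\lambda)\le t/(e\ln2)$ for $0<\lambda<t$. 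Hence the cross-entropy is a uniform limit of continuous functions, and $f(\rho,\M_\eta)$ is jointly continuous. This is exactly the paper's verification.

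Relatedly, your attainment sentence is valid only if the infimum runs over regularized alternatives. For general $\M\in\F_n$, $f(\cdot,\M)$ need not be upper semicontinuous: a singular $\rho$ may satisfy the support condition while nearby faithful inputs give $+\infty$. Once continuity of $f(\cdot,\M_\eta)$ is in hand, your own bound yields $0\le g_\eta-g\le\eta nC$, where $g_\eta=\inf_{\M}f(\cdot,\M_\eta)$ and $g=\inf_\M f(\cdot,\M)$. So $g$ is a uniform limit of continuous functions on the compact set $\Dens(A^n)$ and attains its maximum.
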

\begin{proof}
\emph{Step 1: Concavity and convexity.}
Purification and Schmidt compression give
$D_{\ch}(\N_n\Vert\M)=\sup_\rho f(\rho,\M)$.
For $\rho=\sum_i p_i\rho_i$, the vector
$\sum_i\sqrt{p_i}\ket i\ket{\chi_{\rho_i}}$ is another purification
of $\rho$. All purifications of a fixed input marginal are related by
isometries on a sufficiently enlarged reference. The flagged vector
may therefore be used to compute $f(\rho,\M)$. Dephasing its flag
produces a direct sum of the output pairs for $\rho_i$, with common
weights $p_i$. Dephasing is a reference channel. Data processing
and the direct-sum identity therefore give
$f(\rho,\M)\ge\sum_i p_i f(\rho_i,\M)$.
Convexity in $\M$ follows from joint convexity of relative entropy.

\emph{Step 2: Faithful regularization.}
For $0<\delta<1$ put
$\M_\delta=(1-\delta)\M+\delta\R_\omega^{\otimes n}$ and
$f_\delta(\rho,\M)=f(\rho,\M_\delta)$.
Denote the target output on $\ket{\chi_\rho}$ by $\varrho$ and
the $\M_\delta$ output by $\sigma$.
These states obey $\varrho\le K_\delta \sigma$ with
$K_\delta=2^{nC}/\delta$, by \eqref{eq:replacer-domination}.
This makes $f_\delta$ finite and jointly continuous, including at
singular input marginals. Here is a useful direct verification.
On pairs of states of a fixed dimension $D_0$ satisfying $\varrho\le K \sigma$,
replace $\log_2 \sigma$ by $\log_{2,t}\sigma$, defined by replacing each
eigenvalue $\lambda$ of $\sigma$ by $\max\{\lambda,t\}$ before taking
the logarithm, where $t>0$. Write $\lambda_j(\sigma)$ for the
eigenvalues of $\sigma$. Then
\begin{align*}
 0&\le\Tr \varrho(\log_{2,t}\sigma-\log_2 \sigma)\\
 &\le K\sum_{0<\lambda_j(\sigma)<t}
       \lambda_j(\sigma)\log_2\frac t{\lambda_j(\sigma)}
 \le\frac{KD_0t}{e\ln2}.
\end{align*}
Zero eigenvalues contribute nothing because $\varrho\le K\sigma$.
Thus the cross-entropy is a uniform limit of continuous functions;
$\Tr \varrho\log_2 \varrho$ is continuous as well. Canonical purification is
continuous in $\rho$.

\emph{Step 3: Passage to the unregularized problem.}
Minimax now applies to $f_\delta$. In the rest of this argument,
every infimum over $\M$ is over $\F_n$. If
$g(\rho)=\inf_\M f(\rho,\M)$ and
$g_\delta(\rho)=\inf_\M f_\delta(\rho,\M)$, then
\begin{equation}
 0\le g_\delta(\rho)-g(\rho)\le-\log_2(1-\delta).
 \label{eq:regularization-error}
\end{equation}
The lower bound holds because all $\M_\delta$ belong to $\F_n$;
the upper bound follows from
$\M_\delta\cpge(1-\delta)\M$ and operator monotonicity of logarithm.
The same bounds hold between
$\inf_\M\max_\rho f_\delta(\rho,\M)$ and $E_n$.
Letting $\delta\downarrow0$ proves \eqref{eq:entropy-minimax}.
Each $g_\delta$ is continuous by compactness and joint continuity;
\eqref{eq:regularization-error} implies uniform convergence to $g$.
Consequently the maximum is attained.

\emph{Step 4: Symmetry.}
For a fixed permutation-covariant $\M$ with a positive replacer
component, $f(\cdot,\M)$ is continuous by the same argument.
Both $\M$ and the tensor-power target $\N_n$ are
permutation-covariant. Conjugating the input marginal by $U_\pi^A$
therefore conjugates both output states by the same reference-output
unitary. Unitary invariance of relative entropy gives
$f(U_\pi^A\rho(U_\pi^A)^{\dagger},\M)=f(\rho,\M)$.
Its concavity shows that averaging a maximizing $\rho$ over
permutations preserves maximality.
\end{proof}

Permutation symmetry now gives a finite-block lower bound on
the testing exponent in terms of the optimized entropy.

\begin{lemma}\label{lem:large-error}
Let $\F$ be admissible and also satisfy \ref{ax:permutation}.
Set $h=ab$ and
$v_n=\binom{n+h^2-1}{h^2-1}$. For $0<\eps<1$,
\begin{equation}
 -\log_2\beta_{\eps,n}
 \ge\frac{E_n-(1-\eps)(nC+1)-\log_2 v_n}{\eps}-1 .
 \label{eq:large-error}
\end{equation}
\end{lemma}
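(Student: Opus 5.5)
The plan is to upper bound $E_n$ directly. The target inequality is equivalent to
\[
 E_n\le\eps(a_{\eps,n}+1)+(1-\eps)(nC+1)+\log_2 v_n ,
\]
and I would exhibit one free channel $\M'\in\F_n$ for which $D_{\ch}(\N_n\Vert\M')$ obeys this bound. The three ingredients are: a hardest alternative made permutation covariant, a symmetric optimal input, and a pinching argument in the permutation commutant. Together these reduce the quantum relative entropy to a commuting (classical) one, which is controlled by a likelihood-ratio threshold test.

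\emph{Step 1: a symmetric hardest alternative.} Take a hardest $\M\in\F_n$ from \cref{lem:testing-minimax}, so that $b_\eps(\N_n,\M)=\beta_{\eps,n}$. Average it over the conjugations $\M_\pi$ from \ref{ax:permutation}; the average $\overline\M$ is free by convexity and is permutation covariant. Because $\N_n$ is covariant, $b_\eps(\N_n,\M_\pi)=b_\eps(\N_n,\M)$. The quantity $b_\eps(\N_n,\cdot)$ is a minimum of linear functions, hence concave, so $b_\eps(\N_n,\overline\M)\ge\beta_{\eps,n}$. Now set $\M'=(\overline\M+\R_\omega^{\otimes n})/2\in\F_n$. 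It is covariant, has a positive replacer component, and satisfies $b_\eps(\N_n,\M')\ge\beta_{\eps,n}/2$ because $\M'\cpge\overline\M/2$.

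\emph{Step 2: symmetric input and pinching.} By \cref{lem:entropy-minimax}, $D_{\ch}(\N_n\Vert\M')$ is attained at a permutation-invariant input marginal $\rho$ with canonical purification $\ket{\chi_\rho}$. One checks that the purification can be taken with the reference arranged as $R=R_1\cdots R_n$, so that both output states $\varrho,\sigma$ on $(R_1B_1)\cdots(R_nB_n)$ commute with the simultaneous permutations $U^R_\pi\otimes U^B_\pi$. These outputs therefore lie in the permutation commutant on sites of dimension $h=ab$. That commutant has dimension $v_n$ (by the orbit count already used in \cref{lem:local-approximation}), so $\sigma$ has at most $v_n$ distinct eigenvalues. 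Let $\mathcal P$ be the pinching in the eigenbasis of $\sigma$. The pinching inequality gives
\[
 D(\varrho\Vert\sigma)\le D(\mathcal P(\varrho)\Vert\sigma)+\log_2 v_n ,
\]
and $\mathcal P(\varrho)$ commutes with $\sigma$.

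\emph{Step 3: the classical threshold bound.} Let $Q_*$ be the spectral projector of $\{\mathcal P(\varrho)>2^{\lambda}\sigma\}$ with $\lambda=a_{\eps,n}+1$. Since $Q_*$ commutes with $\sigma$, we have $\Tr Q_*\varrho=\Tr Q_*\mathcal P(\varrho)$, and also $\Tr Q_*\sigma<2^{-\lambda}=\beta_{\eps,n}/2\le b_\eps(\N_n,\M')$. Hence $Q_*$, used with input $\ket{\chi_\rho}$, cannot be feasible, which forces $\Tr Q_*\varrho<1-\eps$. The replacer component and \eqref{eq:replacer-domination} give $\varrho\le2^{nC+1}\sigma$, and this survives pinching, so the log-ratio is at most $nC+1$ everywhere and at most $\lambda$ off $Q_*$. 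Because $\lambda\le nC+1$ (using $\beta_{\eps,n}\ge(1-\eps)2^{-nC}$ from \cref{lem:bounds}, up to the harmless case where this fails, in which one simply replaces $\lambda$ by $nC+1$), the classical relative entropy is increasing in the weight of $Q_*$. This yields
\[
 D(\mathcal P(\varrho)\Vert\sigma)\le\eps\lambda+(1-\eps)(nC+1).
\]
Note that the negative part of $\log(\mathcal P(\varrho)/\sigma)$ only helps, since the bound is on $\sum p\log(p/q)$ split over regions and the region-wise upper bounds on the ratio suffice. Combining the three steps gives the displayed bound on $E_n$, which is the lemma after rearranging.

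I expect the main obstacle to be the symmetry bookkeeping in Step 2. The canonical purification of a permutation-invariant $\rho$ must make the joint reference–output state invariant under simultaneous site permutations on $R_iB_i$. This is what legitimises the dimension count $v_n$ with $h=ab$, and it needs care with the transpose convention $\ket{\chi_\rho}=\ket{\sqrt\rho}$ and the product-basis ordering. The concavity claim in Step 1 is the second point to check carefully.
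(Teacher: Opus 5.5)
Your proposal is correct and follows essentially the same route as the paper: a permutation-covariant hardest alternative mixed with the replacer, a permutation-invariant maximizing input from \cref{lem:entropy-minimax}, pinching with at most $v_n$ distinct eigenvalues, and a likelihood-ratio threshold test. The only difference is presentational. You threshold at $a_{\eps,n}+1$ and use infeasibility of the test to bound the tail weight, and you transfer the factor $2$ through $b_\eps(\N_n,\M')\ge\beta_{\eps,n}/2$. The paper instead picks the threshold from $E_n$ to build a feasible test and bounds the original alternative's acceptance through $\sigma_0\le2\sigma$; the two arguments are equivalent.
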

\begin{proof}
\emph{Step 1: A symmetric hardest alternative.}
As a function of $\M$, $b_\eps(\N_n,\M)$ is concave, since it
is the infimum of linear acceptance probabilities over a fixed
feasible tester set. Permuting the input and output of a tester
preserves its target acceptance because $\N_n$ is permutation-covariant.
This bijection of feasible testers makes $b_\eps(\N_n,\M)$
invariant under simultaneous input-output permutations of $\M$.
Average a hardest alternative from \cref{lem:testing-minimax}
over these permutations. Convexity and permutation closure keep
the average in $\F_n$, while concavity preserves maximality.
The resulting hardest alternative is permutation-covariant.

Fix any such $\M$ and set
$\M'=(\M+\R_\omega^{\otimes n})/2$.
Choose a permutation-invariant maximizing marginal for
$D_{\ch}(\N_n\Vert\M')$, using \cref{lem:entropy-minimax}.
Let $\rho,\sigma$ be the target and $\M'$ outputs on its canonical purification.
Then $D(\rho\Vert \sigma)\ge E_n$ and $\rho\le2^{nC+1}\sigma$.
Both outputs are invariant under pair permutations on
$(\mathbb C^{ab})^{\otimes n}$. To see this, the maximizing input
marginal and its square root commute with every input permutation.
Permutation matrices are real in the product basis, so the
canonical purification is fixed by the same permutation on its
reference and input factors. Covariance of both $\N_n$ and $\M'$
then transfers this invariance to their reference-output states.

\emph{Step 2: Polynomial spectral complexity.}
Every permutation-invariant operator on $(\mathbb C^h)^{\otimes n}$
lies in an algebra of dimension at most $v_n$: expand in tensor
products of the $h^2$ matrix units and group terms into permutation
orbits. Each orbit is determined by $h^2$ multiplicities summing to $n$.
A Hermitian element of this algebra has at most $v_n$ distinct
eigenvalues, since its spectral projectors, equivalently suitable
polynomials in that element, are linearly independent.

\emph{Step 3: Reduction to a classical likelihood ratio.}
Write the distinct-eigenvalue decomposition as
$\sigma=\sum_{j=1}^{v}\mu_jP_j$, with $v\le v_n$, and define
\[
 \mathcal P_\sigma(Z)=\sum_{j=1}^{v}P_jZP_j,\qquad
 \rho'=\mathcal P_\sigma(\rho).
\]
Then $\rho'$ commutes with $\sigma$, and pinching preserves traces
against every operator commuting with the $P_j$.
Pinching into $v$ orthogonal blocks is an average of $v$ unitary
conjugations, so its entropy increase is at most $\log_2 v$.
Indeed, the entropy of a mixture is at most the average entropy
plus the Shannon entropy of its mixing weights, while each unitary
conjugate has entropy $S(\rho)=-\Tr \rho\log_2\rho$.
Moreover $\Tr \rho'\log_2\sigma=\Tr \rho\log_2\sigma$, so
$D(\rho\Vert \sigma)-D(\rho'\Vert \sigma)=S(\rho')-S(\rho)\le\log_2v$.
Since pinching fixes $\sigma$,
\begin{equation*}
 D(\rho'\Vert \sigma)\ge D(\rho\Vert \sigma)-\log_2 v_n
 \ge E_n-\log_2 v_n. 
\end{equation*}
Also $\rho'\le2^L \sigma$, where $L=nC+1$.
Diagonalize the commuting states $\rho',\sigma$, writing their
diagonal entries as $\rho'_j$ and $\sigma_j$, respectively.
Under the probability distribution given by $\rho'$, the
log-likelihood variable $Z=\log_2(\rho'_j/\sigma_j)$ is bounded above by $L$
and has expectation at least $E_n-\log_2v_n$.
Zero-probability coordinates are omitted.

\emph{Step 4: A feasible threshold test.}
For $t<L$, write $p=\Pr(Z\ge t)$. The bound
$\mathbb E Z\le t+(L-t)p$ yields
\begin{equation}
 p\ge\frac{E_n-\log_2v_n-t}{L-t}.
 \label{eq:likelihood-threshold}
\end{equation}
Choose a common eigenbasis of $\rho'$ and $\sigma$ and let $Q_t$ project
onto the coordinates with $\rho'_j/\sigma_j\ge2^t$. Coordinates outside
the support of $\rho'$ may be excluded. Then
\[
 \begin{aligned}
 \Tr Q_t\rho&=\Tr Q_t\rho'=p,\\
 \Tr Q_t\sigma&=\sum_{Z_j\ge t}\sigma_j
       \le2^{-t}\sum_{Z_j\ge t}\rho'_j\le2^{-t}.
 \end{aligned}
\]
The first equality follows because $Q_t$ is fixed by the pinching. The output $\sigma_0$ of the original $\M$ satisfies
$\sigma_0\le2\sigma$, so its acceptance is at most $2^{1-t}$.
Choose
\[
 t=\frac{E_n-\log_2v_n-(1-\eps)L}{\eps}.
\]
One has $t<L$ by \cref{lem:bounds}, and substituting this $t$ in
\eqref{eq:likelihood-threshold} gives $p\ge1-\eps$.
The resulting bound on $b_\eps(\N_n,\M)$ is uniform in the
permutation-covariant alternative. Minimax and the hardest-alternative
symmetry prove \eqref{eq:large-error}.
\end{proof}

\subsection{Convergence by near subadditivity}\label{app:alternative-limits}
Throughout this subsection, assume \cref{def:axioms} and both
\ref{ax:permutation} and \ref{ax:marginal}.
The quantitative completion and error-comparison results of
\cref{sec:quantitative}, together with \cref{lem:large-error},
give the following alternative limit argument.

The next elementary sequence estimate converts the completion
overhead into convergence of the normalized testing quantity.
Sublinearity makes a single block's loss negligible per use, but
the concatenation argument must also control the accumulated loss
over many scales. What we use is the summability of the dyadic
errors below. An arbitrary $o(n)$ remainder is not a substitute
for this estimate.

\begin{lemma}\label{lem:near-subadditive}
Suppose $0\le a_n\le C_0n+C_1$ and
\begin{equation}
 a_{n+m}\le a_n+a_m+F(n+m)
 \label{eq:near-subadditivity}
\end{equation}
for all positive $n,m$, where $F$ is nonnegative, nondecreasing, and
$F(s)=O(s^{2/3}\log_2(s+1))$.
Then $a_n/n$ has a finite ordinary limit.
\end{lemma}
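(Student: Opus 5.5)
The plan is a dyadic-scale refinement of Fekete's lemma. Put $L=\liminf_n a_n/n$, which is finite because $0\le a_n/n\le C_0+C_1$. It suffices to show $\limsup_n a_n/n\le L$. Iterating \eqref{eq:near-subadditivity} naively over $q$ copies of a block of length $k$ costs about $qF(qk)$, which is not $o(qk)$. So copies must be merged in a balanced binary tree, where the loss at each level is charged at the size of the merged block.

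\emph{Step 1 (doubling).} Applying \eqref{eq:near-subadditivity} with $n=m$ gives $a_{2s}\le2a_s+F(2s)$. By induction, for every $k$ and $j\ge0$,
\[
 \frac{a_{2^jk}}{2^jk}\le\frac{a_k}k+\sum_{i=1}^{j}\frac{F(2^ik)}{2^ik}.
\]
Because $F(s)/s=O(s^{-1/3}\log_2(s+1))$, the dyadic sum converges. Its value is bounded by $\theta(k):=\sum_{i\ge1}F(2^ik)/(2^ik)=O(k^{-1/3}\log_2(k+1))$, which tends to $0$ as $k\to\infty$. This summability is exactly the point flagged before the lemma.

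\emph{Step 2 (arbitrary $n$).} Fix $\eta>0$ and choose $k$ with $a_k/k\le L+\eta$ and $\theta(k)\le\eta$. For large $n$, write the binary expansion of $\lfloor n/k\rfloor$, so that $n=\sum_{j\in J}2^jk+\ell$ with $0\le\ell<k$ and $|J|\le\log_2(n/k)+1$. Merge the at most $|J|+1$ pieces one at a time, starting with the smallest. Each merge costs at most $F(n)$, because $F$ is nondecreasing. This gives
\[
 a_n\le\sum_{j\in J}a_{2^jk}+a_\ell+(|J|+1)F(n)
 \le (L+2\eta)n+C_0k+C_1+(\log_2 n+2)F(n).
\]
For $\ell=0$ the term $a_\ell$ is simply dropped. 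Since $F(n)\log_2 n=O(n^{2/3}\log_2^2(n+1))=o(n)$, dividing by $n$ gives $\limsup a_n/n\le L+2\eta$. Letting $\eta\downarrow0$ shows that the limit exists and equals $L$, which is finite.

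The main obstacle is Step 2: making sure that the number of merges is only logarithmic and that each merge is charged at most $F(n)$. Step 1's bound holds uniformly in $j$ only because of the geometric decay of $F(2^ik)/2^ik$, and that would fail for a general $o(n)$ remainder. Nonnegativity of $a_n$ is used only to keep $L$ finite. Monotonicity of $F$ is used only to bound each cross-merge term by $F(n)$.
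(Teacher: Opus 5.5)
Your proof is correct, but it organizes the merging differently from the paper. The paper fixes $k$ and merges $m$ equal blocks of length $k$, for arbitrary $m$, in balanced rounds, carrying an unpaired block forward. It counts at most $2m/2^j$ merges at round $j$, each charged $F(2^jk)$, which gives $a_{mk}/(mk)\le a_k/k+2\sum_{j\ge1}F(2^jk)/(2^jk)$ for every $m$. A single extra merge then absorbs the remainder $r<k$, so outside the dyadic series only $F(n)=o(n)$ is used.

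You restrict the tree argument to exact doubling. This removes the carrying bookkeeping and gives constant $1$ instead of $2$ in front of the dyadic sum, but only along the lengths $2^jk$. An arbitrary $n$ then needs up to about $\log_2(n/k)+1$ top-level merges from the binary expansion of $\lfloor n/k\rfloor$. Each is charged $F(n)$ by monotonicity, so you need the slightly stronger fact $F(n)\log_2 n=o(n)$. That follows at once from $F(s)=O(s^{2/3}\log_2(s+1))$, so nothing is lost here.

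The paper's version asks marginally less of $F$; yours is simpler to check. Both rest on the same key point: $F(2^ik)/(2^ik)$ is summable in $i$, with sum tending to zero as $k\to\infty$.
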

\begin{proof}
Fix $k\ge1$. Merge $m$ blocks of length $k$ in balanced rounds,
carrying an unpaired block to the next round.
At round $j$, each merged block has size at most $2^jk$.
For every nonempty round, there are at most $2m/2^j$ merges.
Iterating \eqref{eq:near-subadditivity} therefore gives
\begin{equation*}
 \frac{a_{mk}}{mk}
 \le\frac{a_k}{k}
 +2\sum_{j\ge1}\frac{F(2^jk)}{2^jk}.
\end{equation*}
The infinite sum is an upper bound on the finite list of merge
costs, all of which are nonnegative.
Its terms are bounded by a constant times
$k^{-1/3}2^{-j/3}[\log_2(k+1)+j]$.
The series converges, and its sum is
$O(k^{-1/3}\log_2(k+1))$, hence tends to zero as $k\to\infty$.

For general $n=mk+r$, $0\le r<k$, an additional merge when $r>0$
yields $a_n\le a_{mk}+a_r+F(n)$.
For fixed $k$ the remainder contribution divided by $n$ vanishes.
Since $mk/n\to1$ and $a_{mk}/(mk)$ is uniformly bounded, we obtain
\[
 \limsup_n\frac{a_n}{n}
 \le\frac{a_k}{k}
 +2\sum_{j\ge1}\frac{F(2^jk)}{2^jk}.
\]
Choose a sequence of integers $k$ along which $a_k/k$ tends to
its lower limit. The series on the right tends to zero along
this sequence. The upper limit of $a_n/n$ is therefore no larger
than its lower limit, which proves convergence.
\end{proof}

\begin{proof}[Alternative proof of \cref{thm:main} under all five conditions]
\emph{Step 1: Existence of the operational limit.}
Set $\eps_0=1/2$ and let $a_n=-\log_2\beta_{\eps_0,n}$.
The uniform bounds imply $0\le a_n\le nC+1$.
Set $\eta_n=\min\{1/16,n^{-2}\}$. Since
$\log_2((n+1)/\eta_n)=O(\log_2(n+1))$, applying
\cref{thm:completion} at tolerance $\eps_0$ and accuracy $\eta_n$
gives channels $\mathcal S_n\in\F_n$ and $\mathcal L_n$ satisfying
\[
 \begin{gathered}
 \mathcal S_n\cpge\beta_{\eps_0,n}2^{-f(n)}\mathcal L_n,
 \qquad \dnorm{\mathcal L_n-\N_n}\le\eta_n,\\
 f(n)=K'_{\eps_0}n^{2/3}\log_2(n+1)=o(n).
 \end{gathered}
\]
Tensor closure and CP order give
\[
 \mathcal S_n\otimes\mathcal S_m\cpge
 \beta_{\eps_0,n}\beta_{\eps_0,m}
 2^{-f(n)-f(m)}(\mathcal L_n\otimes\mathcal L_m).
\]
The identity
\[
 \mathcal L_n\otimes\mathcal L_m-\N_n\otimes\N_m
 =(\mathcal L_n-\N_n)\otimes\mathcal L_m
   +\N_n\otimes(\mathcal L_m-\N_m)
\]
and the fact that $\mathcal L_m$ and $\N_n$ are channels give
diamond error at most
$\eta_n+\eta_m\le1/8$ from $\N_{n+m}$.
Thus every $(n+m)$-use tester of target acceptance at least $1/2$
accepts $\mathcal L_n\otimes\mathcal L_m$ with probability
at least $1/4$. This statement includes inputs entangled across
the two blocks, because the bound is in diamond norm.
It follows that
\begin{equation*}
 \begin{aligned}
 a_{n+m}&\le a_n+a_m+f(n)+f(m)+2\\
        &\le a_n+a_m+F(n+m),
 \end{aligned}
\end{equation*}
where $F(s)=2f(s)+2$.
The hypotheses of \cref{lem:near-subadditive} hold. Hence $a_n/n$
has a finite limit $R$, and \cref{lem:error-independence} gives
\begin{equation}
 -\frac1n\log_2\beta_{\eps,n}\longrightarrow R
 \quad\text{for every fixed }0<\eps<1.
 \label{eq:operational-limit}
\end{equation}

\emph{Step 2: Identification of the entropy limit.}
The finite-block entropy bounds now identify the common limit.
The weak converse \eqref{eq:weak-converse} gives
$\liminf_n E_n/n\ge(1-\eps)R$ for every fixed $\eps$.
Letting $\eps\downarrow0$ after this limit yields
$\liminf_n E_n/n\ge R$.
On the other hand, \cref{lem:large-error} rearranges to
\begin{equation*}
 E_n\le \eps(-\log_2\beta_{\eps,n})
 +(1-\eps)(nC+1)+\log_2v_n+\eps.
\end{equation*}
Since $\log_2v_n=O(\log n)$, \eqref{eq:operational-limit}
implies $\limsup_n E_n/n\le\eps R+(1-\eps)C$.
Now let $\eps\uparrow1$ to obtain $\limsup_n E_n/n\le R$.
Thus the ordinary entropy limit exists and equals $R$.
The bound $0\le R\le C$ follows from \cref{lem:bounds}.
\end{proof}

\section*{Acknowledgment}
OpenAI Codex was used in proof development, mathematical checking,
literature work, and preparation of the source manuscript, and in
expanding, reorganizing, and typesetting the present exposition.


\end{document}